\documentclass[10pt]{article}

\usepackage{amscd}
\usepackage{wrapfig}
\usepackage{subfig}
\usepackage{float}
\usepackage{color,soul}
\usepackage{listings}
\usepackage{caption}
\usepackage{bbm}
\usepackage{pict2e}
\usepackage{diagbox}

\usepackage{amsmath,amssymb,amsfonts,amsthm}
\usepackage{moreverb,rotating,graphics}
\usepackage[T1]{fontenc}
\usepackage{epsfig}
\usepackage[francais,english]{babel}
\usepackage{dsfont}

\newtheorem{theorem}{Theorem}

\newtheorem{remark}[theorem]{Remark}
\newtheorem{lemma}[theorem]{Lemma}

\def\N{{\mathbb N}}
\def\Z{{\mathbb Z}}
\def\R{{\mathbb R}}
\def\C{{\mathbb C}}

\def\1{{\mathds{1}}}

\newcommand \dps{\displaystyle }

\hoffset=-1.5truecm
\voffset=-2truecm
\textwidth=15truecm
\textheight=24truecm

\newcommand{\dd}{{\mathrm{d}}}
\newcommand{\Bthree}{\mathcal{B}^{(3)}}
\newcommand{\Bfour}{\mathcal{B}^{(4)}}

\newcommand{\Bfive}{\mathcal{B}^{(5)}}

\newcommand{\Ve}{V_{\epsilon}}
\newcommand{\pe}{\psi_{\epsilon}}
\newcommand{\me}{m_\epsilon}
\newcommand{\Te}{\mathcal{T}_\epsilon}

\newcommand{\omitit}[1]{}

\newcommand{\half}{{\textstyle{\frac{1}{2}}}}

\newcommand{\set}[2]{\left\lbrace #1 \, : \, #2 \right\rbrace}
\newcommand{\bra}[1]{\langle\kern-.15em #1 |}
\newcommand{\ket}[1]{| #1 \kern+.05em\rangle}
\newcommand{\inner}[2]{\big\langle #1,#2 \big\rangle}
\newcommand{\biginner}[2]{\big\langle #1,#2 \big\rangle}

\newcommand{\norm}[1]{\left\|{#1}\right\|}
\newcommand{\biggnorm}[1]{\bigg\|{#1}\bigg\|}

\newcommand{\litlo}[1]{o\left(#1\right)}

\newcommand{\rr}{{\bf r}}

\newcommand{\Reyuls}{\mathbb{R}}

\newcommand{\Rtre}{{\Reyuls^3}}

\newcommand{\Rsix}{{\Reyuls^3 \times \Reyuls^3}}

\newcommand{\gsop}{H_0}

\newcommand{\se}{{\cal S}}
\newcommand{\tse}{{\cal T}_\epsilon}

\newcommand{\ao}{{\cal C}}
\newcommand{\te}{{\tau_\epsilon}}

\newcommand{\id}{{\cal I}}

\newcommand{\eps}{{\epsilon}}

\def\be{{\mathbf e}}
\def\br{{\mathbf r}}

\def\bR{{\mathbf r}}

\newcommand{\beec}{{B_{\kern-.10em K}^{\kern+.05em c}}}

\title{Van der Waals interactions between~two~hydrogen~atoms: \\
The next orders}
\date{\today}
\author{Eric Canc\`es\footnotemark[1]  \and Rafa\"el Coyaud\footnotemark[1] \and
L. Ridgway Scott\footnotemark[3]
}

\begin{document}

\maketitle

\begin{abstract}
We extend a method (E. Canc\`es and L.R. Scott, SIAM J. Math. Anal., 50, 2018, 381--410) to compute more terms in the asymptotic expansion of the van der Waals attraction between two hydrogen atoms. These terms are obtained by solving a set of modified Slater--Kirkwood partial differential equations. The accuracy of the method is demonstrated by numerical simulations and comparison with other methods from the literature. It is also shown that the scattering states of the hydrogen atom, that are the states associated with the continuous spectrum of the Hamiltonian, have a major contribution to the C$_6$ coefficient of the van der Waals expansion.
\end{abstract}

\footnotetext[1]{CERMICS, Ecole des Ponts, and Inria Paris, 6 \& 8 avenue Blaise Pascal,
77455 Marne-la-Vall\'ee, France}
\footnotetext[3]{Emeritus, University of Chicago, Chicago, Illinois\ \ 60637, USA}


\section{Introduction}

Van der Waals interactions, first introduced in 1873 to reproduce experimental results on simple gases~\cite{van2004continuity}, have proved to also play an essential role in complex systems in the condensed phase, such as biological molecules~\cite{baldwin2007energetics, roth1996van} and 2D materials~\cite{geim2013van}. The quantum mechanical origin of the dispersive van der Waals interaction has been elucidated by London in the 1930s~\cite{london1937general}. The rigorous mathematical foundations of the van der Waals interaction have been investigated in the pioneering work by Lieb and Thiring~\cite{lieb1986universal}, and later by many authors (see in particular~\cite{ref:AnapolitanosvanderWaals, anapolitanos2020van, anapolitanos2019differentiability, anapolitanos2017long, barbaroux2019van, ref:GeroStudentvanderWaals} and references therein).

In a recent paper~\cite{lrsBIBhi}, a new numerical approach was introduced to compute
the leading order term $-C_6 R^{-6}$ of the
van der Waals interaction between hydrogen atoms separated by a distance $R$.
Here we extend that approach to compute higher order terms $-C_n R^{-n}$, $n>6$.
The coefficients $C_n$ have been computed by various methods.
On the one hand, both~\cite{pauling1935van} and~\cite{Choy2000} apparently failed
to include key components in the computation of $C_{10}$, computing only
one component out of three that we derive here.
On the other hand, our result differs by approximately 200\% and agrees with~\cite{ovsiannikov2005regular}.
One of the objects of this paper is to clarify this discrepancy.

The computation of the expansion coefficients can also be derived through
techniques using polarizabilities~\cite{ovsiannikov2005regular}
which is exact but might involve slightly different numerical computations
than the perturbation method used here. In order to get the right values, one has to use a high enough order of perturbation theory. Computations using up to the second order~\cite{alves2010van, cebim2009high, thakkar1988higher} fail for $C_{12}$, $C_{14}$ and $C_{16}$ (with errors of approximately  1\%, 5\%, and 10\%) for which computations up to the fourth order~\cite{mitroy2005higher} are needed. The third order~\cite{yan1999third} is sufficient for $C_{11}$, $C_{13}$ and $C_{15}$.
Moreover, the polarizabilities method can be derived also for other atoms than hydrogen
as well as for three-body interaction~\cite{cebim2009high}.
A comparison of the numerical results is explored in Section~\ref{sect:comparisonOfResults}.

One can also compute the expansion coefficients using basis states
as in~\cite{forestell2015importance}.
However, this leads to a substantial error even for $C_6$.
The discrepancy observed between the basis states method
and the other methods can be interpreted as
the missing contribution to the energy from the continuous spectrum.

The perturbation method of~\cite{ref:SlaKervdWgas} is remarkable because,
in the case of two hydrogen atoms, the problem splits, for any of the $C_n$ terms,
exactly into terms constituted of an angular factor
and a function of two one-dimensional variables (the underlying problem is six-dimensional).
The first term in this expansion has been examined in \cite{lrsBIBhi} and gave
a value of $C_6$ agreeing with~\cite{ovsiannikov2005regular}.
This article extends this analysis and allows computation of all $C_n$.
The linearity and the nature of the angular parts allows treatment of these
problems separately in a way analogous to the first term of the expansion.
Although the partial differential equations (PDE) defining the functions of these two
variables are not solvable in closed form, they are nevertheless easily solved by
numerical techniques.

In Section~\ref{sec:H2}, we present an extended and modified version of Slater and Kirkwood's
derivation, in order to manipulate more suitable family of PDEs for theoretical
analysis and numerical simulation. These modified Slater--Kirkwood PDEs are well posed at all orders and,
when their unique solutions are multiplied by their respective angular factor,
the resulting function, after summation of the terms, solves the triangular systems of six-dimensional PDEs originating from the Rayleigh--Schr\"odinger expansion.
We finally check that the so-obtained perturbation series are asymptotic expansions of the ground state energy and wave function (after applying some ``almost unitary'' transform) of the hydrogen molecule in the dissociation limit.
In Section~\ref{sec:Cn}, we use a Laguerre approximation~\cite[Section 7.3]{lrsBIBih} to compute coefficients
up to $C_{19}$, given that $C_6$ has been computed in~\cite{lrsBIBhi}.
Our approach also allows us to evaluate the respective contributions of the bound and scattering
states of the Hamiltonian of the hydrogen atom to the $C_6$ coefficient of the van der Waals interaction.
Numerical simulations show that the terms in the sum-over-states expansion coupling two bound states only contribute to about 60\%.
The mathematical proofs are gathered in Section~\ref{sec:proofs}. Lastly, some useful results on the multipolar expansion of the hydrogen molecule electrostatic potential in the dissociation limit and on the Wigner $(2n+1)$ rule used in the computations are provided in the Appendix.

\section{The hydrogen molecule in the dissociation limit}
\label{sec:H2}

As usual in atomic and molecular physics, we work in atomic units: $\hbar=1$ (reduced Planck constant), $e=1$ (elementary charge), $m_e=1$ (mass of the electron), $\epsilon_0=1/(4\pi)$ (dielectric permittivity of the vacuum). The length unit is the bohr (about $0.529$ {\AA}ngstroms) 
and the energy unit is the hartree (about $4.36\times 10^{-18}$ Joules).

We study the Born-Oppenheimer approximation of a system of two hydrogen atoms,
consisting of two classical point-like nuclei of charge $1$ and two quantum
electrons of mass $1$ and charge $-1$. Let $\bR_1$ and $\bR_2$ be the positions
in $\R^3$ of the two electrons, in a cartesian frame whose origin is the center of mass of the nuclei.
We denote by $\be$ the unit vector pointing in the direction from one hydrogen atom to the other,
and by $R$ the distance between the two nuclei. We introduce the parameter $\epsilon=R^{-1}$ and derive expansions in $\epsilon$ of the ground state energy and wave function. Note that in~\cite{lrsBIBhi}, we use instead $\epsilon=R^{-1/3}$. The latter is well-suited to compute the lower-order coefficient $C_6$, but the change of variable $\epsilon=R^{-1}$ is more convenient to compute all the terms of the expansion.

Since the ground state of the hydrogen molecule
is a singlet spin state~\cite{helgaker2013molecular}, its wave function can be written as
\begin{equation}
\label{eqn:spings}
\psi_\epsilon(\bR_1,\bR_2) \frac{\ket{\uparrow\downarrow}-\ket{\downarrow\uparrow}}{\sqrt{2}},
\end{equation}
where $\psi_\epsilon > 0$ is the $L^2$-normalized ground state of the spin-less six-dimensional Schr\"odinger equation
\begin{equation}
\label{eqn:sixdscheq}
H_\eps \psi_\eps=\lambda_\eps \psi_\eps, \quad
 \| \psi_\eps \|_{L^2(\R^3 \times \R^3)}=1,
\end{equation}
where for $\eps > 0$, the Hamiltonian $H_\eps$ is the self-adjoint operator on $L^2(\R^3 \times \R^3)$ with domain $H^2(\R^3 \times \R^3)$ defined by
$$
H_\eps = -\frac 12 \Delta_{\bR_1}-\frac 12 \Delta_{\bR_2} -\frac{1}{|{\bR_1}-(2\eps)^{-1} \be|}- \frac{1}{|{\bR_2}-(2\eps)^{-1}  \be|}
-\frac{1}{|{\bR_1}+(2\eps)^{-1}  \be|}- \frac{1}{|{\bR_2}+(2\eps)^{-1}  \be|}
+\frac{1}{|{\bR_1}-{\bR_2}|}+\eps,
$$
where $\Delta_{\bR_k}$ is the Laplace operator with respect to the variables $\bR_k \in \R^3$. The first two terms of $H_\eps$ model the kinetic energy of the electrons, the next four terms the electrostatic attraction between nuclei and electrons, and the last two terms the electrostatic repulsion between, respectively, electrons and nuclei. The ground state of $H_\eps$ is symmetric ($\psi_\eps(\bR_1,\bR_2)=\psi_\eps(\bR_2,\bR_1)$) so that the wave function defined by \eqref{eqn:spings} does satisfy the Pauli principle (the anti-symmetry is entirely carried by the spin component). It is well-known \cite{lrsBIBhi} that
$$
\lambda_\eps = -1 - C_6 \eps^6 + \litlo{\eps^6}.
$$
The computation of $\lambda_\eps$ (and $\psi_\eps$) to higher order by a modified version of the Slater--Kirkwood approach, is the subject of this article.

\subsection{Perturbation expansion}

The first step is to make a change of coordinates. Introducing the translation operator
$$
\te f(\bR_1,\bR_2)
=f(\bR_1+(2\eps)^{-1} \be, \bR_2- (2\eps)^{-1} \be)
=f(\bR_1+\half{R}\be, \bR_2- \half{R} \be), \quad R=\epsilon^{-1},
$$
the swapping operator $\ao$ and the symmetrization operator $\se$ defined by
$$
\ao \phi(\bR_1,\bR_2) =\phi(\bR_2,\bR_1), \quad \quad
\se =\frac{1}{\sqrt{2}}(\id+\ao),
$$
where $\id$ denotes the identity operator,
as well as the ``asymptotically unitary'' operator
\begin{equation} \label{eqn:simmoperaursop}
\tse=\se\te.
\end{equation}
It is shown in~\cite{lrsBIBhi} that
\begin{equation} \label{eqn:pointoftransop}
H_\epsilon\tse=\tse\big(\gsop + V_\epsilon\big),
\end{equation}
where $\gsop$ is the reference non-interacting Hamiltonian
$$
\gsop=-\frac 12 \Delta_{\br_1}  - \frac{1}{|\br_1|}
      - \frac 12 \Delta_{\br_2} - \frac{1}{|\br_2|},
$$
and $V_\epsilon$ the correlation potential
\begin{equation}
\label{eqn:hydropot}
V_\epsilon({\br_1},{\br_2})=- \frac{1}{|{\br_1}-\epsilon^{-1}\be|}- \frac{1}{|{\br_2}+\epsilon^{-1}\be|}+\frac{1}{|{\br_1}-{\br_2}-\epsilon^{-1}\be|}+\epsilon.
\end{equation}
The linear operator $\tse$ is  ``asymptotically unitary''  in the sense that for all $f,g \in L^2(\R^3 \times \R^3)$,
$$
\inner{\tse f}{\tse g}= \inner{f}{g} + \inner{\ao f}{\tau_{\eps/2} g} \mathop{\longrightarrow}_{\eps \to 0}  \inner{f}{g}.
$$
It follows from \eqref{eqn:pointoftransop} that if $(\lambda,\phi)$ is a normalized eigenstate of $H_0+V_\eps$, that is $(\lambda,\phi)$ satisfies
$$
(H_0+V_\eps) \phi = \lambda \phi, \quad \|\phi\|_{L^2(\R^3 \times \R^3)} = 1,
$$
then
$$
H_\eps \tse\phi = \lambda \tse\phi.
$$
In addition, we know from Zhislin's theorem~\cite{lrsBIBhi,zhislin1960discussion} that both $H_\eps$ and $H_0+V_\eps$ have ground states, that their ground state eigenvalues are non-degenerate, and that their ground state wave functions are (up to replacing them by their opposites) positive everywhere in $\R^{3 \times 3}$. Since $\tse$ preserves positivity, we infer that $H_\eps$ and $H_0+V_\eps$ share the same ground state eigenvalue $\lambda_\eps$ and that if $\phi_\eps$ is the normalized positive ground state wave function of $H_0+V_\eps$, then $\psi_\eps := \tse\phi_\eps/\|\tse\phi_\eps\|_{L^2(\R^3 \times \R^3)}$ is the normalized positive ground state wave function of $H_\eps$.

The next step is to construct for $\eps > 0$ small enough the ground state $(\lambda_\eps,\phi_\eps)$ of $H_0+V_\eps$ by the Rayleigh--Schr\"odinger perturbation method from the explicit ground state
\begin{equation}
\label{eqn:limsighr}
 \lambda_0=-1, \qquad  \phi_0({\br_1},{\br_2})= \pi^{-1} e^{-(|{\br_1}|+|{\br_2}|)},
\end{equation}
of $H_0$. Using a multipolar expansion, we have
\begin{equation}\label{eqn:exp_Veps}
\Ve(\mathbf{r}_1, \mathbf{r}_2) = \sum_{n=3}^{+\infty} \epsilon^{n}
       \mathcal{B}^{(n)}(\mathbf{r}_1,\mathbf{r}_2),
\end{equation}
where homogeneous polynomial functions $\mathcal{B}^{(n)}$, $n \ge 3$ are specified below (see equation \eqref{eqnBic}), the convergence of the series being uniform on every compact subset of $\R^3 \times \R^3$. Assuming that $\lambda_\eps$ and $\phi_\eps$ can be Taylor expanded as
\begin{equation}\label{eqn:exp_lambda_phi}
\lambda_\eps = \lambda_0 - \sum_{n=1}^{+\infty} C_n \eps^n \quad \mbox{and} \quad \phi_\eps = \sum_{n=0}^{+\infty} \eps^n \phi_n, \qquad \mbox{(formal expansions)}
\end{equation}
(we use the standard historical notation $-C_n$ instead of $\lambda_n$ for the coefficients of the eigenvalue $\lambda_\eps$)
inserting these expansions  in the equations $(H_0+V_\eps) \phi_\eps = \lambda_\eps \phi_\eps$,  $\|\phi_\eps\|_{L^2(\R^3 \times \R^3)} = 1$, and identifying the terms of order $n$ in $\eps$, we obtain a triangular system of linear elliptic equations (Rayleigh--Schr\"odinger equations). The well-posedness of this system is given by the following lemma, whose proof is postponed until Section~\ref{sec:proof_lem1}.

\begin{lemma}\label{lem:PsiNEqnAnyOrder} The triangular system
  \begin{align}
\forall n \ge 1, \qquad     (H_0 - \lambda_0) \phi_n &= - \sum_{k = 3}^n \mathcal{B}^{(k)} \phi_{n-k}
      - \sum_{k = 1}^n C_k \phi_{n-k}, \label{eqn:PsiNEqnAnyOrderH0}\\
    \biginner{\phi_0}{\phi_n} &= - \frac 12 \sum_{k = 1}^{n-1}
      \biginner{\phi_k}{\phi_{n-k}},\label{eqn:PsiNEqnAnyOrderPsi0}
  \end{align}
where we use the convention $\sum_{k = m}^{n} \dots=0$ if $m > n$, has a unique solution $((C_n,\phi_n))_{n \in \N^\ast}$ in $(\R \times H^2(\R^3 \times \R^3))^{\N^*}$.
In particular, we have $(C_1,\phi_1)=(C_2,\phi_2)=0$ and $C_3=C_4=C_5=0$. In addition, the functions $\phi_n$ are real-valued.
\end{lemma}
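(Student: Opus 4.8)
The plan is to solve the triangular system \eqref{eqn:PsiNEqnAnyOrderH0}--\eqref{eqn:PsiNEqnAnyOrderPsi0} by strong induction on $n$, using at each step the Fredholm alternative for the operator $H_0-\lambda_0$ acting on $L^2(\R^3\times\R^3)$. The key structural fact is that $\lambda_0=-1$ is a simple isolated eigenvalue of $H_0$ (it is the ground state of two decoupled hydrogen atoms, each contributing its own ground state $-\frac12$), with one-dimensional eigenspace spanned by $\phi_0$; hence $H_0-\lambda_0$ restricted to $\{\phi_0\}^\perp$ is invertible with bounded inverse, and its range is exactly $\{\phi_0\}^\perp$. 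So, fixing $n\ge1$ and assuming $(C_k,\phi_k)$ have been constructed for $k<n$ with $\phi_k\in H^2$, I would first determine $C_n$ by the solvability (orthogonality) condition: taking the $L^2$ inner product of \eqref{eqn:PsiNEqnAnyOrderH0} with $\phi_0$ and using $\langle(H_0-\lambda_0)\phi_n,\phi_0\rangle=0$ gives
$$
C_n\,\langle\phi_0,\phi_0\rangle = -\sum_{k=3}^n\langle \mathcal{B}^{(k)}\phi_{n-k},\phi_0\rangle - \sum_{k=1}^{n-1}C_k\langle\phi_{n-k},\phi_0\rangle,
$$
which, since $\|\phi_0\|=1$, determines $C_n\in\R$ uniquely in terms of lower-order data. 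One must check the right-hand side of \eqref{eqn:PsiNEqnAnyOrderH0} is in $L^2$: the $\phi_{n-k}$ lie in $H^2\subset L^\infty$ (in $\R^3\times\R^3$ one needs a little care, but $H^2(\R^6)\hookrightarrow$ appropriate $L^p$ spaces, and more to the point the $\mathcal{B}^{(k)}$ are polynomials so one really needs the $\phi_j$ to have polynomial $\times$ $L^2$ decay; this is where the exponential decay of $\phi_0$ and Agmon-type estimates on the $\phi_j$ enter).

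With $C_n$ fixed, the right-hand side of \eqref{eqn:PsiNEqnAnyOrderH0} is orthogonal to $\phi_0$, so there is a unique solution $\phi_n^\perp\in\{\phi_0\}^\perp\cap H^2$; elliptic regularity for $H_0$ (a Schrödinger operator with Coulomb potentials, which are $\Delta$-bounded with relative bound $0$) upgrades the a priori $L^2$ solution to $H^2$. The general solution of \eqref{eqn:PsiNEqnAnyOrderH0} is then $\phi_n=\phi_n^\perp+c_n\phi_0$ for a scalar $c_n$, and \eqref{eqn:PsiNEqnAnyOrderPsi0} fixes $c_n=\langle\phi_0,\phi_n\rangle$ uniquely in terms of lower-order data (the right-hand side of \eqref{eqn:PsiNEqnAnyOrderPsi0} involves only $\phi_1,\dots,\phi_{n-1}$). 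Thus $(C_n,\phi_n)$ exists and is unique, closing the induction; reality of $\phi_n$ follows because all data ($\phi_0$, the $\mathcal{B}^{(k)}$, the $C_k$) are real and $H_0-\lambda_0$ maps real functions to real functions, so the unique solution is real.

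Finally, the vanishing of the low-order terms is a direct computation fed into the induction. Since the sum $\sum_{k=3}^n$ is empty for $n=1,2$, equations \eqref{eqn:PsiNEqnAnyOrderH0}--\eqref{eqn:PsiNEqnAnyOrderPsi0} for $n=1$ read $(H_0-\lambda_0)\phi_1=-C_1\phi_0$ and $\langle\phi_0,\phi_1\rangle=0$; the orthogonality condition forces $C_1=0$, hence $\phi_1\in\{\phi_0\}^\perp$ solves $(H_0-\lambda_0)\phi_1=0$, so $\phi_1=0$; likewise $n=2$ gives $C_2=0$, $\phi_2=0$. For $n=3,4,5$ the only surviving term on the right of \eqref{eqn:PsiNEqnAnyOrderH0} is $-\mathcal{B}^{(n)}\phi_0$ (plus $-C_n\phi_0$), so $C_n=-\langle\mathcal{B}^{(n)}\phi_0,\phi_0\rangle$; here I would invoke the parity/angular structure of the multipole terms $\mathcal{B}^{(n)}$ (odd homogeneous polynomials in the electron coordinates, cf. \eqref{eqnBic} in the Appendix) against the even, spherically symmetric $|\phi_0|^2$ to get $\langle\mathcal{B}^{(n)}\phi_0,\phi_0\rangle=0$ for $n=3,4,5$, whence $C_3=C_4=C_5=0$. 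I expect the main obstacle to be not the abstract Fredholm argument but the $L^2$/regularity bookkeeping: verifying that each $\mathcal{B}^{(k)}\phi_{n-k}$ is genuinely in $L^2$ requires knowing that every $\phi_j$ decays fast enough to be multiplied by arbitrary polynomials, which is a decay estimate one must propagate through the induction (e.g.\ by combining the explicit decay of $\phi_0$, the polynomial growth of $\mathcal{B}^{(k)}$, and exponential-decay estimates for resolvents of $H_0$ below its essential spectrum).
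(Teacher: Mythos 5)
Your proposal follows essentially the same route as the paper: the Fredholm alternative for $H_0-\lambda_0$ at the non-degenerate ground state $\phi_0$, with $C_n$ fixed by the solvability condition, the $\phi_0$-component of $\phi_n$ fixed by \eqref{eqn:PsiNEqnAnyOrderPsi0}, regularity from the fact that $D(H_0)=H^2(\R^3\times\R^3)$, and reality by reality of the operator and of the data. Two remarks. First, you correctly identify the only genuinely delicate point --- that $\mathcal{B}^{(k)}\phi_{n-k}\in L^2$ is not automatic since the $\mathcal{B}^{(k)}$ grow polynomially --- but you leave it as an announced obstacle rather than resolving it; the paper closes this loop by proving Lemma~\ref{lem:PsiNEqnAnyOrder} jointly with Theorem~\ref{thm:phin}, i.e.\ by propagating through the induction the explicit spherical-harmonic form \eqref{eqn:psiiform} of $\phi_n$ together with the weighted estimate $\|e^{\alpha(r_1+r_2)}T\|_{H^1}\le C_\alpha\|e^{\eta(r_1+r_2)}f\|_{L^2}$ of Lemmas~\ref{lem:f_to_T} and~\ref{lem:exponentialdecay} (a Combes--Thomas-type argument on the quadrant), which is exactly the Agmon-type decay you gesture at. Second, your justification of $C_3=C_4=C_5=0$ via ``odd homogeneous polynomials'' is off: $\mathcal{B}^{(n)}$ is homogeneous of degree $n-1$, which is even for $n=3,5$, so parity alone does not give $\inner{\phi_0}{\mathcal{B}^{(n)}\phi_0}=0$. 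The correct mechanism, visible in \eqref{eqnBic} and \eqref{eqn:whatisbee}, is that every term of $\mathcal{B}^{(n)}$ carries a factor $Y_{l_1}^{m}\otimes Y_{l_2}^{-m}$ with $l_1,l_2\ge 1$, hence is orthogonal to the purely radial $\phi_0^2$; with that substitution your argument is the paper's.
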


Note that $(C_1,\phi_1)=(C_2,\phi_2)=0$ directly follows from the fact that the first non-vanishing term in the expansion~\eqref{eqn:exp_Veps} of $V_\eps$ is $\eps^3\Bthree$.
The formal expansions \eqref{eqn:exp_lambda_phi} are in fact asymptotic expansions as established in the following theorem.
Its proof is provided in Section \ref{sec:proof_lem1}.

\begin{theorem}
\label{thm:treophmole}
Let $\psi_\epsilon\in H^2(\Rsix)$ be the positive $L^2(\Rsix)$-normalized ground state
of $H_\epsilon$ and $\lambda_\epsilon$ the associated ground-state energy:
\begin{equation}\label{eqn:hydropair}
H_\eps \psi_\eps=\lambda_\eps\psi_\eps, \quad \|\psi_\eps\|_{L^2(\R^3 \times \R^3)}=1, \quad \psi_\eps > 0 \mbox{ a.e. on } \R^3 \times \R^3.
\end{equation}
Let $(\phi_0,\lambda_0)$ be as in \eqref{eqn:limsighr}, $((C_n,\phi_n))_{n \in \N^\ast}$ the unique solution  of \eqref{eqn:PsiNEqnAnyOrderH0} in
$(\R \times H^2(\R^3 \times \R^3))_{n \in \N^\ast}$, and $\tse$ the ``almost unitary'' symmetrization operator defined in \eqref{eqn:simmoperaursop}.
Then, for all $n\in \N$, there exists $\epsilon_n > 0$ and $K_n \in \R_+$ such that for all $0 < \eps \le \eps_n$,
\begin{align} \label{eqn:finalestionsh}
\left\| \psi_\epsilon- \psi_\epsilon^{(n)} \right\|_{H^2(\R^3 \times \R^3)}
&\leq K_n \epsilon^{n+1},  \quad
\left|\lambda_\epsilon-\lambda_\epsilon^{(n)} \right| \leq K_n \epsilon^{n+1},  \quad
\left|\lambda_\epsilon-\mu_\epsilon^{(n)} \right| \leq K_n \epsilon^{2(n+1)},
\end{align}
where
$$
\psi_\epsilon^{(n)}:= \frac{\tse\left(\phi_0+ \sum_{k=3}^n \epsilon^{k} \phi_k\right)}{\left\|\tse\left(\phi_0+ \sum_{k=3}^n \epsilon^{k} \phi_k\right)\right\|_{L^2(\Rsix)}},
\quad \lambda_\epsilon^{(n)}:=\lambda_0-\sum_{k=6}^n C_k \epsilon^{k}, \quad \mu_\epsilon^{(n)} = \langle \psi_\epsilon^{(n)}|H_\epsilon| \psi_\epsilon^{(n)} \rangle.
$$
\end{theorem}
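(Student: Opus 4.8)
The plan is to build from the truncated Rayleigh--Schr\"odinger series an explicit approximate ground state of $H_0+V_\epsilon$, to control its residual, to turn this into the first two bounds of~\eqref{eqn:finalestionsh} by a spectral-gap argument, and finally to transfer everything to $H_\epsilon$ through the intertwining relation~\eqref{eqn:pointoftransop}, the quadratic estimate on $\mu_\epsilon^{(n)}$ coming from Temple's inequality. I use Lemma~\ref{lem:PsiNEqnAnyOrder} throughout, together with $(C_1,\phi_1)=(C_2,\phi_2)=0$ and $C_3=C_4=C_5=0$, so that, setting
\[
\Phi_\epsilon^{(n)} := \phi_0 + \sum_{k=3}^n \epsilon^k\phi_k, \qquad \Lambda_\epsilon^{(n)} := \lambda_0 - \sum_{k=6}^n C_k\epsilon^k,
\]
one has $\psi_\epsilon^{(n)} = \tse\Phi_\epsilon^{(n)}/\norm{\tse\Phi_\epsilon^{(n)}}_{L^2(\Rsix)}$ and $\lambda_\epsilon^{(n)}=\Lambda_\epsilon^{(n)}$.

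Step 1 (consistency estimate). I would show that $r_\epsilon^{(n)}:=(H_0+V_\epsilon-\Lambda_\epsilon^{(n)})\Phi_\epsilon^{(n)}$ satisfies $\norm{r_\epsilon^{(n)}}_{L^2(\Rsix)} \le K_n\,\epsilon^{n+1}$. Writing $V_\epsilon = \sum_{k=3}^n\epsilon^k\mathcal{B}^{(k)} + R_\epsilon^{(n)}$, the polynomial part is handled algebraically: inserting $\Phi_\epsilon^{(n)}$ and using~\eqref{eqn:PsiNEqnAnyOrderH0}, all powers $\epsilon^m$ with $m\le n$ cancel, leaving finitely many terms $\epsilon^m\mathcal{B}^{(k)}\phi_j$ and $\epsilon^m C_k\phi_j$ with $m\ge n+1$, each in $L^2(\Rsix)$ since the $\mathcal{B}^{(k)}$ are polynomials and the $\phi_j$ decay rapidly. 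For the multipolar remainder $R_\epsilon^{(n)}\Phi_\epsilon^{(n)}$ I would split $\Rsix$ into $\Omega_\epsilon:=\{|r_1|,|r_2|\le\tfrac{1}{4}\epsilon^{-1}\}$, where a quantitative bound on the remainder of the multipolar expansion (Appendix) gives $|R_\epsilon^{(n)}(r_1,r_2)|\le C_n\,\epsilon^{n+1}(1+|r_1|+|r_2|)^{n}$, and its complement, where $\Phi_\epsilon^{(n)}$ is exponentially small; using the weighted bounds $|\phi_k(r_1,r_2)|\le C_{k,\alpha}(1+|r_1|+|r_2|)^{M_k}e^{-\alpha(|r_1|+|r_2|)}$ (valid for every $\alpha<1$), both contributions are $\order{\epsilon^{n+1}}$, the second in fact $\order{e^{-c/\epsilon}}$. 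The weighted bounds follow by induction on $k$ from~\eqref{eqn:PsiNEqnAnyOrderH0} using Combes--Thomas/Agmon-type decay of $(H_0-\lambda_0)^{-1}$ on $\phi_0^\perp$, possible since $\lambda_0=-1<\inf\sigma_{\mathrm{ess}}(H_0)=-\tfrac{1}{2}$. I expect this step --- reconciling the polynomially growing remainder $R_\epsilon^{(n)}$ over a region of diameter $\sim\epsilon^{-1}$ with the decay of the $\phi_k$ --- to be the main obstacle.

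Step 2 (from consistency to the $\epsilon^{n+1}$ bounds). Here I would use that $V_\epsilon$ is $H_0$-bounded with relative bound $<1$ \emph{uniformly} in $\epsilon\in(0,1]$ (Hardy's inequality and translation invariance), so $H_0+V_\epsilon$ is self-adjoint on $H^2(\Rsix)$, and that $\lambda_\epsilon$ is separated from $\sigma(H_0+V_\epsilon)\setminus\{\lambda_\epsilon\}$ by a gap $\ge\gamma>0$ for $\epsilon$ small: $\inf\sigma_{\mathrm{ess}}(H_0+V_\epsilon)\to-\tfrac{1}{2}$ by HVZ, while any weak $L^2$-limit of normalized eigenfunctions of $H_0+V_\epsilon$ with eigenvalue in $[\lambda_\epsilon,-1+\gamma]$ would be an eigenfunction of $H_0$ with energy in $[-1,-1+\gamma]$, hence a multiple of $\phi_0$, contradicting orthogonality to $\phi_\epsilon\to\phi_0$ (alternatively one may invoke~\cite{lrsBIBhi}). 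Decomposing $\Phi_\epsilon^{(n)}=a_\epsilon\phi_\epsilon+w_\epsilon$ with $w_\epsilon\perp\phi_\epsilon$ and $a_\epsilon\to1$, testing $r_\epsilon^{(n)}$ against $\phi_\epsilon$ gives $|\lambda_\epsilon-\Lambda_\epsilon^{(n)}|\le C\norm{r_\epsilon^{(n)}}$, and inverting $H_0+V_\epsilon-\Lambda_\epsilon^{(n)}$ on $\phi_\epsilon^\perp$ gives $\norm{w_\epsilon}\le C\norm{r_\epsilon^{(n)}}$; hence $|\lambda_\epsilon-\lambda_\epsilon^{(n)}|\le K_n\epsilon^{n+1}$ and $\norm{\phi_\epsilon-\Phi_\epsilon^{(n)}/\norm{\Phi_\epsilon^{(n)}}}_{L^2}\le K_n\epsilon^{n+1}$. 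An elliptic bootstrap on the equation for the difference $\delta_\epsilon$ --- bounding $\norm{H_0\delta_\epsilon}$ via $\norm{V_\epsilon\delta_\epsilon}\le\tfrac{1}{2}\norm{H_0\delta_\epsilon}+C\norm{\delta_\epsilon}$ and absorbing, then $\norm{\delta_\epsilon}_{H^2}\le C(\norm{H_0\delta_\epsilon}+\norm{\delta_\epsilon})$ --- upgrades this to $\norm{\phi_\epsilon-\Phi_\epsilon^{(n)}/\norm{\Phi_\epsilon^{(n)}}}_{H^2}\le K_n\epsilon^{n+1}$.

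Step 3 (transfer to $H_\epsilon$). Since translations commute with the Laplacian, $\tse$ is bounded on $H^2(\Rsix)$, and $\norm{\tse f}_{L^2}^2=\norm{f}^2+\inner{\ao f}{\tau_{\epsilon/2}f}$ has an exponentially small cross term when $f$ is concentrated near the origin, so $\norm{\tse\phi_\epsilon}_{L^2}=1+\order{e^{-c/\epsilon}}$ and likewise for $\Phi_\epsilon^{(n)}$; hence $\norm{\psi_\epsilon-\psi_\epsilon^{(n)}}_{H^2}\le C\norm{\phi_\epsilon-\Phi_\epsilon^{(n)}/\norm{\Phi_\epsilon^{(n)}}}_{H^2}\le K_n\epsilon^{n+1}$. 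For the quadratic estimate, \eqref{eqn:pointoftransop} gives $(H_\epsilon-\Lambda_\epsilon^{(n)})\psi_\epsilon^{(n)}=\norm{\tse\Phi_\epsilon^{(n)}}^{-1}\tse r_\epsilon^{(n)}$, and since $\mu_\epsilon^{(n)}=\inner{\psi_\epsilon^{(n)}}{H_\epsilon\psi_\epsilon^{(n)}}$ satisfies $|\mu_\epsilon^{(n)}-\Lambda_\epsilon^{(n)}|\le\norm{(H_\epsilon-\Lambda_\epsilon^{(n)})\psi_\epsilon^{(n)}}$ we get $\norm{(H_\epsilon-\mu_\epsilon^{(n)})\psi_\epsilon^{(n)}}_{L^2}\le K_n\epsilon^{n+1}$. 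Because $\ao\tse=\tse$, both $\psi_\epsilon^{(n)}$ and $\psi_\epsilon$ lie in the $\ao$-symmetric subspace, on which $\lambda_\epsilon$ is the isolated simple bottom of $\sigma(H_\epsilon)$ with a gap bounded below for $\epsilon$ small (the first excited $\ao$-symmetric energy tends to $-\tfrac{5}{8}$, the nearly degenerate $\sigma_u$ state being $\ao$-antisymmetric, hence harmless), and $\mu_\epsilon^{(n)}\to-1$ lies well inside that gap; Temple's inequality then yields $0\le\mu_\epsilon^{(n)}-\lambda_\epsilon\le\norm{(H_\epsilon-\mu_\epsilon^{(n)})\psi_\epsilon^{(n)}}^2/(\gamma-o(1))\le K_n\epsilon^{2(n+1)}$, which completes the proof.
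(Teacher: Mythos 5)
Your overall architecture is the one the paper uses: a consistency estimate on the residual of the truncated series, obtained by combining the cancellations from~\eqref{eqn:PsiNEqnAnyOrderH0} with the multipolar remainder bound on a region of diameter $\sim \epsilon^{-1}$ and the exponential decay of the $\phi_k$ (your Step~1, including the split into $\Omega_\epsilon$ and its complement and Hardy's inequality for the Coulomb tails, is essentially the paper's computation), followed by a Kato--Temple argument for the $\epsilon^{2(n+1)}$ bound and an elliptic bootstrap for the $H^2$ estimate. The obstacle you flag in Step~1 is real but is resolved exactly as you propose.

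Step~2, however, fails as written. The operator $H_0+\Ve$ is unitarily equivalent to $H_\epsilon$ via the translation $\te$ alone, so it inherits the exchange symmetry $\te^{-1}\ao\,\te$ and the gerade/ungerade near-degeneracy of the dissociating molecule: its two lowest eigenvalues differ only by an exponentially small exchange splitting, so there is \emph{no} uniform gap $\gamma>0$ above $\lambda_\epsilon$ in $\sigma(H_0+\Ve)$ on all of $L^2(\Rsix)$. Your weak-limit argument misses this because the normalized positive ground state $\phi_\epsilon$ of $H_0+\Ve$ is itself delocalized: being invariant under $\te^{-1}\ao\,\te$, it carries mass $\approx 1/2$ near $(\br_1,\br_2)=(0,0)$ and mass $\approx 1/2$ near the swapped configuration at distance $\sim\epsilon^{-1}$, hence $\phi_\epsilon \rightharpoonup \tfrac{1}{\sqrt 2}\phi_0$ only \emph{weakly}, and orthogonality does not pass to the limit. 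For the same reason the conclusion of Step~2 is false: $\norm{\phi_\epsilon-\Phi_\epsilon^{(n)}/\norm{\Phi_\epsilon^{(n)}}_{L^2}}_{L^2}^2 \ge \tfrac12+o(1)$, since $\Phi_\epsilon^{(n)}$ has a single bump; the localized quasi-mode is close to the \emph{two-dimensional} spectral subspace of the near-degenerate pair, not to the ground state. The repair is the device you already use in Step~3: symmetrize \emph{before} invoking any gap. Since $\ao\tse=\tse$, both $\psi_\epsilon^{(n)}$ and $\psi_\epsilon$ lie in the exchange-symmetric sector ${\cal V}$, where $\lambda_\epsilon$ is simple and isolated with a gap bounded below uniformly in $\epsilon$ (the ungerade state is antisymmetric, and the next symmetric level tends to $-5/8$), and \eqref{eqn:pointoftransop} gives $\norm{(H_\epsilon-\lambda_\epsilon^{(n)})\,\tse\Phi_\epsilon^{(n)}}_{L^2} \le \sqrt{2}\,\norm{r_\epsilon^{(n)}}_{L^2}$; running your gap/Kato--Temple and bootstrap arguments on $H_\epsilon$ restricted to ${\cal V}$ with this residual yields all three bounds of~\eqref{eqn:finalestionsh}. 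That is precisely the paper's route (which cites the corresponding lemma of~\cite{lrsBIBhi} for the symmetric-sector spectral estimates).
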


Let us point out that in view of the last two bounds in \eqref{eqn:finalestionsh}, the series expansion of $\mu_\epsilon^{(n)}$ in $\epsilon$ up to order $(2n+1)$, which can be computed from the $\phi_k$'s for $0 \le k \le n$, is given by
$$
\mu_\epsilon^{(n)} = \lambda_0 - \sum_{k=6}^{2n+1} C_k \epsilon^k + O(\epsilon^{2n+2}).
$$
Therefore, the knowledge of the $\phi_k$'s up to order $n$ allows one to compute all the $C_k$'s up to order $(2n+1)$ (Wigner's $(2n+1)$ rule).

\begin{remark}[van der Waals forces] It follows from the Hellmann-Feynman theorem that the van der Waals force ${\bf F}_\epsilon$ acting on the nucleus located at $(2\epsilon)^{-1}\be$ is given by
$$
{\bf F}_\epsilon = \int_{\R^3} \frac{(\br-(2\epsilon)^{-1}\be)}{|\br-(2\epsilon)^{-1}\be|^3} \rho_\epsilon(\br) \, d\br \quad \mbox{with} \quad
\rho_\epsilon(\br)= 2 \int_{\R^3} |\psi_\epsilon(\br,\br')|^2 \, d\br' \quad \mbox{(electronic density)}.
$$
Introducing the approximation ${\bf F}_\epsilon^{(n)}$ of ${\bf F}_\epsilon$ computed from $\psi_\epsilon^{(n)}$ as
$$
{\bf F}_\epsilon^{(n)} = \int_{\R^3} \frac{(\br-(2\epsilon)^{-1}\be)}{|\br-(2\epsilon)^{-1}\be|^3} \rho_\epsilon^{(n)}(\br) \, d\br \quad \mbox{with} \quad
\rho_\epsilon^{(n)}(\br)= 2 \int_{\R^3} |\psi_\epsilon^{(n)}(\br,\br')|^2 \, d\br',
$$
we obtain from the Cauchy-Schwarz inequality, the Hardy inequality in $\R^3$, and \eqref{eqn:finalestionsh} that
$$
|{\bf F}_\epsilon-{\bf F}_\epsilon^{(n)}| \le 8 \|\psi_\epsilon-\psi_\epsilon^{(n)}\|_{H^1(\R^3 \times \R^3)} \|\psi_\epsilon+\psi_\epsilon^{(n)}\|_{H^1(\R^3 \times \R^3)}
\le K'_n \epsilon^{n+1}
$$
for some constant $K'_n \in \R_+$ independent of $\epsilon$ and $\epsilon$ small enough. Since ${\bf F}_\epsilon^{(n)}$ can be Taylor expanded at $\epsilon=0$, we obtain that the force ${\bf F}_\epsilon$ satisfies for all $n \ge 6$
$$
{\bf F}_\epsilon = - \left( \sum_{k=6}^n nC_n \epsilon^{n+1} \right) \be + O(\epsilon^{n+1}).
$$
This extends the result ${\bf F}_\epsilon = -6C_6 \epsilon^7\be+O(\epsilon^8)$ proved in~\cite[Theorem 4]{anapolitanos2019differentiability} for {\em any} two atoms with non-degenerate ground states, to arbitrary order in the simple case of two hydrogen atoms.
\end{remark}

\subsection{Computation of the perturbation series}

The coefficients $\mathcal{B}^{(n)}$ are obtained by a classical multipolar expansion, detailed in Appendix~\ref{sect:RSH} for the sake of completeness. Using  spherical coordinates in an orthonormal cartesian basis $(\mathbf{e}_1,\mathbf{e}_2,\mathbf{e}_3)$ of $\R^3$ for which $\mathbf{e}_3=\mathbf{e}$, so that
\begin{equation}\label{eqn:convention}
\begin{split}
 \mathbf{r}_i &= r_i \big(  \sin(\theta_i) \cos(\phi_i) \mathbf{e}_1
               + \sin(\theta_i) \sin(\phi_i) \mathbf{e}_2 + \cos(\theta_i) \mathbf{e} \big), \\
&\quad \cos(\theta_i) = \mathbf{r}_i \cdot \mathbf{e},
\quad \text{and} \quad r_i = | \mathbf{r}_i |, \quad i=1,2,
\end{split}
\end{equation}
it holds that for all $n \ge 3$,
\begin{align} \label{eqnBic}
\mathcal{B}^{(n)}(\mathbf{r}_1, \mathbf{r}_2) &=
    \sum_{(l_1,l_2)\in B_n} \kern-1em r_1^{l_1} r_2^{l_2} \kern-.5em
    \sum_{- \min(l_1, l_2) \leq m \leq \min(l_1, l_2)} \kern-3em
      G_{\rm c}(l_1,l_2, m)  Y_{l_1}^m(\theta_1, \phi_1) {Y}_{l_2}^{-m}(\theta_2, \phi_2), \\
      &=
    \sum_{(l_1,l_2)\in B_n } \kern-1em r_1^{l_1} r_2^{l_2} \kern-.5em
    \sum_{- \min(l_1, l_2) \leq m \leq \min(l_1, l_2)} \kern-3em
      G_{\rm r}(l_1,l_2, m) \mathcal{Y}_{l_1}^m(\theta_1, \phi_1) \mathcal{Y}_{l_2}^m(\theta_2, \phi_2), \label{eqnBi}
\end{align}
where $({Y}_{l}^m)_{l \in \N, \; m=-l,-l+1,\cdots, l-1,l}$ and $(\mathcal{Y}_l^m)_{l \in \N, \; m=-l,-l+1,\cdots, l-1,l}$ are respectively the complex and real spherical harmonics
, and where
\begin{equation} \label{eqn:whatisbee}
B_n=\set{(l_1,l_2)}{l_1 + l_2 = n-1, \; l_1, l_2 \neq 0 }
=\set{(l,n-1-l)}{ 1\leq l\leq n-2}.
\end{equation}
The coefficients $G_{\rm c}(l_1, l_2, m)$ and  $G_{\rm r}(l_1, l_2, m)$ are respectively given by
\begin{align}\label{eq:defG}
  G_{\rm c}(l_1, l_2, m) &:= (-1)^{l_2}
\frac{4\pi(l_1+l_2)!}{\big( (2l_1 +1)(2l_2 +1)(l_1-m)!(l_1+m)!(l_2 -m)!(l_2+m)! \big)^{1/2} }, \\
  G_{\rm r}(l_1, l_2, m) &:= (-1)^{m} G_{\rm c}(l_1, l_2, m). \nonumber
\end{align}
Both expansions \eqref{eqnBic} and \eqref{eqnBi} are useful: \eqref{eqnBic} will be used in the proof of Theorem~\ref{thm:phin} to establish formula \eqref{eqn:psiiform}, which has a simpler and more compact form in the complex spherical harmonics basis. On the other hand,  \eqref{eqnBi}  allows one to work with real-valued functions.

\medskip

One of the main contributions of this article is to show that the functions $\phi_n$, hence the real numbers $\lambda_n$, can be obtained by solving simple 2D linear elliptic boundary value problems on the quadrant
$$
\Omega = \R_+^\ast \times \R_+^\ast.
$$
For each angular momentum quantum number $l \in \N$, we denote by
\begin{equation}\label{eqn:kappaAlphaDef}
\kappa_l (r) = \frac{l ( l+1)}{2r^2} - \frac{1}{r}- \frac{1}{2} \lambda_0 =  \frac{l ( l + 1)}{2r^2} - \frac{1}{r}+ \frac{1}{2},
\end{equation}
and we consider the boundary value problem: given $f \in L^2(\Omega)$
\begin{equation}\label{eqn:TMain}
\left\{ \begin{array}{l} \dps \mbox{find} T \in {H_0^1}(\Omega) \mbox{ such that} \\
\dps - \frac{1}{2} \Delta T ( r_1, r_2) + \left( \kappa_{l_1} ( r_1) + \kappa_{l_2} (r_2) \right) T = f(r_1, r_2) \quad \mbox{ in } {\cal D}'(\Omega).
\end{array} \right.
\end{equation}
It follows from classical results on the radial operator $- \frac{1}{2} \frac{d^2}{dr^2}+ \kappa_{l}$ on $L^2(0,+\infty)$ with form domain $H^1_0(0,+\infty)$ encountered in the study of the hydrogen atom (see Section~\ref{sec:proof_lemRS} for details) that for all $l_1,\ l_2 \in \mathbb{N}$, $(l_1, l_2) \neq (0, 0)$, the problem \eqref{eqn:TMain} is well posed in $H_0^1(\Omega)$. For $l_1=l_2=0$, this problem is well-posed in
$$
\widetilde{H_0^1}(\Omega)=\set{v\in H_0^1(\Omega)}{\int_\Omega v(r_1,r_2) e^{-r_1-r_2}\, r_1 r_2\,\dd r_1 \dd r_2 =0},
$$
provided that the compatibility condition
\begin{equation}\label{eqn:effcond}
\int_\Omega f(r_1,r_2) e^{-r_1-r_2}\, r_1 r_2\,\dd r_1 \dd r_2 =0
\end{equation}
is fulfilled. Problem \eqref{eqn:TMain} is useful to solve the Rayleigh--Schr\"odinger system \eqref{eqn:PsiNEqnAnyOrderH0}-\eqref{eqn:PsiNEqnAnyOrderPsi0} thanks to the following lemma, proved in Section~\ref{sec:proof_lemRS}. We denote by
$$
\phi_0^\perp := \set{\psi \in L^2(\R^3 \times \R^3)}{ \inner{\phi_0}{\psi}=0}.
$$
Note that the condition \eqref{eqn:effcond} is equivalent to $\inner{\phi_0}{\frac{f(r_1,r_2)}{r_1r_2}}=0$.

\begin{lemma} \label{lem:f_to_T}
Let $l_1,l_2 \in \N$, $m_1,m_2 \in \Z$ such that $-l_j \le m_j \le l_j$ for $j=1, 2$, and $f \in L^2(\Omega)$. Consider the problem of finding $\psi \in H^2(\R^3 \times \R^3) \cap \phi_0^\perp$ solution to the equation
\begin{equation} \label{eq:eq_res}
(H_0-\lambda_0)\psi = F \quad \mbox{with} \quad F:=\frac{f(r_1,r_2)}{r_1r_2}  {Y}_{l_1}^{m_1}(\theta_1, \phi_1)  {Y}_{l_2}^{m_2}(\theta_2, \phi_2) .
\end{equation}
\begin{enumerate}
\item If $(l_1,l_2) \neq (0,0)$, then the unique solution to \eqref{eq:eq_res} in $H^2(\R^3 \times \R^3)$ is
\begin{equation}\label{eq:eq_res_psi}
\psi =   \frac{T(r_1,r_2)}{r_1r_2}  {Y}_{l_1}^{m_1}(\theta_1, \phi_1)  {Y}_{l_2}^{m_2}(\theta_2, \phi_2),
\end{equation}
where $T$ is the unique solution to \eqref{eqn:TMain} in $H^1_0(\Omega)$;
\item  If $(l_1,l_2) = (0,0)$, and if the compatibility condition \eqref{eqn:effcond} is satisfied, then the unique solution to \eqref{eq:eq_res} in $H^2(\R^3 \times \R^3) \cap \phi_0^\perp$ is
$$
\psi = \frac{1}{4\pi}  \frac{T(r_1,r_2)}{r_1r_2},
$$
where $T$ is the unique solution to \eqref{eqn:TMain} in $\widetilde H^1_0(\Omega)$.
\end{enumerate}
In addition, if $f$ decays exponentially at infinity, then so does $T$, hence $\psi$, in the following sense: for all $0 \le \alpha < \sqrt{3/8}$,  there exists a constant $C_\alpha \in \R_+$ such that for all $\eta > \alpha$, $l_1,l_2 \in \N$, $m_1,m_2 \in \Z$ such that $-l_j \le m_j \le l_j$ for $j=1,2$, and all $f \in L^2(\Omega)$
\begin{align}
\Vert e^{\alpha (r_1+ r_2) } T \Vert_{H^1 ( \Omega)} &\le C_\alpha \Vert  e^{\eta ( r_1 + r_2)} f \Vert_{L^2 ( \Omega)}, \\
\Vert  e^{\alpha (|\br_1|+ |\br_2|)} \psi \Vert_{L^2 (\Rsix)} &\le C_\alpha \Vert e^{\eta ( |\br_1| + |\br_2|)} F \Vert_{L^2 ( \Rsix)}, \label{eq:boundL2exp} \\
\Vert  e^{\alpha (|\br_1|+ |\br_2|)} \psi \Vert_{H^1 (\Rsix)} &\le C_\alpha (1+4l_1(l_1+1)+4l_2(l_2+1))^{1/2} \Vert e^{\eta ( |\br_1| + |\br_2|)} F \Vert_{L^2 ( \Rsix)}. \label{eq:boundH1exp}
\end{align}
Lastly, if $f$ is real-valued, then so is $T$.
\end{lemma}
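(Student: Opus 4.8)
The plan is to reduce the six-dimensional equation \eqref{eq:eq_res} to the two-dimensional problem \eqref{eqn:TMain} by separation of variables, exploiting that $H_0-\lambda_0=(h_1-\tfrac{\lambda_0}{2})+(h_2-\tfrac{\lambda_0}{2})$, with $h_j=-\tfrac12\Delta_{\br_j}-|\br_j|^{-1}$, commutes with the rotations acting on $\br_1$ and on $\br_2$ separately. The computational heart is the elementary identity obtained by writing the Laplacian in spherical coordinates and substituting $g(r)/r$ for the radial factor: for all admissible $l_1,l_2,m_1,m_2$,
\[
(H_0-\lambda_0)\!\left(\frac{g(r_1,r_2)}{r_1r_2}\,{Y}_{l_1}^{m_1}(\theta_1,\phi_1)\,{Y}_{l_2}^{m_2}(\theta_2,\phi_2)\right)=\frac{1}{r_1r_2}\Big(-\tfrac12\Delta g+\big(\kappa_{l_1}(r_1)+\kappa_{l_2}(r_2)\big)g\Big)\,{Y}_{l_1}^{m_1}(\theta_1,\phi_1)\,{Y}_{l_2}^{m_2}(\theta_2,\phi_2),
\]
with $\kappa_l$ as in \eqref{eqn:kappaAlphaDef}; inserting the ansatz \eqref{eq:eq_res_psi} into \eqref{eq:eq_res} thus gives exactly \eqref{eqn:TMain}. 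I would first record that the operator in \eqref{eqn:TMain} is the tensor sum $A_{l_1}\otimes I+I\otimes A_{l_2}$ with $A_l=-\tfrac12\frac{d^2}{dr^2}+\kappa_l$ on $L^2(0,+\infty)$, and recall from the classical analysis of the radial hydrogen operator (Section~\ref{sec:proof_lemRS}) that $A_l\ge\frac38$ for $l\ge1$, while $A_0\ge0$ with $\ker A_0=\R v_0$, $v_0(r)\propto re^{-r}$, and $A_0\ge\frac38$ on $v_0^{\perp}$. Hence $A_{l_1}\otimes I+I\otimes A_{l_2}$ is coercive on $L^2(\Omega)$ when $(l_1,l_2)\neq(0,0)$, and coercive on $\widetilde{H_0^1}(\Omega)=(v_0\otimes v_0)^{\perp}$ — which, since $v_0\otimes v_0\propto r_1r_2e^{-r_1-r_2}$, corresponds under the ansatz to the constraint $\psi\in\phi_0^{\perp}$ through \eqref{eqn:effcond} — when $(l_1,l_2)=(0,0)$ and \eqref{eqn:effcond} holds; in both cases \eqref{eqn:TMain} is well posed in the stated space.

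For existence, let $T$ solve \eqref{eqn:TMain} in the relevant space and define $\psi$ by \eqref{eq:eq_res_psi}, resp.\ $\psi=\frac{1}{4\pi}T/(r_1r_2)$. Since the spherical harmonics are $L^2$-normalized on the sphere, $\|\psi\|_{L^2(\R^3\times\R^3)}=\|T\|_{L^2(\Omega)}$, so $\psi\in L^2$; in case~2 the choice $T\perp v_0\otimes v_0$ is exactly $\psi\perp\phi_0$, and in case~1 the orthogonality of ${Y}_{l_j}^{m_j}$ to ${Y}_0^0$ already forces $\psi\in\phi_0^{\perp}$. Writing $\nabla_{\br_j}=\frac{\br_j}{r_j}\partial_{r_j}+\frac{1}{r_j}\nabla_{\omega_j}$ (with $\nabla_{\omega_j}$ the spherical gradient), using the one-dimensional Hardy inequality $\int_0^\infty r^{-2}|v|^2\le 4\int_0^\infty|v'|^2$ on $H_0^1(0,+\infty)$ and $\|\nabla_{\omega}{Y}_l^m\|_{L^2(S^2)}^2=l(l+1)$, one bounds $\|\psi\|_{H^1(\R^3\times\R^3)}$ by $(1+4l_1(l_1+1)+4l_2(l_2+1))^{1/2}$ times $\|T\|_{H^1(\Omega)}$; thus $\psi$ lies in the form domain $H^1(\R^3\times\R^3)$ and is a weak solution of $(H_0-\lambda_0)\psi=F$ with $F\in L^2$. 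Since Coulomb potentials are infinitesimally $\Delta$-bounded, $D(H_0)=H^2(\R^3\times\R^3)$ by Kato--Rellich, so $\psi\in H^2$. For uniqueness in $H^2\cap\phi_0^{\perp}$ (which in case~1 contains the solution just exhibited), two solutions differ by an element of $\ker(H_0-\lambda_0)$; as $\lambda_0=-1=\inf\sigma(H_0)$ is a simple eigenvalue with eigenfunction $\phi_0$ (the hydrogen ground state being simple), the difference is a multiple of $\phi_0$, hence $0$. (Equivalently: decompose a solution into angular sectors; every sector other than $(l_1,m_1,l_2,m_2)$ satisfies the homogeneous coercive equation and vanishes, save possibly the $(0,0,0,0)$-component, a multiple of $\phi_0$ removed by $\phi_0^{\perp}$, and in the surviving sector the ansatz is recovered with radial part solving the well-posed \eqref{eqn:TMain}.)

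The weighted estimates follow from a Combes--Thomas/Agmon argument on \eqref{eqn:TMain}. For $0\le\alpha<\sqrt{3/8}$, conjugating $A_{l_1}\otimes I+I\otimes A_{l_2}$ by $e^{\alpha(r_1+r_2)}$ produces $A_{l_1}\otimes I+I\otimes A_{l_2}+\alpha(\partial_{r_1}+\partial_{r_2})-\alpha^2$, whose symmetric part is $A_{l_1}\otimes I+I\otimes A_{l_2}-\alpha^2\ge\frac38-\alpha^2>0$ on the relevant subspace, with lower bound independent of $(l_1,l_2)$; a Lax--Milgram/G\aa rding argument on the conjugated equation (in case~2 one first subtracts the $v_0\otimes v_0$ component, which itself decays like $e^{-r_1-r_2}$, and $1>\sqrt{3/8}$) gives $\|e^{\alpha(r_1+r_2)}T\|_{H^1(\Omega)}\le C_\alpha\|e^{\eta(r_1+r_2)}f\|_{L^2(\Omega)}$ for $\eta>\alpha$, with $C_\alpha$ depending only on $\frac38-\alpha^2$. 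Transferring this to $\psi$ as in the existence step — the radial weight $e^{\alpha(r_1+r_2)}=e^{\alpha(|\br_1|+|\br_2|)}$ commuting through the separation — yields \eqref{eq:boundL2exp} (with no $l$-dependence, since $\|e^{\alpha(r_1+r_2)}\psi\|_{L^2}=\|e^{\alpha(r_1+r_2)}T\|_{L^2}$) and \eqref{eq:boundH1exp} (the $l$-factor again from Hardy's inequality and $\|\nabla_\omega{Y}_l^m\|^2=l(l+1)$). Lastly, if $f$ is real-valued then $\overline{T}$ solves the same real, well-posed problem \eqref{eqn:TMain}, so $T=\overline{T}$.

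The main obstacle I anticipate is the decay part: obtaining the sharp threshold $\sqrt{3/8}$ with constants uniform in $(l_1,l_2)$, correctly treating the first-order term and the degenerate $(0,0)$-sector in the Combes--Thomas estimate, and transferring the weighted bounds between $\Omega$ and $\R^3\times\R^3$ while tracking the angular-momentum-dependent factors via Hardy's inequality. The separation-of-variables reduction and the well-posedness statements are comparatively routine once the radial spectral facts are available.
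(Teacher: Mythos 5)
Your proposal is correct and follows essentially the same route as the paper: separation of variables via the spherical-harmonic/radial unitary reduction to the 2D operator $H_{l_1,l_2}-\lambda_0=h_{l_1}\otimes \1+\1\otimes h_{l_2}+1\ge \tfrac38$ (on the appropriate subspace when $l_1=l_2=0$), followed by the conjugated bilinear form $a_\alpha(v,v)=a(v,v)-\alpha^2\|v\|_{L^2}^2\ge(\tfrac38-\alpha^2)\|v\|_{L^2}^2$ for the exponential decay, and the Hardy inequality with $\|\nabla_\omega Y_l^m\|_{L^2({\mathbb S}^2)}^2=l(l+1)$ to transfer the bounds back to $\R^3\times\R^3$. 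The only differences are cosmetic (you build $\psi$ from $T$ and invoke Kato--Rellich plus simplicity of $\lambda_0$ for uniqueness, whereas the paper works top-down from the invariant-subspace decomposition of $H_0$), and the steps you flag as delicate are exactly the ones the paper handles with the same G{\aa}rding-type argument.
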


The properties of the functions $\phi_n$ upon which our numerical method is based, are collected in the following theorem, proved in Section~\ref{sec:proof_lem1}.

\begin{theorem} \label{thm:phin}
Let $((C_n,\phi_n))_{n \in \N^\ast}$ be the unique solution in
$(\R \times H^2(\R^3 \times \R^3))_{n \in \N^\ast}$ to the Rayleigh--Schr\"odinger system  \eqref{eqn:PsiNEqnAnyOrderH0}. Then, $\phi_1=\phi_2=0$, $C_n=0$ for $1 \le n \le 5$ and for each $n \ge 3$, there exists a positive integer $N_n$ such that
\begin{equation}\label{eqn:psiiform}
 \phi_n = \sum_{(l_1,l_2) \in {\cal L}_n}  \frac{T^{(n)}_{(l_1,l_2)}( r_1, r_2)}{r_1 r_2} \left(  \sum_{m=-\min(l_1,l_2)}^{\min(l_1,l_2)} \alpha^{(n)}_{(l_1,l_2,m)} {Y}_{l_1}^{m} (\theta_1, \phi_1) {Y}_{l_2}^{-m}  ( \theta_2, \phi_2) \right),
\end{equation}
where ${\cal L}_n$ is a finite subset of $\N^2$ with cardinality $N_n<\infty$, where $T^{(n)}_{(l_1,l_2)}$ is the unique solution to \eqref{eqn:TMain} in $H^1(\Omega)$ (or in $\widetilde H^1(\Omega)$ if $l_1=l_2=0$) for $f=f^{(n)}_{(l_1,l_2)}$, where $f^{(n)}_{(l_1,l_2)}$ is a real-valued function that can be computed recursively from the $T^{(n')}_{(l_1',l_2')}$'s, for $n' < n$, and where $\alpha^{(n)}_{(l_1,l_2,m)}$ are real coefficients. Moreover, there exists $\alpha_n > 0$ such that
\begin{align}
\Vert  e^{\alpha_n (r_1+ r_2)} T^{(n)}_{(l_1,l_2)} \Vert_{H^1 ( \Omega)} &< \infty, \\
\Vert e^{\alpha_n (|\br_1|+ |\br_2|)}  \phi_n  \Vert_{H^1 (\Rsix)} & < \infty. \label{eq:exp_dec_phin}
\end{align}
\end{theorem}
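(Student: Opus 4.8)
The plan is to argue by strong induction on $n$, the structural identity \eqref{eqn:psiiform} being the substantive point. The assertions $\phi_1=\phi_2=0$, $C_n=0$ for $1\le n\le 5$, and reality of the $\phi_n$ are already contained in Lemma~\ref{lem:PsiNEqnAnyOrder}; it therefore suffices to exhibit, for each $n\ge 3$, a function of the form \eqref{eqn:psiiform} satisfying \eqref{eqn:PsiNEqnAnyOrderH0}--\eqref{eqn:PsiNEqnAnyOrderPsi0} and the decay bound \eqref{eq:exp_dec_phin}, and then to invoke uniqueness. The induction starts at $n=0$: by \eqref{eqn:limsighr} the function $\phi_0$ already has the form \eqref{eqn:psiiform} with ${\cal L}_0=\{(0,0)\}$, $T^{(0)}_{(0,0)}(r_1,r_2)=\pi^{-1}r_1r_2e^{-r_1-r_2}$ (which decays exponentially and lies in the kernel of the radial operator $-\half\Delta+\kappa_0(r_1)+\kappa_0(r_2)$), and $\alpha^{(0)}_{(0,0,0)}$ a suitable constant since ${Y}_0^0$ is constant; and $\phi_1=\phi_2=0$ is immediate.

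For the inductive step, fix $n\ge 3$ and assume \eqref{eqn:psiiform}--\eqref{eq:exp_dec_phin} for all $0\le n'<n$. First I would assemble the right-hand side $F_n:=-\sum_{k=3}^{n}\mathcal B^{(k)}\phi_{n-k}-\sum_{k=1}^{n}C_k\phi_{n-k}$ of \eqref{eqn:PsiNEqnAnyOrderH0} from the inductive expressions for the $\phi_{n-k}$ and the multipolar expansion \eqref{eqnBic} of $\mathcal B^{(k)}$. Each $\mathcal B^{(k)}\phi_{n-k}$ is then a finite linear combination of terms
$$r_1^{a_1}r_2^{a_2}\,\frac{T^{(n-k)}_{(l_1',l_2')}(r_1,r_2)}{r_1r_2}\;\bigl[{Y}_{a_1}^{p}{Y}_{l_1'}^{m'}\bigr](\theta_1,\phi_1)\,\bigl[{Y}_{a_2}^{-p}{Y}_{l_2'}^{-m'}\bigr](\theta_2,\phi_2),$$
and expanding each angular factor ${Y}_{a_i}^{\pm p}{Y}_{l_i'}^{\pm m'}$ via the linearization formula for products of spherical harmonics (whose Gaunt/Clebsch--Gordan coefficients are real) reduces it to a finite sum of terms of the same shape with ${Y}_{a_i}$ replaced by ${Y}_{L_i}$, $|a_i-l_i'|\le L_i\le a_i+l_i'$. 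Two observations keep the structure alive: (i) the ``opposite upper index'' pattern ${Y}_{L_1}^{M}(\theta_1,\phi_1){Y}_{L_2}^{-M}(\theta_2,\phi_2)$ is preserved, simply because upper indices add and \eqref{eqnBic} already exhibits this pattern (equivalently, the total $L_z$ along $\be$ stays $0$, as for $\phi_0$); and (ii) for a fixed $(k,a_1,a_2,l_1',l_2')$ the radial factor $r_1^{a_1}r_2^{a_2}T^{(n-k)}_{(l_1',l_2')}(r_1,r_2)$ carries no dependence on the angular indices and so pulls out of the angular algebra. Gathering these contributions, together with those of $\sum_k C_k\phi_{n-k}$, block by block in $(l_1,l_2)$, one writes $F_n=\sum_{(l_1,l_2)\in{\cal L}_n}\frac{f^{(n)}_{(l_1,l_2)}(r_1,r_2)}{r_1r_2}\bigl(\sum_{m}\beta^{(n)}_{(l_1,l_2,m)}{Y}_{l_1}^m{Y}_{l_2}^{-m}\bigr)$, with ${\cal L}_n$ finite, the $\beta^{(n)}$ real, and each $f^{(n)}_{(l_1,l_2)}$ a finite sum of polynomials in $r_1,r_2$ times the functions $T^{(n')}_{(l_1',l_2')}$ (and the $\phi_{n'}$, through the $C_k\phi_{n-k}$ part); by the inductive decay, $f^{(n)}_{(l_1,l_2)}\in L^2(\Omega)$, decays exponentially, and is real-valued. (At $n=3$, for instance, this gives ${\cal L}_3=\{(1,1)\}$, $f^{(3)}_{(1,1)}=-\pi^{-1}r_1^2r_2^2e^{-r_1-r_2}$, and $\beta^{(3)}_{(1,1,m)}=G_{\rm c}(1,1,m)$.)

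I would then solve $F_n$ block by block via Lemma~\ref{lem:f_to_T}. For $(l_1,l_2)\neq(0,0)$, part~(1) yields the unique $T^{(n)}_{(l_1,l_2)}\in H_0^1(\Omega)$ solving \eqref{eqn:TMain} with $f=f^{(n)}_{(l_1,l_2)}$, hence the $H^2(\Rsix)$ solution $\frac{T^{(n)}_{(l_1,l_2)}}{r_1r_2}\sum_m\beta^{(n)}_{(l_1,l_2,m)}{Y}_{l_1}^m{Y}_{l_2}^{-m}$ of the corresponding piece of \eqref{eqn:PsiNEqnAnyOrderH0}; these pieces, carrying angular momentum $(l_1,l_2)\ne(0,0)$, are automatically orthogonal to $\phi_0$. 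For the block $(0,0)$, where necessarily $m=0$, the compatibility condition \eqref{eqn:effcond} --- equivalent to $\inner{\phi_0}{f^{(n)}_{(0,0)}/(r_1r_2)}=0$, i.e. to $\inner{\phi_0}{F_n}=0$ --- is exactly the solvability condition of the Rayleigh--Schr\"odinger equation, and is met by the value of $C_n$ furnished by Lemma~\ref{lem:PsiNEqnAnyOrder}; part~(2) then supplies $T^{(n)}_{(0,0)}\in\widetilde{H_0^1}(\Omega)$ and the spherically symmetric piece $\frac1{4\pi}T^{(n)}_{(0,0)}/(r_1r_2)\in\phi_0^\perp$. Summing the block solutions gives a particular solution $\psi^\star\in H^2(\Rsix)\cap\phi_0^\perp$ of $(H_0-\lambda_0)\psi=F_n$ of the form \eqref{eqn:psiiform}; adding the multiple $b_n\phi_0$, $b_n:=-\half\sum_{k=1}^{n-1}\inner{\phi_k}{\phi_{n-k}}$, enforces \eqref{eqn:PsiNEqnAnyOrderPsi0}, and since $\phi_0=\pi^{-1}e^{-r_1-r_2}$ itself has the form \eqref{eqn:psiiform} with radial part in the kernel of the radial operator, this adjustment preserves \eqref{eqn:psiiform} while keeping $T^{(n)}_{(0,0)}$ a solution of \eqref{eqn:TMain}, now lying in $\widetilde H^1(\Omega)$ rather than $\widetilde{H_0^1}(\Omega)$. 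By uniqueness in Lemma~\ref{lem:PsiNEqnAnyOrder}, the function so obtained is $\phi_n$. Finally, \eqref{eq:exp_dec_phin} follows from the quantitative estimates of Lemma~\ref{lem:f_to_T}: since $f^{(n)}_{(l_1,l_2)}$ decays at any rate $\eta<\min_{n'<n}\alpha_{n'}$, choosing $\alpha_n<\min\{\eta,\sqrt{3/8}\}$ bounds $\Vert e^{\alpha_n(r_1+r_2)}T^{(n)}_{(l_1,l_2)}\Vert_{H^1(\Omega)}$ and $\Vert e^{\alpha_n(|\br_1|+|\br_2|)}\phi_n\Vert_{H^1(\Rsix)}$, and reality is inherited from that of $f^{(n)}_{(l_1,l_2)}$ through the last line of Lemma~\ref{lem:f_to_T} (or directly from Lemma~\ref{lem:PsiNEqnAnyOrder}).

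The step I expect to be the main obstacle is the combinatorial core of observation (ii) together with the block-by-block collection: verifying that, after gathering all contributions to a fixed $(l_1,l_2)$, the result collapses to the \emph{factored} form in \eqref{eqn:psiiform} --- a single radial function $T^{(n)}_{(l_1,l_2)}$ times a single angular combination $\sum_m\beta^{(n)}_{(l_1,l_2,m)}{Y}_{l_1}^m{Y}_{l_2}^{-m}$ --- rather than an unfactored sum of such products. This is a statement about the recoupling (Wigner $6j$/$9j$) algebra of the bipolar harmonics generated by the recursion, showing in effect that the radial and angular degrees of freedom do not become entangled across summands; it is the genuinely delicate point, whereas the remaining functional-analytic and decay bookkeeping is routine once Lemmas~\ref{lem:PsiNEqnAnyOrder} and~\ref{lem:f_to_T} are available.
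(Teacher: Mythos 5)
Your argument follows the paper's own proof essentially step for step: strong induction on $n$, assembly of the right-hand side of \eqref{eqn:PsiNEqnAnyOrderH0} from the multipolar expansion \eqref{eqnBic} and the inductive form of the $\phi_{n-k}$'s, linearization of the products of spherical harmonics via (real) Gaunt/Clebsch--Gordan coefficients, block-by-block inversion through Lemma~\ref{lem:f_to_T} (with the compatibility condition on the $(0,0)$ block absorbed by the value of $C_n$ and the constant $b_n\phi_0$ adjustment enforcing \eqref{eqn:PsiNEqnAnyOrderPsi0}), and the quantitative decay estimates of that lemma to obtain \eqref{eq:exp_dec_phin}. This is exactly the structure of the proof in Section~\ref{sec:proof_lem1}.

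On the step you single out as the main obstacle: you are right that it is the only non-routine point, and you should know that the paper does not resolve it either --- equation \eqref{eq:RHS_RS} simply asserts the factored form. What the recursion actually produces, for a fixed pair $(l_1,l_2)$, is a finite \emph{sum} of separated terms, one for each contributing tuple $\bigl(k,(l_1',l_2')\in B_k,(l_1'',l_2'')\in {\cal L}_{n-k}\bigr)$, each of the shape (radial function)\,$\times\,\bigl(\sum_m c_m Y_{l_1}^m Y_{l_2}^{-m}\bigr)$, and there is no a priori reason for the angular vectors $(c_m)_m$ of different contributions to be parallel. But no recoupling ($6j$/$9j$) identity is needed to rescue the theorem: every ingredient of the argument --- Lemma~\ref{lem:f_to_T}, which is applied to each separated term individually; the exponential-decay bounds, which are summed; the projections \eqref{eq:Bkphin-k} and \eqref{eqn:compAlgoCnComplete}, which become sums over the contributing terms --- is linear in these terms. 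It therefore suffices to read ${\cal L}_n$ as a finite index set in which a pair $(l_1,l_2)$ may occur with multiplicity (equivalently, to index the $T^{(n)}$'s and $\alpha^{(n)}$'s by the contributing tuple rather than by $(l_1,l_2)$ alone); with that reading your induction closes, and all statements and downstream formulas of the paper hold verbatim. For the low-order blocks actually used in the $C_6$--$C_{11}$ computations the multiplicity is one (or the distinct contributions happen to share a common angular vector, as for the $(1,2)$ and $(2,1)$ blocks of $\phi_7$), so the paper's bookkeeping is consistent. In short: your proposal is correct and coincides with the paper's route, and the delicate point you flag is a genuine imprecision shared with the published proof, repaired by the trivial multiset reinterpretation above rather than by any angular-momentum algebra.
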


The number $N_n=|{\cal L}_n|$ (number of terms in the expansion) for $6 \le n \le 9$
are displayed in Table~\ref{tbl:lsphericalharmonicspsii}, whose construction rules are given in the proof of Theorem~\ref{thm:phin} (see Section~\ref{sec:proof_lem1}).
For $3 \le n\le 5$, ${\cal L}_n=B_n$, where the latter set is defined in \eqref{eqn:whatisbee},
and $N_n=|B_n|=n-2$.
For general $n$, $B_n\subset{\cal L}_n$.
For $n\ge 6$, additional terms appear, as indicated in Table~\ref{tbl:lsphericalharmonicspsii}.
\begin{table}[H]
\begin{center}
  \begin{tabular}{|l|l|l|}
   \hline
   $n$ & $N_n$ & pairs of angular momentum quantum numbers $(l_1,l_2)$
                   in ${\cal L}_n\backslash B_n$  \\
   \hline
    6 &  8   & (0,2;0,2)\\
    7 &  13  &  (0,2;1,3), (1,3;0,2)\\
    8 &  18  &  (0,2;0,2,4), (1,3;1,3), (0,2,4;0,2) \\
    9 &  27  &  (0,2;1,3,5), (1,3;0,2,4), (1,3,5;0,2), (0,2,4;1,3),   (1,3;1,3) \\
   \hline
   \end{tabular}
 \end{center}
 \caption{Additional spherical harmonics appearing in each $\phi_n$ for $6 \le n \le 9$.
$N_n$ is the number of terms in the spherical harmonics expansion~\eqref{eqn:psiiform}. The condensed notation $(l_1,l_1';l_2,l_2')$ (resp. $(l_1,l_1';l_2,l_2',l_2'')$ or $(l_1,l_1',l_1'';l_2,l_2')$) stands for the four (resp. six) pairs $(l_1,l_2)$, $(l_1',l_2)$, $(l_1,l_2')$, etc.}
\label{tbl:lsphericalharmonicspsii}
\end{table}

Table \ref{tbl:lsphericalharmonicspsii}  can be read using the following rule:
for a given $n$, if $(l_1, l_2)$ appears in the corresponding row of the table, then there may exist $m$ such that
  $Y_{l_1}^{m}(\theta_1, \phi_1) Y_{l_2}^{-m}(\theta_2, \phi_2)$ might appear
  with a non-zero coefficient $\alpha^{(n)}_{(l_1,l_2,m)}$ in the spherical harmonics expansion \eqref{eqn:psiiform} of $\phi_n$.
  Conversely, if a given $(l_1, l_2)$ does not appear in the table, then
   $\biginner{\phi_n}{\frac{v(r_1,r_2)}{r_1r_2}Y_{l_1}^{m_1}(\theta_1, \phi_1) Y_{l_2}^{m_2}(\theta_2, \phi_2)} = 0$,
for all $m_1,m_2$ and all $v \in L^2(\Omega)$.
The relative complexity of  Table~\ref{tbl:lsphericalharmonicspsii} is due to fact the first term in the right-hand side of~\eqref{eqn:PsiNEqnAnyOrderH0} is a sum of bilinear terms in  $\mathcal{B}^{(k)}$ and $\phi_{n-k}$. The angular parts of both $\mathcal{B}^{(k)}$ and $\phi_{n-k}$ are finite linear combinations of  angular basis functions ${Y}_{l_1}^{m} \otimes {Y}_{l_2}^{-m}$. When multiplied, they give rise to a still finite but longer linear combination of ${Y}_{l_1}^{m} \otimes {Y}_{l_2}^{-m}$'s (see \eqref{eq:prod_Ylm}).
By contrast, the corresponding table for the $\mathcal{B}^{(n)}$'s is quite simple, since all the rows have the same structure: for all $n \ge 3$, we have
\begin{equation}\label{eqn:beenzterm}
n \quad | \quad n-2  \quad | \quad (k,n-k) \quad \mbox{ for } 1 \le k \le n-2.
\end{equation}

 From $(\phi_k)_{0 \le k \le n}$, we can obtain the coefficients $\lambda_j$ up to $j=2n+1$ using Wigner's $(2n+1)$ rule. Another, more direct, way to compute recursively the $\lambda_n$'s is to take the inner product of $\phi_0$ with each side of~\eqref{eqn:PsiNEqnAnyOrderH0} and use the fact that $\langle \phi_0, (H_0-\lambda_0)\phi_n\rangle=\langle (H_0-\lambda_0)\phi_0, \phi_n\rangle=0$. Since $(C_1,\phi_1)=(C_2,\phi_2)=0$, we thus obtain
\begin{equation}\label{eqn:eigenformula}
     C_n  = - \sum_{k = 3}^{n-3} \biginner{\phi_0}{\mathcal{B}^{(k)} \phi_{n-k}}
     - \sum_{k = 3}^{n-3} C_k \biginner{\phi_0}{\phi_{n-k}},
\end{equation}
where we use the convention $\sum_{k=m}^n ... = 0$ if $m > n$. It follows that $C_3=C_4=C_5=0$.

Using~\eqref{eqnBic},~\eqref{eqn:psiiform} and the orthonormality properties of the complex spherical harmonics, the terms $\inner{\psi_0}{\mathcal{B}^{(k)} \phi_{n-k}}$ in \eqref{eqn:eigenformula} can be written as
\begin{align}
  \inner{\phi_0}{\mathcal{B}^{(k)} \phi_{n-k}}
   &= \inner{\mathcal{B}^{(k)}\phi_0}{ \phi_{n-k}} \nonumber \\
   &= \bigg\langle
    \sum_{(l_1,l_2)\in B_k} \kern-1em r_1^{l_1} r_2^{l_2}
   \sum_{m=-\min(l_1,l_2)}^{\min(l_1,l_2)}
      G_{\rm c}(l_1,l_2, m) {Y}_{l_1}^m(\theta_1, \phi_1) {Y}_{l_2}^{-m}(\theta_2, \phi_2) \pi^{-1} e^{-(r_1+r_2)},  \nonumber  \\
      & \qquad \sum_{(l_1',l_2') \in {\cal L}_{n-k}}  \frac{T^{(n-k)}_{(l_1',l_2')}( r_1, r_2)}{r_1 r_2} \sum_{m'=-\min(l_1',l_2')}^{\min(l_1',l_2')} \alpha^{(n-k)}_{(l_1',l_2',m')}  {Y}_{l_1'}^{m'} (\theta_1, \phi_1) {Y}_{l_2'}^{-m'}  ( \theta_2, \phi_2)  \bigg\rangle  \nonumber   \\
   &= -  \sum_{(l_1,l_2) \in {\cal L}_{n-k}\cap B_k }  \beta^{(n-k)}_{(l_1,l_2)} t^{(n-k)}_{l_1,l_2},
   \label{eq:Bkphin-k}
    \end{align}
    where
    \begin{align} \label{eq:tnl1l2}
    \beta^{(n)}_{(l_1,l_2)}&:= - \pi^{-1}  \sum_{m=-\min(l_1,l_2)}^{\min(l_1,l_2)} \alpha^{(n)}_{(l_1,l_2,m)}  G_{\rm c}(l_1,l_2, m) \\
    t^{(n)}_{(l_1,l_2)}&:= \int_\Omega  r_1^{l_1+1}   r_2^{l_2+1} e^{-(r_1+r_2)} T^{(n)}_{(l_1,l_2)}( r_1, r_2) \, dr_1 \, dr_2,
    \end{align}
    with the convention that $\beta^{(n)}_{(l_1,l_2)}=t^{(n)}_{(l_1,l_2)}=0$ if $(l_1,l_2) \notin {\cal L}_n$. In view of Table~\ref{tbl:lsphericalharmonicspsii}, we see in particular that
since the sum in \eqref{eq:Bkphin-k} is empty
\begin{equation}\label{eqn:whatxerom}
\inner{\phi_0}{\mathcal{B}^{(k)}\phi_n} = 0 \quad\forall \;k,n=3,4,5,\;k\neq n,
\end{equation}
and that many other vanish, e.g.
\begin{equation}\label{eqn:ninergtom}
\inner{\phi_0}{\Bthree \phi_6} = 0,\quad
\inner{\phi_0}{\Bfour \phi_5} = 0,\quad
\inner{\phi_0}{\mathcal{B}^{(5)}\phi_4} = 0,\quad
\inner{\phi_0}{\mathcal{B}^{(6)}\phi_3} = 0.
\end{equation}
Additional pairs $k,n$ can be examined by comparing the sets $B_k$ and ${\cal L}_{n-k}$.

Furthermore, if the chosen numerical method to solve the boundary value problem \eqref{eqn:TMain}
giving the radial function $T^{n-k}_{l_1',l_2'}$ is a Galerkin method using as basis functions of the approximation space tensor products of 1D Laguerre functions (that are, polynomials in $r$ times $e^{-r}$), then the computation of $t^n_{l_1,l_2}$ can be done explicitly, at least for the approximate solution~\cite[Section 7.3]{lrsBIBih}.
Using the fact that
\begin{equation}\label{eqn:groundstateconst}
\phi_0 = 4  e^{-(r_1+ r_2)} {Y}_0^0(\theta_1, \phi_1) {Y}_0^0(\theta_2, \phi_2),
\end{equation}
we then have
\begin{equation}\label{eqn:compAlgoCnComplete}
 \begin{split}
  \inner{\phi_0}{\phi_{n}}
  &= \bigg\langle 4  e^{-(r_1+ r_2)} {Y}_0^0(\theta_1, \phi_1) {Y}_0^0(\theta_2, \phi_2) , \sum_{(l_1',l_2') \in {\cal L}_{n}}  \frac{T^{(n)}_{(l_1',l_2')}( r_1, r_2)}{r_1 r_2} \sum_{m'=-\min(l_1',l_2')}^{\min(l_1',l_2')} \alpha^{(n)}_{(l_1',l_2',m')}  {Y}_{l_1'}^{m'} (\theta_1, \phi_1) {Y}_{l_2'}^{-m'}  ( \theta_2, \phi_2)  \bigg\rangle \\
  &= 4  \alpha^{(n)}_{(0,0,0)} t^{(n)}_{(0,0)}.
    \end{split}
\end{equation}
As a consequence, $\inner{\phi_0}{\phi_{n}}=0$ if $(0,0) \notin {\cal L}_{n}$, so that in particular
\begin{equation}\label{eq:phi0phin}
\inner{\phi_0}{\phi_{3}}=\inner{\phi_0}{\phi_{4}}=\inner{\phi_0}{\phi_{5}}=0.
\end{equation}
Then, $C_n$ can be computed from \eqref{eqn:eigenformula} as
\begin{equation}\label{eqn:CNExpression}
\begin{split}
  C_n = &   \sum_{k=3}^{n-3}  \sum_{\substack{ (l_1,l_2) \in {\cal L}_{n-k} \\l_1 + l_2 = k-1 \\ l_1, l_2 \neq 0 } } \beta^{(n-k)}_{(l_1,l_2)} t^{(n-k)}_{(l_1,l_2)} - 4 \sum_{k=6}^{n-3} C_k  \alpha^{(n-k)}_{(0,0,0)} t^{(n-k)}_{(0,0)}.
\end{split}
\end{equation}

\medskip

\subsection{Practical computation of the lowest order terms}
\label{sec:LOT}

We detail in this section the practical computation of $\phi_3$ (already done in~\cite{lrsBIBhi}), $\phi_4$ and $\phi_5$, as well as $C_n$ for $n \le 11$. Recall that $\phi_1=\phi_2=0$, and $C_n=0$ for $n \le 5$.

\medskip

\noindent
{\bf Computation of $\phi_3$.} We have
\begin{align}
&\Bthree
  =   r_1 r_2
    \left( \sum_{m =  -1}^1 G_{\rm c}(1,1,m)) {Y}_1^m(\theta_1, \phi_1) {Y}_1^{-m}(\theta_2, \phi_2) \right), \label{eq:B3} \\
    & (H_0-\lambda_0)\phi_3 = - \Bthree \phi_0  ,  \label{eq:eqphi3} \\
        & \inner{\phi_0}{\phi_3} =0,  \label{eq:normphi3}
            \end{align}
with $G_{\rm c}(1,1,m)=- \frac{\pi}{3} (8 - 4 |m|)$ and therefore
\begin{align*}
& (H_0-\lambda_0)\phi_3 =  - r_1r_2 e^{-(r_1+r_2)}  \left( \sum_{m =  -1}^1 \pi^{-1}G_{\rm c}(1,1,m) {Y}_1^m(\theta_1, \phi_1) {Y}_1^{-m}(\theta_2, \phi_2) \right), \\
    & \inner{\phi_0}{\phi_3} =0.
    \end{align*}
As a consequence, using Lemma~\ref{lem:f_to_T}, it holds that ${\cal L}_3=\{(1,1)\}$,
\begin{align}\label{eq:phi3_2}
\phi_3 = \frac{T^{(3)}_{(1,1)}(r_1,r_2)}{r_1r_2}  \left( \sum_{m =  -1}^1 \alpha^{(3)}_{(1,1,m)} {Y}_1^m(\theta_1, \phi_1) {Y}_1^{-m}(\theta_2, \phi_2) \right),
\end{align}
where $\alpha^{(3)}_{(1,1,m)}=- \pi^{-1}G_{\rm c}(1,1,m)=- \frac{1}{3} (8 - 4 |m|)$ and where $T^{(3)}_{(1,1)} \in H^1_0(\Omega)$ can be numerically computed by solving the 2D boundary value problem
$$
-\frac 12 \Delta T^{(3)}_{(1,1)} +  \left( \kappa_{1} ( r_1) + \kappa_{1} (r_2) \right) T^{(3)}_{(1,1)} = r_1^2 r_2^2 e^{-(r_1+r_2)} \quad \mbox{ in } \Omega
$$
with homogeneous Dirichlet boundary conditions.

\medskip

\noindent
{\bf Computation of $\phi_4$}. To compute the next order, we first expand $\Bfour$ as
\begin{align*}
\Bfour  =&  r_1 r_2^2 \sum_{m=-1}^1
    G_{\rm c}(1, 2, m) {Y}_1^m (\theta_1, \phi_1) {Y}_2^{-m}(\theta_2, \phi_2)
     + r_1^2 r_2 \sum_{m=-2}^2
      G_{\rm c}(2, 1, m) {Y}_1^m (\theta_1, \phi_1) {Y}_2^{-m}(\theta_2, \phi_2),
\end{align*}
with $G_{\rm c}(1,2,1)=G_{\rm c}(1,2,-1)=4\pi/\sqrt 5$, $G_{\rm c}(1,2,0)=4\pi\sqrt 3/\sqrt 5$, $G_{\rm c}(2,1,m)=-G_{\rm c}(1,2,m)$.
From  \eqref{eqn:PsiNEqnAnyOrderH0}-\eqref{eqn:PsiNEqnAnyOrderPsi0}, we get
\begin{align*}
&(H_0-\lambda_0)\phi_4= -  \mathcal{B}^{(3)} \phi_{1} - \mathcal{B}^{(4)} \phi_{0} , \\
&\inner{\phi_0}{\phi_4} =  0,
\end{align*}
since $\phi_1=\phi_2=0$ and $C_k=0$ for $1 \le k \le 5$. We therefore have ${\cal L}_4=\{(1,2),(2,1)\}$ and
\begin{align*}
\phi_4 =& \frac{T^{(4)}_{(1,2)}(r_1, r_2)}{r_1 r_2} \sum_{m=-1}^1
    \alpha^{(4)}_{(1,2,m)} {Y}_1^m (\theta_1, \phi_1) {Y}_2^{-m}(\theta_2, \phi_2) + \frac{T^{(4)}_{(2,1)}(r_1, r_2)}{r_1 r_2} \sum_{m=-1}^1
    \alpha^{(4)}_{(2,1,m)} {Y}_2^m (\theta_1, \phi_1) {Y}_1^{-m}(\theta_2, \phi_2),
    \end{align*}
where $\alpha^{(4)}_{(l_1,l_2,m)}=- \pi^{-1} G_{\rm c}(l_1,l_2,m)$, $T^{(4)}_{(2,1)} \in H^1_0(\Omega)$ solves
\begin{equation}\label{eqn:T4iMain}
- \frac{1}{2} \Delta_2 T^{(4)}_{(2,1)} ( r_1, r_2) + \left( \kappa_2 ( r_1) + \kappa_1 (r_2) \right) T^{(4)}_{(2,1)} =  r_1^3 r_2^2 e^{-r_1 - r_2 } \quad \mbox{in } \Omega,
\end{equation}
and $T^{(4)}_{(1,2)}(r_1,r_2)=T^{(4)}_{(2,1)}(r_2,r_1)$.
A representation of $T^{(4)}_{(2,1)}$ can be seen in Figure \ref{fig:T4Shape}.
\begin{figure}[t]
\begin{center}
\subfloat[\label{fig:Tshape}]{\includegraphics[width = 0.49\textwidth]{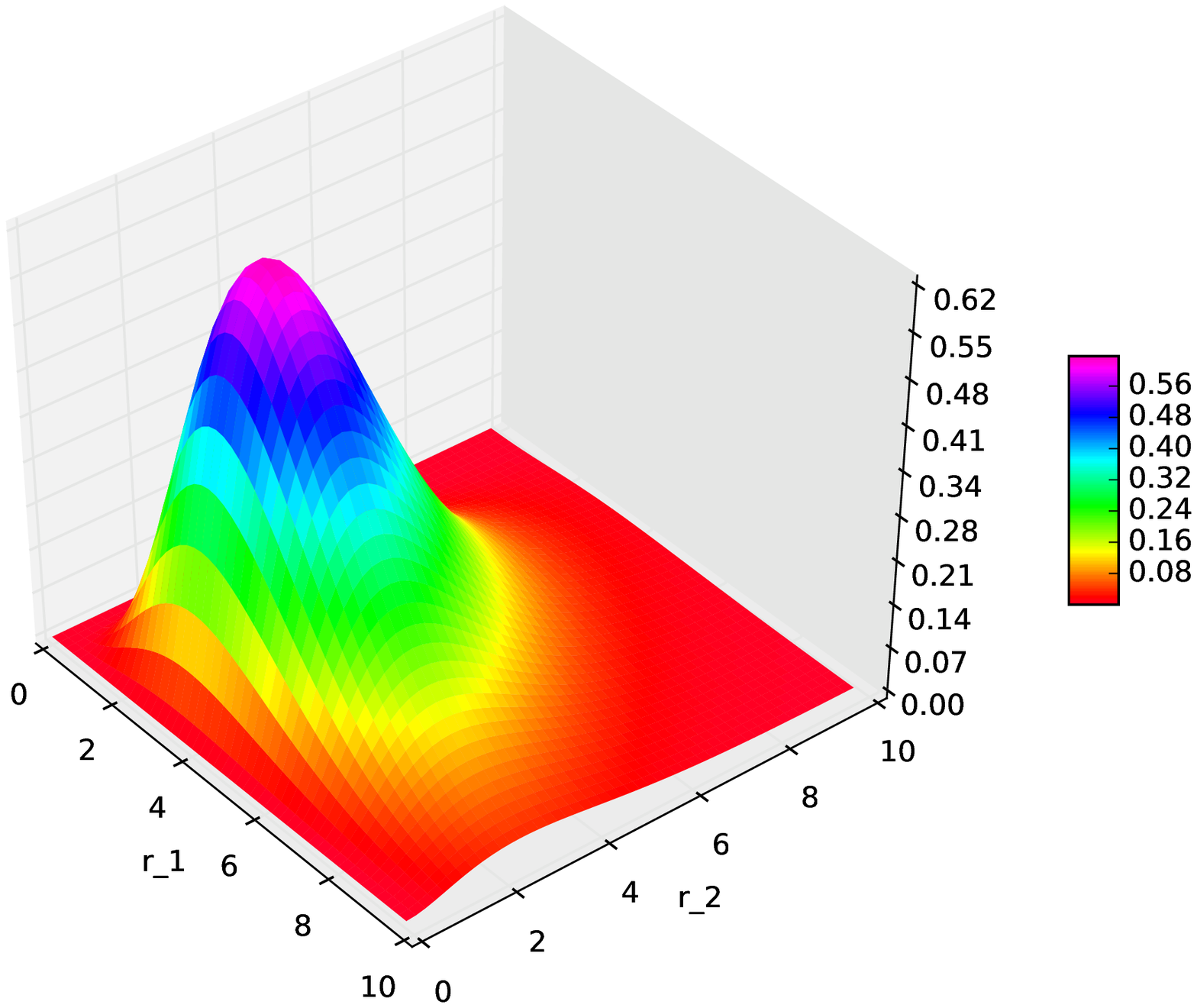}}
\subfloat[\label{fig:Tovrrshape}]{\includegraphics[width = 0.49\textwidth]{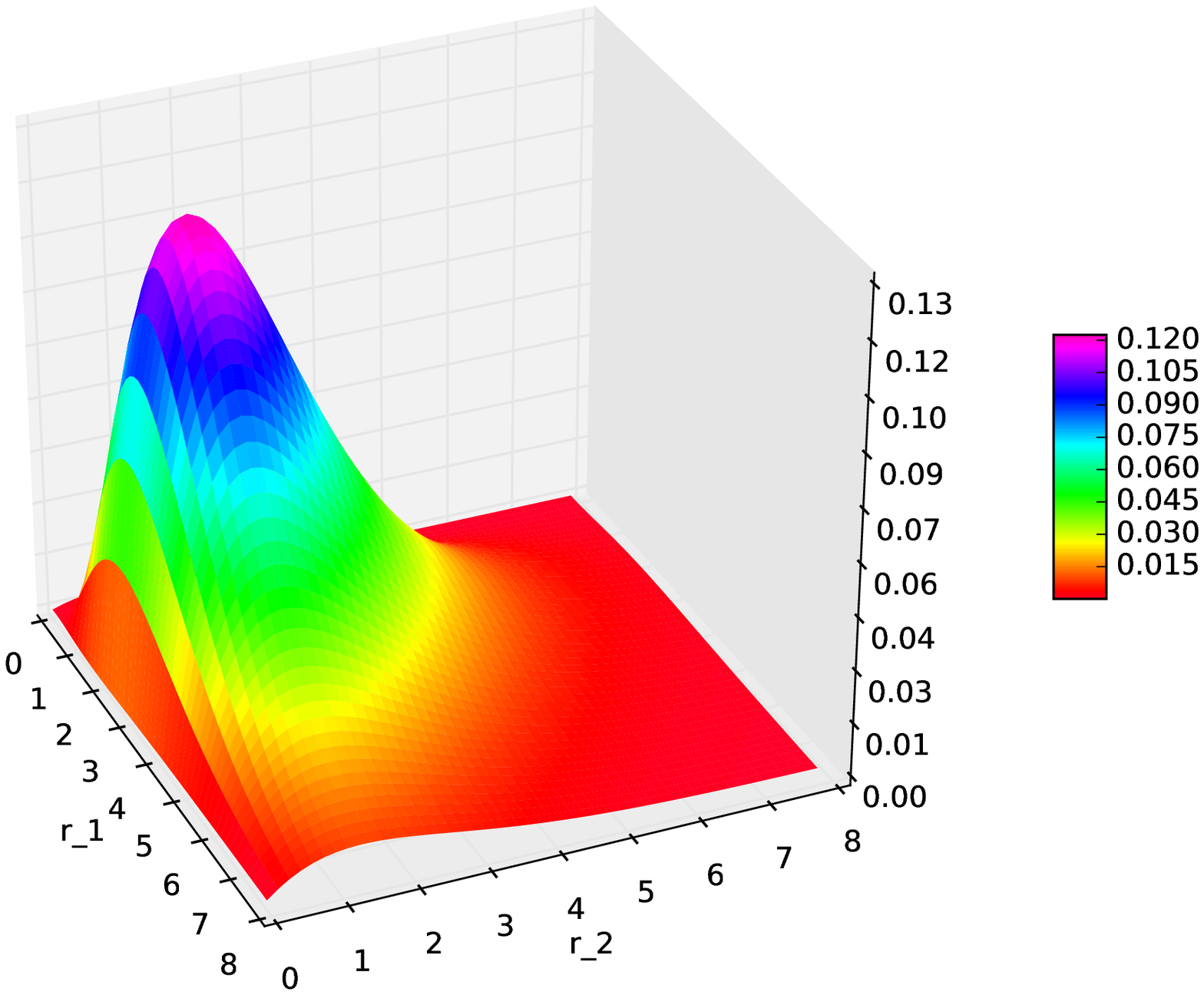}}
\caption{\label{fig:T4Shape} Shape of $T^{(4)}_{(2,1)}$ \protect\subref{fig:Tshape} and
$T^{(4)}_{(2,1)}(r_1, r_2)/(r_1 r_2)$ \protect\subref{fig:Tovrrshape}, using the Laguerre function
approximation scheme~\cite[Section 7.3]{lrsBIBih}.}
\end{center}
\end{figure}

\medskip

\noindent
{\bf Computation of $\phi_5$}. We have
\begin{align*}
  \Bfive =&  r_1 r_2^3 \sum_{m=-1}^1
    G_{\rm c}(1, 3, m) {Y}_1^m (\theta_1, \phi_1) {Y}_3^{-m}(\theta_2, \phi_2)
     + r_1^2 r_2^2 \sum_{m=-2}^2
      G_{\rm c}(2, 2, m) {Y}_2^m (\theta_1, \phi_1) {Y}_2^{-m}(\theta_2, \phi_2)
   \\  &+ r_1^3 r_2^1 \sum_{m=-1}^1
      G_{\rm c}(3, 1, m) {Y}_3^m (\theta_1, \phi_1) {Y}_1^{-m}(\theta_2, \phi_2),
\end{align*}
and
\begin{align*}
&(H_0-\lambda_0)\phi_5= - \mathcal{B}^{(5)} \phi_{0}, \\
&\inner{\phi_0}{\phi_5} =  0,
\end{align*}
since $\phi_1=\phi_2=0$ and $C_k=0$ for $1 \le k \le 5$.  We thus have  ${\cal L}_5=\{(1,3),(2,2),(3,1)\}$ and
\begin{multline}
  \psi^{(5)} = \frac{T^{(5)}_{(1,3)}(r_1,r_2)}{r_1 r_2} \sum_{m=-1}^1
    \alpha^{(5)}_{(1,3,m)}  {Y}_1^m (\theta_1, \phi_1)  {Y}_3^{-m}(\theta_2, \phi_2)
     \\  + \frac{T^{(5)}_{(2,2)}(r_1,r_2)}{r_1 r_2} \sum_{m=-2}^2
    \alpha^{(5)}_{(2,2,m)}  {Y}_2^m (\theta_1, \phi_1)  {Y}_2^{-m}(\theta_2, \phi_2)
     \\  + \frac{T^{(5)}_{(3,1)}(r_1,r_2)}{r_1 r_2}\sum_{m=-1}^1
      \alpha^{(5)}_{(3,1,m)}  {Y}_3^m (\theta_1, \phi_1)  {Y}_1^{-m}(\theta_2, \phi_2) ,
\end{multline}
where $\alpha^{(5)}_{(l_1,l_2,m)}=- \pi^{-1} G_{\rm c}(l_1,l_2,m)$, $T^{(5)}_{(1,3)} \in H^1_0(\Omega)$ solves
\begin{equation}
- \frac{1}{2} \Delta_2 T^{(5)}_{(1,3)} ( r_1, r_2) + \left( \kappa_1 ( r_1) + \kappa_3 (r_2) \right) T^{(5)}_{(1,3)}
= r_1^2 r_2^4 e^{-(r_1 + r_2)},
\end{equation}
$T^{(5)}_{(2,3)}\in H^1_0(\Omega)$ solves
\begin{equation}
- \frac{1}{2} \Delta_2 T^{(5)}_{(2,2)} ( r_1, r_2) + \left( \kappa_2 ( r_1) + \kappa_2 (r_2) \right) T^{(5)}_{(2,2)}
= r_1^3 r_2^3 e^{-(r_1 + r_2) },
\end{equation}
and $T^{(5)}_{(3,1)}(r_1,r_2)=T^{(5)}_{(1,3)}(r_2,r_1)$.

\medskip

\noindent
{\bf Computation of $\lambda_n$ for $6 \le n \le 11$.} From \eqref{eqn:eigenformula} and the fact that $C_n=0$ for $3 \le n \le 5$, we obtain, using  \eqref{eq:Bkphin-k},  \eqref{eq:phi0phin}, \eqref{eqn:CNExpression}, Table~\ref{tbl:lsphericalharmonicspsii}, and the symmetries of the coefficients $\beta^{(n)}_{(l_1,l_2)}$ and $t^{(n)}_{(l_1,l_2)}$,
\begin{align}
C_6&=-\biginner{\phi_0}{\mathcal{B}^{(3)} \phi_{3}} = \beta^{(3)}_{(1,1)} t^{(3)}_{(1,1)}, \label{eq:expC_6}  \\
C_7&=-\biginner{\phi_0}{\mathcal{B}^{(3)} \phi_{4}} - \biginner{\phi_0}{\mathcal{B}^{(4)} \phi_{3}} = 0,   \nonumber  \\
C_8&=- \biginner{\phi_0}{\mathcal{B}^{(3)} \phi_{5}} - \biginner{\phi_0}{\mathcal{B}^{(4)} \phi_{4}} - \biginner{\phi_0}{\mathcal{B}^{(5)} \phi_{3}}
= - \biginner{\phi_0}{\mathcal{B}^{(4)} \phi_{4}}  \nonumber \\
&= \beta^{(4)}_{(1,2)} t^{(4)}_{(1,2)} +  \beta^{(4)}_{(2,1)} t^{(4)}_{(2,1)}= 2 \beta^{(4)}_{(1,2)} t^{(4)}_{(1,2)}, \nonumber \\
C_9&= - \biginner{\phi_0}{\mathcal{B}^{(3)} \phi_{6}} - \biginner{\phi_0}{\mathcal{B}^{(4)} \phi_{5}} - \biginner{\phi_0}{\mathcal{B}^{(5)} \phi_{4}} - \biginner{\phi_0}{\mathcal{B}^{(6)} \phi_{3}} - C_6 \biginner{\phi_0}{\phi_{3}}=0, \nonumber \\
C_{10}&=  - \sum_{k = 3}^{7} \biginner{\phi_0}{\mathcal{B}^{(k)} \phi_{10-k}}
     - \sum_{k = 6}^{7} C_k \biginner{\phi_0}{\phi_{10-k}} = -  \biginner{\phi_0}{\mathcal{B}^{(5)} \phi_{5}}  \nonumber \\
     &=  \beta^{(5)}_{(1,3)} t^{(5)}_{(1,3)} +   \beta^{(5)}_{(2,2)} t^{(5)}_{(2,2)}  +  \beta^{(5)}_{(3,1)} t^{(5)}_{(3,1)} = 2 \beta^{(5)}_{(1,3)} t^{(5)}_{(1,3)} +   \beta^{(5)}_{(2,2)} t^{(5)}_{(2,2)},  \label{eq:expC_10}  \\
C_{11}&= - \sum_{k = 3}^{8} \biginner{\phi_0}{\mathcal{B}^{(k)} \phi_{11-k}}
     - \sum_{k = 6}^{8} C_k \biginner{\phi_0}{\phi_{11-k}}  = -  \biginner{\phi_0}{\mathcal{B}^{(4)} \phi_{7}} -  \biginner{\phi_0}{\mathcal{B}^{(5)} \phi_{6}}  \nonumber \\
     &= \beta^{(7)}_{(1,2)} t^{(7)}_{(1,2)} +   \beta^{(7)}_{(2,1)} t^{(7)}_{(2,1)}  +  \beta^{(6)}_{(2,2)} t^{(6)}_{(2,2)}  \label{eq:expC_11}.
          \end{align}
As $\alpha^{(n)}_{(l_1,l_2,m)}=- \pi^{-1}G(l_1,l_2,m)$ for $n=3,4,5$, $(l_1,l_2) \in {\cal L}_n$ and $-\min(l_1,l_2) \le m \le \min(l_1,l_2)$, we obtain, using \eqref{eq:defG}, that
$$
\big(\alpha^{(n)}_{(l_1,l_2,m)}\big)^2 = \frac{16 \,  ((l_1+l_2)!)^2}{(2l_1 +1)(2l_2 +1)(l_1-m)!(l_1+m)!(l_2 -m)!(l_2+m)!},
$$
and therefore
\begin{align*}
\beta^{(3)}_{(1,1)} &= \sum_{m=-1}^1 (\alpha^{(3)}_{(1,1,m)})^2= \frac{16}{9} +\frac{64}{9}+ \frac{16}{9}= \frac{32}3,  \\
\beta^{(4)}_{(1,2)} &=  \beta^{(4)}_{(2,1)} = \sum_{m=-1}^1 (\alpha^{(4)}_{(1,2,m)})^2 = \frac{16}5 + 3 \times \frac{16}5 + \frac{16}5  =16, \\
\beta^{(5)}_{(1,3)} &=  \beta^{(5)}_{(3,1)} = \sum_{m=-1}^1 (\alpha^{(5)}_{(1,3,m)})^2 =  \frac{64}3, \qquad
\beta^{(5)}_{(2,2)} =  \sum_{m=-2}^2 (\alpha^{(5)}_{(2,2,m)})^2 = \frac{224}{5},
\end{align*}
so that
\begin{align}
&C_6=\frac{32}3 t^{(3)}_{(1,1)}, \quad
C_7=0,  \quad
C_8= 32 t^{(4)}_{(1,2)}, \quad
C_9= 0,  \quad &C_{10} = \frac{128}3 t^{(5)}_{(1,3)} +  \frac{224}{5}t^{(5)}_{(2,2)}.  \label{eq:C6-C10}
\end{align}
It is optimal to use \eqref{eq:C6-C10} to compute $C_6$, $C_8$, $C_{10}$ since only $\phi_n$ is needed to compute $C_{2n}$. On the other hand, computing $C_{11}$ using \eqref{eq:expC_11} requires computing $\phi_6$ and $\phi_7$, and it is therefore preferable to use Wigner's $(2n+1)$ rule that allows computing $C_{11}$ from $\phi_3$, $\phi_4$ and $\phi_5$.

\medskip

\noindent
{\bf Computation of higher-order terms.} For $n \ge 6$, the right-hand side of~\eqref{eqn:PsiNEqnAnyOrderH0} contains terms of the form $\mathcal{B}^{(k)} \phi_{n-k}$ with $k \ge 3$ and $n-k \ge 1$. The computation of $\phi_n$ therefore requires solving 2D boundary value problems of the form
$$
- \frac 12 \Delta T + \left( \kappa_{l_1}(r_1)+\kappa_{l_2}(r_2) \right) T = r_1^{l_1'} r_2^{l_2'} T^{(n-k)}_{(l_1'',l_2'')}
$$
for some $(l_1,l_2) \in {\cal L}_n$, $l_1'+l_2'=k-1$ and $(l_1'',l_2'') \in {\cal L}_{n-k}$. The right-hand side of this equation is not explicit, but the above equation can nevertheless be solved numerically since $T^{(n-k)}_{(l_1'',l_2'')}$ has been previously computed numerically during the calculation of $\phi_{n-k}$.

\section{Numerical results}
\label{sec:Cn}

\subsection{Comparison between different approaches}
\label{sect:comparisonOfResults}

The following tables contain the results of the approximated values of the $C_n$
coefficients computed by Ovsiannikov and Mitroy~\cite{ovsiannikov2005regular},
by Choy~\cite{Choy2000}, by Pauling and Beach~\cite{pauling1935van},
and by the techniques described in this paper.
The latter consist in solving recursively the Modified Slater--Kirkwood
boundary value problems of type~\eqref{eqn:PsiNEqnAnyOrderH0} using a Galerkin scheme in finite-dimensional approximation spaces constructed from tensor products of 1D Laguerre functions with degrees lower of equal to $k$. With basic double-precision floating-point arithmetics, the latter approach is numerical stable up to $k=11$ and provides results with excellent precision (relative error lower than $10^{-9}$). It is well-known that the conditioning of spectral methods for PDEs using orthogonal polynomial spaces grows exponentially. However, in the present case, the entries of the Galerkin matrix are square roots of rational numbers so that arbitrary precision can be obtained using symbolic computation. The method of Choy~\cite{Choy2000} is based on the Slater--Kirkwood
algorithm~\cite{ref:SlaKervdWgas}, whereas the method of Pauling and
Beach~\cite{pauling1935van} is different.
Although Slater and Kirkwood are referenced in~\cite{pauling1935van}, Pauling and Beach
were motivated by a method of S.~C.~Wang~\cite{wangunkowname}.

\begin{table}[H]
\begin{center}
\begin{tabular}{|c|l|l|l|l|}
\hline
Method & $C_{{6}}$          & $C_{{8}}$              & $C_{{10}}$           & $C_{{11}}$ \\
\hline
{\cite{pauling1935van}}   & 6.49903          & 124.399        & 1135.21       & \\
{\cite{Choy2000}} & 6.4990267        & 124.3990835    & 1135.2140398       & \\
\hline
{This work} & 6.49902670540 {\hspace{2mm}\cite{lrsBIBhi}} & 124.399083 & 3285.82841 & -3474.89803 \\
{\cite{ovsiannikov2005regular}} & 6.499026705406    & 124.3990835836   & 3285.828414967     & -3474.898037882 \\
    \hline
  \end{tabular}
\end{center}
\caption{Comparison of the coefficients $C_6$ to $C_{11}$ between various
papers and the basis states method and our method based on numerical solutions of boundary value problems of type~\eqref{eqn:TMain} in tensor products of Laguerre functions up to degree 11 (for which round-off error is suitably controlled). These results agree at  least to 9 digits with the results in~\cite{cebim2009high,mitroy2005higher,ovsiannikov2005regular,thakkar1988higher,yan1999third}.}
\end{table}

The discrepancy between the
Choy and Pauling--Beach results (who agree to the digits given)
and the other methods for $C_{10}$ has the following origin.  According to \eqref{eq:expC_10},
we have
$$
C_{10}= 2 \beta^{(5)}_{(1,3)} t^{(5)}_{(1,3)} +  \beta^{(5)}_{(2,2)} t^{(5)}_{(2,2)}.
$$
It appears that Choy in~\cite{Choy2000}, who also was guided by~\cite{ref:SlaKervdWgas},  only computed the second term
\begin{equation}
 \beta^{(5)}_{(2,2)} t^{(5)}_{(2,2)} =1135.214\dots
\end{equation}

\begin{table}[H]
\begin{center}
  \begin{tabular}{|c|l|l|l|l|}
    \hline
    Method            & $C_{{12}}$           & $C_{{13}}$            & $C_{{14}}$   &     $C_{{15}}$    \\
    \hline
{This work} & 122727.608 & -326986.924 & 6361736.04 &  -28395580.6 \\
   {\cite{ovsiannikov2005regular}}  & 122727.6087007     & -326986.9240441     & 6361736.045092  & -28395580.6   \\
    \hline
  \end{tabular}
\end{center}
\caption{Comparison of the $C_n$ coefficients $C_{12}$ to $C_{15}$  between~\cite{ovsiannikov2005regular} and our method based on numerical solutions of boundary value problems of type~\eqref{eqn:TMain} in tensor products of Laguerre functions up to degree 11 (for which round-off error is suitably controlled). These results agree at  least to 9 digits with the results in~\cite{mitroy2005higher,ovsiannikov2005regular,yan1999third} for $C_{13}$ and $C_{15}$ and~\cite{mitroy2005higher,ovsiannikov2005regular} for $C_{12}$ and $C_{14}$.}
\end{table}

\begin{table}[H]
\begin{center}
  \begin{tabular}{|c|l|l|l|l|}
    \hline
    Method            & $C_{{16}}$           & $C_{{17}} \times 10^{-9}$            & $C_{{18}} \times 10^{-10} $   &     $C_{{19}} \times 10^{-11}$    \\
    \hline
    {This work} & 441205192 & -2.73928165 & 3.93524773 & -3.07082459 \\
    {\cite{ovsiannikov2005regular}}  & 441205192.2739     & -2.739281653140  & 3.93524773346 & -3.07082459389 \\
    \hline
  \end{tabular}
\end{center}
\caption{Comparison of the $C_n$ coefficients $C_{16}$ to $C_{19}$  between~\cite{ovsiannikov2005regular} and our method based on numerical solutions of boundary value problems of type~\eqref{eqn:TMain} in tensor products of Laguerre functions up to degree 11 (for which round-off error is suitably controlled). These results agree at  least to 9 digits with the results in~\cite{mitroy2005higher,ovsiannikov2005regular}.}
\end{table}

\subsection{Role of continuous spectra in sum-over-state formulae}

It follows from \eqref{eq:eqphi3}, \eqref{eq:normphi3} and \eqref{eq:expC_6} that the leading coefficient $C_6$ of the van der Waals expansion can be written as
$$
C_6 = \langle \Bthree\phi_0, (H_0-\lambda_0)_{\phi_0^\perp}^{-1} \Bthree\phi_0 \rangle,
$$
where $(H_0-\lambda_0)_{\phi_0^\perp}^{-1}$ is the inverse of the restriction to $H_0-\lambda_0$ to the invariant subspace $\phi_0^\perp$ (which is well-defined since $\lambda_0$ is a non-degenerate eigenvalue of the self-adjoint operator $H_0$. This expression is sometimes wrongly rewritten as a sum-over-state formula
\begin{equation}\label{eq:C6sos}
C_6 = \sum_{j} \frac{|\langle \psi_j, \Bthree\psi_0\rangle|^2}{E_j-E_0} \quad \mbox{(wrong)},
\end{equation}
with $\psi_0:=\phi_0$, $E_0:=\lambda_0=-1$, where the $\psi_j$'s form an orthonormal family of excited states of $H_0$ associated with the eigenvalues $E_j$. This is not possible because $H_0$ has a non-empty continuous spectrum. Using \eqref{eq:C6sos} with a sum running over the excited states of $H_0$ (and omitting an integral over the scattering states of $H_0$) leads to an error that we are going to estimate.
We have
$$
C_6' := \sum_{j} \frac{|\langle \psi_j, \Bthree\psi_0\rangle|^2}{E_j-E_0} = - \langle \Bthree \phi_0, \phi_{3,\rm pp} \rangle,
$$
where $\phi_{3,\rm pp}$ is the projection of $\phi_3$ on the Hilbert space spanned by the eigenfunctions of $H_0$. Recall that the eigenvalues and associated eigenfunctions of the  hydrogen atom Hamiltonian $h_0:=- \frac{1}{2}\Delta - \frac{1}{|\mathbf{r}|}$, which is a self-adjoint operator on $L^2(\R^3)$, are of the form
\begin{equation}
\varepsilon_n = -\frac{1}{2n^2}, \quad  \psi_{n,l,m}(\mathbf{r}) = \varphi_{n,l}(r){Y}_l^m(\theta, \phi), \quad n \in \N^\ast, \quad 0 \le l \le n - 1, \quad -l \le m \le l,
\end{equation}
with
\begin{equation}\label{eqn:singleHydrogenAtomRadialPart}
 \varphi_{n,1} =  \sqrt{\left(\frac{2}{n}\right)^3 \frac{(n-2)!}{2n(n+1)!}}
  \left(\frac{2r}{n}\right) L_{n-2}^{(3)}\left(\frac{2r}{n}\right) e^{-r/n},
\end{equation}
where the associated Laguerre polynomials of the second type $L_n^{(m)}$, $n,m \in \N$, are defined from the Laguerre polynomial $L_n$ and are given by
\begin{equation}
  L^{(m)}_n (x) = (-1)^{m} \frac{\dd^m L_{n+m}}{\dd x^m}(x)
  = \frac 1 {n!} \sum_{k=0}^n \frac{n!}{k!} \begin{pmatrix} n+ m \\ n-k \end{pmatrix} (-x)^k.
\end{equation}
The eigenvalues and associated eigenfunctions of $H_0$ are therefore given by
$$
{\cal E}_{n_1,n_2}=\varepsilon_{n_1}+ \varepsilon_{n_2}= -\frac{1}{2n_1^2}-\frac{1}{2n_2^2}, \qquad \Psi_{n_1,l_1,m_1;n_2,l_2,m_2} = \psi_{n_1,l_1,m_1} \otimes \psi_{n_2,l_2,m_2},
$$
for $n_j \in \N^\ast$, $0 \le l_j \le n_j - 1$,  $-l_j \le m_j \le l_j$. Note that $\phi_0=\Psi_{1,0,0;1,0,0}$.
We therefore have
\begin{align*}
C_6' &=  \sum_{(n_1,n_2) \in (\N^* \times \N^*)\setminus \{(1,1)\}} \sum_{l_1=0}^{n_1-1} \sum_{l_2=0}^{n_2-1} \sum_{m_1=-l_1}^{l_1}
 \sum_{m_2=-l_2}^{l_2}  \frac{|\langle \Psi_{n_1,l_1,m_1;n_2,l_2,m_2}, \Bthree\psi_0\rangle|^2}{\varepsilon_{n_1}+ \varepsilon_{n_2}+1},
 \end{align*}
Using~\eqref{eq:B3} and the $L^2({\mathbb S}^2)$-orthonormality of the spherical harmonics, we get
$$
\langle \Psi_{n_1,l_1,m_1;n_2,l_2,m_2}, \Bthree\psi_0\rangle = \pi^{-1} S_{n_1} S_{n_2} \sum_{m=-1}^1 G_{\rm c}(1,1,m) \delta_{l_1,1} \delta_{l_2,1} \delta_{m,m_1} \delta_{{-m},m_2},
$$
where
\begin{equation}\label{eq:Sn}
S_n:=\int_0^{+\infty} r^3 e^{-r} \phi_{n,1}(r) \, \dd r = 8 n^3 \frac{(n-1)^{n-3}}{(n+1)^{n + 3}}   \sqrt{\frac{(n+1)!}{(n-2)!}} .
\end{equation}
The latter expression is derived in Appendix~\ref{sec:Sn}. We finally obtain
\begin{equation}\label{eq:tildeC6}
 C_6'  = \pi^{-2} \sum_{m=-1}^1 |G_{\rm c}(1,1,m)|^2 \sum_{n_1,n_2 \ge 2} \frac{S_{n_1}^2S_{n_2}^2}{1- \frac{1}{2n_1^2}- \frac{1}{2n_2^2}} = \frac{32}3 \sum_{n_1,n_2 \ge 2} \frac{S_{n_1}^2S_{n_2}^2}{1- \frac{1}{2n_1^2}- \frac{1}{2n_2^2}} .
\end{equation}
Summing up the terms of the above series for $n_1,n_2 \le 300$ (note that $S_n \sim_{n \to \infty} \frac{8}{e^2n^{3/2}}$), we obtain the approximate value
$$
 C_6'  \simeq   3.923  
$$
which shows that the continuous spectrum plays a major role in the sum-over-state evaluation of the $C_6$ coefficient of the hydrogen molecule (recall that $C_6 \simeq 6.499$).

\section{Proofs}
\label{sec:proofs}

We now establish the results stated above, starting from Lemma~\ref{lem:f_to_T}.

\subsection{Proof of Lemma~\ref{lem:f_to_T}}
\label{sec:proof_lemRS}

Recall that the Hydrogen atom Hamiltonian $h_0=- \frac{1}{2}\Delta - \frac{1}{|\mathbf{r}|}$  introduced in the previous section is a self-adjoint operator on $L^2(\R^3)$ with domain $H^2(\R^3)$, and that its ground state is non-degenerate:
$$
h_0\psi_{1,0,0} = - \frac 12 \psi_{1,0,0} \quad \mbox{with} \quad \psi_{1,0,0} = \varphi_{1,0}(r) {Y}_0^0(\theta,\phi) = \pi^{-1/2} e^{-r}, \quad \|\psi_{1,0,0}\|_{L^2(\R^3)}=1.
$$
Since $H_0=h_0 \otimes \1_{L^2(\R^3)} + \1_{L^2(\R^3)}  \otimes h_0$, $H_0$ is a self-adjoint operator on $L^2(\R^3 \times \R^3)$ with domain $H^2(\R^3 \times \R^3)$ and it also has a non-degenerate ground state
$$
H_0\phi_0= \lambda_0\phi_0 \quad \mbox{with} \quad \phi_0=\psi_{1,0,0} \otimes \psi_{1,0,0}=\pi^{-1} e^{-(r_1+r_2)}, \quad \|\phi_0\|_{L^2(\R^3\times\R^3)}=1 \quad \mbox{and} \quad \lambda_0=-1.
$$
Given $(\alpha,F) \in \R \times L^2(\R^3 \times \R^3)$, the problem consisting of seeking $(\mu,\Psi) \in \R \times H^2(\R^3 \times \R^3)$ such that
\begin{equation}\label{eq:PRS}
(H_0-\lambda_0) \Psi = F-\mu \phi_0, \quad \langle \phi_0,\Psi \rangle = \alpha,
\end{equation}
is well-posed and its unique solution is given
$$
\Psi = (H_0-\lambda_0)|_{\phi_0^\perp}^{-1} \Pi_{\phi_0^\perp} F + \alpha \phi_0, \quad \mu= \langle \phi_0,F \rangle,
$$
where $(H_0-\lambda_0)|_{\phi_0^\perp}^{-1}$ is the inverse of $H_0-\lambda_0$ on the invariant subspace $\phi_0^\perp$ and $\Pi_{\phi_0^\perp} F := F-\langle \phi_0,F \rangle\phi_0$ the orthogonal projection of $F$ on  $\phi_0^\perp$. Consider the unitary map
$$
{\cal U} : L^2(\Omega) \otimes L^2({\mathbb S}^2) \otimes L^2({\mathbb S}^2) \to L^2(\R^3 \times \R^3) \equiv L^2(\R^3) \otimes L^2(\R^3)
$$
induced by the spherical coordinates defined for all $f \in L^2(\Omega)$, $l_1,l_2 \in \N$, $-l_j \le m_j \le l_j$ by
$$
({\cal U}(f \otimes s_1 \otimes s_2))(\br_1,\br_2) = \frac{f(|\br_1|,|\br_2|)}{|\br_1| \, |\br_2|} \, s_1\left( \frac{\br_1}{|\br_1|} \right) \, s_2\left( \frac{\br_2}{|\br_2|} \right).
$$
Since $({Y}_l^m)_{l \in \N, \, -l \le m \le l}$ is an orthonormal basis of $L^2({\mathbb S}^2)$, we have
$$
L^2(\Omega) \otimes L^2({\mathbb S}^2) \otimes L^2({\mathbb S}^2) = \bigoplus_{l_1,l_2 \in \N} \bigoplus_{m_1=-l_1}^{l_1} \bigoplus_{m_2=-l_2}^{l_2}  {\cal H}_{l_1,l_2}^{m_1,m_2}
$$
where ${\cal H}_{l_1,l_2}^{m_1,m_2} :=L^2(\Omega) \otimes \C Y_{l_1}^{m_1} \otimes \C Y_{l_2}^{m_2}$. It follows from classical results for Schr\"odinger operators on $L^2(\R^3)$ with central potentials (see e.g.~\cite[Section XIII.3.B]{reed1979methods}) that each ${\cal H}_{l_1,l_2}^{m_1,m_2}$ is an invariant subspace for ${\cal U}^* H_0 {\cal U}$ and that
$$
{\cal U}^* H_0 {\cal U}|_{{\cal H}_{l_1,l_2}^{m_1,m_2}}
=  H_{l_1,l_2} \otimes  \1_{\C { Y}_{l_1}^{m_1}} \otimes  \1_{\C { Y}_{l_2}^{m_2}},
$$
where the expression of $H_{l_1,l_2}$ can be derived by adapted the arguments in~\cite[Section 3]{lrsBIBhi}, that we do not detail here for the sake of brevity:
$H_{l_1,l_2}$ is the self-adjoint operator on $L^2(\Omega)$ with form domain $H_1^0(\Omega)$ defined by
\begin{equation}\label{eq:Hl1l2}
H_{l_1,l_2}  = - \frac 12 \Delta + \kappa_{l_1}(r_1)+\kappa_{l_2}(r_2) + \lambda_0.
\end{equation}
Note that the operator $H_{l_1,l_2}$ on $L^2(\Omega) \equiv L^2(0,+\infty) \otimes L^2(0,+\infty)$ can itself be decomposed as
$$
H_{l_1,l_2}=h_{l_1} \otimes \1_{L^2(0,+\infty)} +\1_{L^2(0,+\infty)} \otimes h_{l_2} \ge - \frac 1{2(l_1+1)^2}- \frac 1{2(l_2+1)^2},
$$
where for each $l \in \N$, $h_{l}$ is the self-adjoint operator on $L^2(0,+\infty)$ with form domain $H^1_0(0,+\infty)$ defined by
$$
h_l := - \frac 12 \frac{d^2}{dr^2} + \frac{l ( l+1)}{2r^2} - \frac{1}{r} = - \frac 12 \frac{d^2}{dr^2}+\kappa_l - \frac 12.
$$
This well-known operator allows one to construct the bound-states of hydrogen atom with orbital quantum number $l$. It satisfies
$h_l \ge - \frac 1{2(l+1)^2}$ and its ground state eigenvalue $- \frac 1{2(l+1)^2}$ is non-degenerate. It follows from this bound that
\begin{equation}\label{eq:boundHl1l22}
H_{l_1,l_2}-\lambda_0 = H_{l_1,l_2}+1 \ge \frac 38 \quad \mbox{for all } (l_1,l_2) \in \mathbb{N}^2 \setminus \{(0,0\}.
\end{equation}
Choosing $\alpha=0$ in \eqref{eq:PRS} amounts to enforcing that the solution $\Psi$ is in $\phi_0^\perp$. Taking $\alpha=0$ and $F= \frac{f(r_1,r_2)}{r_1r_2} {Y}_{l_1}^{m_1}(\theta_1, \phi_1) {Y}_{l_2}^{m_2}(\theta_2, \phi_2) = {\cal U}(f \otimes { Y}_{l_1}^{m_1} \otimes { Y}_{l_2}^{m_2})$, with $f \in L^2(\Omega)$, it follows that \eqref{eq:eq_res} has a unique solution in $H^2(\R^3 \times \R^3)$ if and only if $\mu=\langle \phi_0, F\rangle=0$, that is
$$
\delta_{(l_1,l_2)=(0,0)} \int_\Omega f(r_1,r_2) e^{-(r_1+r_2)} r_1 r_2 \, dr_1 \, dr_2 =0,
$$
in which case the solution is given by $\Psi={\cal U}(T \otimes { Y}_{l_1}^{m_1} \otimes { Y}_{l_2}^{m_2})$ where
\begin{align*}
&T:= (H_{l_1,l_2}-\lambda_0)^{-1} f  & \mbox{if } (l_1,l_2) \neq (0,0), \\
&T:=(H_{0,0}-\lambda_0)|_{(r_1r_2e^{-(r_1+r_2)})^\perp}^{-1} f  & \mbox{if } (l_1,l_2) = (0,0).
\end{align*}
We therefore have
$$
\psi =   \frac{T(r_1,r_2)}{r_1r_2} {Y}_{l_1}^{m_1}(\theta_1, \phi_1) {Y}_{l_2}^{m_2}(\theta_2, \phi_2),
$$
where $T$ is the unique solution to \eqref{eqn:TMain} in $H^1_0(\Omega)$ if $(l_1,l_2) \neq (0,0)$ and $T$ is the unique solution to \eqref{eqn:TMain} in $\widetilde H^1_0(\Omega)=H^1_0(\Omega) \cap (r_1r_2e^{-(r_1+r_2)})^\perp$ if $(l_1,l_2)=0$.

\medskip

The fact that if $f$ decays exponentially at infinity, then so does $T$, hence $\psi$, is a consequence of the following result, whose proof follows the same lines as in~\cite[Section 3.3]{lrsBIBhi} where this result is established for the special case when $(l_1,l_2)=(1,1)$ and $f=r_1^2r_2^2e^{-(r_1+r_2)}$.

\begin{lemma}\label{lem:exponentialdecay}
If the function $f$ of \eqref{eqn:TMain} decays exponentially at infinity at a rate $\eta > 0$, in the sense that
\begin{equation}\label{eq:hypf}
 \Vert  e^{\eta ( r_1 + r_2)} f \Vert_{L^2 ( \Omega)} < \infty,
\end{equation}
then the unique solution $T$ of \eqref{eqn:TMain} also decays exponentially at infinity.
More precisely, for all $0 \le \alpha < \sqrt{3/8}$,  there exists a constant $C_\alpha \in \R_+$ such that for all $\eta > \alpha$ and all $f \in L^2(\Omega)$ satisfying \eqref{eq:hypf}, it holds
\begin{equation}
\Vert  e^{\alpha (r_1+ r_2)} T \Vert_{H^1 ( \Omega)} \le C_\alpha \Vert  e^{\eta ( r_1 + r_2)} f \Vert_{L^2 ( \Omega)}.
\end{equation}
\end{lemma}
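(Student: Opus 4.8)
The plan is to use the classical Agmon-type exponential-weight argument, conjugating the operator $H_{l_1,l_2}-\lambda_0$ by the bounded multiplication operator $e^{w}$ with $w = w(r_1,r_2) = \alpha(r_1+r_2)$ (or rather a bounded, smoothed-off version $w_\delta$ of it, cut off at large radius, so that all manipulations are a priori justified and one can let the cutoff go to infinity at the end). Writing $T_\delta := e^{w_\delta} T$, a direct computation gives
\begin{equation}\label{eq:conjug-plan}
e^{w_\delta}\left( -\tfrac12\Delta + \kappa_{l_1}(r_1)+\kappa_{l_2}(r_2)+1 \right) e^{-w_\delta} = -\tfrac12\Delta + \nabla w_\delta \cdot \nabla + \tfrac12 \Delta w_\delta - \tfrac12 |\nabla w_\delta|^2 + \kappa_{l_1}(r_1)+\kappa_{l_2}(r_2)+1 .
\end{equation}
Testing the equation against $T_\delta \in H^1_0(\Omega)$ and integrating by parts, the first-order term $\nabla w_\delta \cdot \nabla$ contributes something that is antisymmetric up to $\tfrac12\Delta w_\delta$, so it cancels against part of the zeroth-order terms, and one is left with a quadratic form of the shape $\frac12\|\nabla T_\delta\|^2 + \langle (\kappa_{l_1}+\kappa_{l_2}+1 - |\nabla w_\delta|^2) T_\delta, T_\delta\rangle$ on the left, and $\langle e^{w_\delta} f, T_\delta\rangle$ on the right.

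The key quantitative input is the bound \eqref{eq:boundHl1l22}, namely $H_{l_1,l_2}+1 \ge \tfrac38$ uniformly in $(l_1,l_2)\neq(0,0)$ (and in fact for $(0,0)$ too on the orthogonal complement one still has a gap, which handles that case with the same argument applied on $\widetilde{H^1_0}(\Omega)$). Since $|\nabla w_\delta|^2 \le \alpha^2 \cdot 2$ — more precisely one arranges $|\nabla w_\delta|^2 \le 2\alpha^2$ with $2\alpha^2 < 3/4$, i.e. $\alpha < \sqrt{3/8}$ — the operator $H_{l_1,l_2}+1 - |\nabla w_\delta|^2$ remains bounded below by a fixed positive constant $c(\alpha) = \tfrac38 - \alpha^2 \cdot(\text{something}) > 0$. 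Hence the left-hand quadratic form controls $\tfrac12\|\nabla T_\delta\|_{L^2}^2 + c(\alpha)\|T_\delta\|_{L^2}^2$ from below; combined with Cauchy--Schwarz on the right-hand side $\langle e^{w_\delta}f, T_\delta\rangle \le \|e^{w_\delta}f\|_{L^2}\|T_\delta\|_{L^2} \le \|e^{\eta(r_1+r_2)}f\|_{L^2}\|T_\delta\|_{L^2}$ (using $w_\delta \le \eta(r_1+r_2)$ since $\alpha < \eta$ and the cutoff only makes $w_\delta$ smaller), a standard absorption argument yields $\|T_\delta\|_{H^1(\Omega)} \le C_\alpha \|e^{\eta(r_1+r_2)}f\|_{L^2(\Omega)}$ with $C_\alpha$ independent of $\delta$. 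Letting $\delta\to 0$ (Fatou / monotone convergence, since $w_\delta \uparrow \alpha(r_1+r_2)$) gives the claimed bound on $\|e^{\alpha(r_1+r_2)}T\|_{H^1(\Omega)}$.

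The main obstacle is not conceptual but the bookkeeping needed to make the conjugation rigorous: one cannot directly multiply by the unbounded weight $e^{\alpha(r_1+r_2)}$ without first knowing $T$ decays, so the cutoff $w_\delta$ is essential, and one must check that all the integration-by-parts identities hold on $H^1_0(\Omega)$ with the singular potentials $\kappa_{l_i}(r_i) \sim 1/r_i^2$ near the axes $r_i = 0$ — this is where the form domain $H^1_0(\Omega)$ (Dirichlet conditions, so $T$ vanishes on $\partial\Omega$) and Hardy's inequality enter, exactly as in~\cite[Section 3.3]{lrsBIBhi}. The uniformity in $(l_1,l_2)$ is automatic because the lower bound $\tfrac38$ is uniform and the weight $w_\delta$ does not depend on $(l_1,l_2)$; one just needs to note that the extra positive angular-momentum terms $\tfrac{l_i(l_i+1)}{2r_i^2}$ only help. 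Finally, the passage from the bound on $T$ to the bounds \eqref{eq:boundL2exp}--\eqref{eq:boundH1exp} on $\psi$ is routine using the unitary $\mathcal U$ and the identity $|\nabla_{\br_i}\psi|^2$ expressed in spherical coordinates, which produces the factor $(1+4l_1(l_1+1)+4l_2(l_2+1))^{1/2}$ from the angular part of the gradient.
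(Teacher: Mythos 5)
Your proposal is correct and rests on exactly the same two pillars as the paper's proof: the conjugation identity for the linear weight $\alpha(r_1+r_2)$ (whose first-order part is antisymmetric and therefore drops out of the diagonal quadratic form) and the uniform lower bound $H_{l_1,l_2}-\lambda_0\ge \tfrac38$ of \eqref{eq:boundHl1l22}, with $\alpha^2<\tfrac38$ guaranteeing that the conjugated form remains coercive, uniformly in $(l_1,l_2)$. Where you differ is in how the a priori justification is organized. You follow the classical Agmon route: truncate the weight to $w_\delta$, test the weak formulation of \eqref{eqn:TMain} against $e^{2w_\delta}T\in H^1_0(\Omega)$, absorb, and pass to the limit $\delta\to 0$ by monotone convergence (for the $L^2$ part) and weak $H^1$ compactness plus lower semicontinuity (for the gradient part). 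The paper instead introduces the \emph{non-symmetric} bilinear form $a_\alpha(u,v)=a(u,v)-\alpha\int_\Omega u(\partial_{r_1}v+\partial_{r_2}v)-\alpha^2\int_\Omega uv$, checks it is continuous and coercive on $H^1_0(\Omega)$, solves $a_\alpha(v,\cdot)=\int_\Omega e^{\alpha(r_1+r_2)}f\,\cdot$ by Lax--Milgram (legitimate with no cutoff, since $e^{\alpha(r_1+r_2)}f\in L^2(\Omega)$ by hypothesis as $\alpha<\eta$), and then identifies $v=e^{\alpha(r_1+r_2)}T$ through the algebraic identity $a_\alpha(e^{\alpha(r_1+r_2)}u,w)=a(u,e^{\alpha(r_1+r_2)}w)$ and uniqueness of $T$. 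The paper's route buys freedom from all cutoff bookkeeping; yours buys freedom from having to set up and verify coercivity of a non-symmetric form, at the modest cost of the limiting argument. Your treatment of the $(0,0)$ case (same spectral gap $\tfrac38$ on the orthogonal complement, work in $\widetilde{H^1_0}(\Omega)$), of the uniformity in $(l_1,l_2)$, and of the passage from $T$ to $\psi$ all match the paper. One bookkeeping point to tidy: your displayed conjugated operator correctly carries $-\tfrac12|\nabla w_\delta|^2$, but the quadratic form in your text drops the $\tfrac12$; since $|\nabla w|^2=2\alpha^2$, the quantity actually subtracted from $a(T_\delta,T_\delta)$ is $\alpha^2\|T_\delta\|_{L^2}^2$, and the coercivity constant is $\tfrac38-\alpha^2$ — your condition ``$2\alpha^2<3/4$'' is equivalent to the correct $\alpha^2<\tfrac38$, so the threshold you land on is right, but the intermediate constants should be made consistent.
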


\begin{proof} We limit ourselves to the case when $(l_1,l_2) \neq (0,0)$. The special case $(l_1,l_2) = (0,0)$ can be dealt with similarly, by replacing $H^1_0(\Omega)$ by $\widetilde H^1_0(\Omega)$.
Let $a$ be the continuous bilinear form on $H^1_0(\Omega) \times H^1_0(\Omega)$ associated with the positive self-adjoint operator $H_{l_1,l_2}-\lambda_0$:
$$
\forall u,v \in H^1_0(\Omega), \quad a(u,v) = \frac 12  \int_\Omega \nabla u \cdot \nabla v + \int_\Omega (\kappa_{l_1}(r_1)+\kappa_{l_2}(r_2)) u(r_1,r_2) v(r_1,r_2) \, dr_1 \, dr_2.
$$
Recall that the continuity of $a$ can be shown directly (without using the fact that $H^1_0(\Omega)$ is the form domain of $H_{l_1l_2}$) as a straightforward consequence of the one-dimensional Hardy inequality
\begin{equation} \label{eq:Hardy}
\forall g \in H^1_0(0,+\infty), \quad \int_0^\infty (g(r)/r)^2 \dd r \leq 4 \int_0^\infty g' ( r) ^2 \, \dd r.
\end{equation}
It follows from \eqref{eq:boundHl1l22} that $a \ge \frac 38$ (in the sense of quadratic forms on $L^2(\Omega)$).
For $0 \le \alpha < \sqrt{3/8}$, we introduce the continuous bilinear form $a_\alpha$ on $H^1_0(\Omega) \times H^1_0(\Omega)$ defined by
$$
\forall u,v \in H^1_0(\Omega), \quad   a_\alpha (u,v) = a(u,v)
  	   - \int_\Omega \alpha u( \mathbf{r}) \left( \frac{\partial v}{\partial r_1} ( \mathbf{r} ) + \frac{\partial v}{\partial r_2} ( \mathbf{r} ) \right) \dd \mathbf{r}
	    - \int_\Omega \alpha^2 u (\mathbf{r}) v (\mathbf{r}) \dd \mathbf{r},
 $$
 for which
 $$
 \forall v \in H^1_0(\Omega), \quad a_\alpha ( v, v) = a( v,v) -  \alpha^2 \|v\|_{L^2(\Omega)}^2
 \geq  \underbrace{\left( \frac 38- \alpha^2 \right)}_{>0} \|v\|_{L^2(\Omega)}^2.
$$
Using either the fact that $\kappa_l(r) \ge \frac 14$ (for $l \ge 1$) or the Hardy inequality~\eqref{eq:Hardy} (for $l=0$), we also have
$$
\forall v \in H^1_0(\Omega), \quad a_\alpha (v, v) = a( v,v) -  \alpha^2 \|v\|_{L^2(\Omega)}^2 \ge \frac 14 \int_{\Omega} |\nabla v|^2 - 2 \|v\|_{L^2}^2.
$$
Since $a \ge \frac 38$ and $a_\alpha \ge \left( \frac 38- \alpha^2 \right) > 0$, the above bound implies that $a$ and $a_\alpha$ are both continuous and coercive on $H^1_0(\Omega)$. The function $T \in H^1_0(\Omega)$ solution to \eqref{eqn:TMain} is also the unique solution to the variational equation
$$
\forall w \in H^1_0(\Omega), \quad a(T,w) = \int_\Omega fw.
$$
Proceeding as in~\cite[Section 3.3]{lrsBIBhi}, we obtain that for all $u \in H^1_0(\Omega)$ such that $e^{\alpha (r_1+r_2)}u \in H^1_0(\Omega)$ and $w \in C^\infty_{\rm c}(\Omega)$, we have
\begin{equation}\label{eq:a_alpha}
a_\alpha(e^{\alpha (r_1+r_2)}u,w)=a(u, e^{\alpha (r_1+r_2)}w).
\end{equation}
Consider now $f \in L^2(\Omega)$ satisfying \eqref{eq:hypf} for some $\eta >  \alpha$. The function
$e^{\alpha(r_1+r_2)}f$ is in $L^2(\Omega)$, so that the problem of finding $v \in H^1(\Omega)$ such that
$$
\forall w \in H^1_0(\Omega), \quad a_\alpha(v,w)=\int_{\Omega} e^{\alpha(r_1+r_2)}f w
$$
has a unique solution $v$, satisfying $\|v\|_{H^1(\Omega)} \le C_\alpha \|e^{\alpha(r_1+r_2)}f\|_{L^2(\Omega)} \le C_\alpha \|e^{\eta(r_1+r_2)}f\|_{L^2(\Omega)}$, where $C_\alpha \ge 1$ is the ratio between the continuity constant and the coercivity constant of $a_\alpha$. Let $u=e^{-\alpha(r_1+r_2)}v \in H^1_0(\Omega)$. In view of \eqref{eq:a_alpha}, we have
$$
\forall w \in C^\infty_{\rm c}(\Omega), \quad a(u,e^{\alpha(r_1+r_2)}w)=a_\alpha(v,w)=
\int_{\Omega} e^{\alpha(r_1+r_2)}f w = a(T,e^{\alpha(r_1+r_2)}w).
$$
Hence, $T=u$ and $\|e^{\alpha(r_1+r_2)} T\|_{H^1(\Omega)} = \|e^{\alpha(r_1+r_2)} u \|_{H^1(\Omega)} = \|v\|_{H^1(\Omega)} \le C_\alpha \|e^{\eta(r_1+r_2)}f\|_{L^2(\Omega)}$.
\end{proof}

\medskip

\noindent
As a consequence, we have
\begin{align*}
\|e^{\alpha(|\br_1|+|\br_2|)} \psi\|_{L^2(\Rsix)} &= \|e^{\alpha(r_1+r_2)} T\|_{L^2(\Omega)} \le \|e^{\alpha(r_1+r_2)} T\|_{H^1(\Omega)}  \\
&\le C_\alpha \|e^{\eta(r_1+r_2)}f\|_{L^2(\Omega)} = C_\alpha \|e^{\eta(|\br_1|+|\br_2|)}F\|_{L^2(\R^6)},
\end{align*}
which proves \eqref{eq:boundH1exp}. In addition, a simple calculation using \eqref{eq:Hardy} shows that for all $g \in H^1_0(\Omega)$
\begin{align*}
\left\| \frac{g}{r_1r_2} \otimes Y_{l_1}^{m_1} \otimes Y_{l_2}^{m_2} \right\|_{H^1(\Rsix)}^2
&=  \|g\|_{H^1(\Omega)}^2 + l_1(l_1+1) \left\| \frac{g}{r_1} \right\|_{L^2(\Omega)}^2 + l_2(l_2+1) \left\| \frac{g}{r_2} \right\|_{L^2(\Omega)}^2 \\
&\le (1+4 l_1(l_1+1) +4 l_2(l_2+1) ) \|g\|_{H^1}^2,
\end{align*}
yielding
\begin{align*}
\|e^{\alpha(|\br_1|+|\br_2|)} \psi\|_{H^1(\Rsix)} & \le  (1+4 l_1(l_1+1) +4 l_2(l_2+1))^{1/2} \|e^{\alpha(r_1+r_2)}T\|_{H^1(\Omega)} \\
&\le C_\alpha
(1+4 l_1(l_1+1) +4 l_2(l_2+1))^{1/2} \|e^{\eta(|\br_1|+|\br_2|)}F\|_{L^2(\Omega)}.
\end{align*}
Lastly, since $H_{l_1,l_2}$ is a real operator in the sense that $\overline{H_{l_1,l_2}\phi}=H_{l_1,l_2}\overline{\phi}$ for all $\phi \in D(H_{l_1,l_2})$, it is obvious that $T$ is real-valued, whenever $f$ is.

\subsection{Proof of Lemma~\ref{lem:PsiNEqnAnyOrder} and Theorem~\ref{thm:phin}}
\label{sec:proof_lem1}

We have seen in the previous section that for each $(\alpha,F) \in \R \times L^2(\R^3 \times \R^3)$, \eqref{eq:PRS} has a unique solution $(\mu,\psi)$ in $\R \times H^2(\R^3 \times \R^3)$. For $n=1$, we have
$$
 (H_0 - \lambda_0) \phi_1=-C_1 \phi_0, \quad \biginner{\phi_0}{\phi_1}=0,
 $$
 and it is clear that $(C_1,\phi_1)=(0,0)$ is {\it a} solution, hence {\it the} solution, to this system. Likewise, for $n=2$, we have
 $$
 (H_0 - \lambda_0) \phi_2=-C_1\phi_1 - C_2 \phi_0 = -C_2\phi_2, \quad \biginner{\phi_0}{\phi_2}=-\frac 12 \biginner{\phi_1}{\phi_1}=0,
 $$
 so that $(C_2,\phi_2)=(0,0)$. To prove that the Rayleigh--Schr\"odinger triangular system \eqref{eqn:PsiNEqnAnyOrderH0}-\eqref{eqn:PsiNEqnAnyOrderPsi0} is well-posed and that $\phi_n$ is of the form~\eqref{eqn:psiiform},  we proceed by induction on $n$.
 It is proven in~\cite{lrsBIBhi} that for $n=3$,
$$
 \phi_3 =   \frac{T^{(3)}_{(1,1)}( r_1, r_2)}{r_1 r_2} \sum_{m=-1}^1 \alpha^{(3)}_{(1,1,m)} {Y}_{1}^{m} (\theta_1, \phi_1) {Y}_{1}^{-m}  ( \theta_2, \phi_2) ,
 $$
 with $\alpha^{(3)}_{(1,1,m)} =-\pi G_{\rm c}(1,1,m)$ and $\| T^{(3)}_{(1,1)}( r_1, r_2) e^{\eta^3_{1,1}(r_1+r_2)} \|_{H_1(\Omega)}=:C^3_{1,1} < \infty$. Let ${\cal L}_3=\{(1,1)\}$ and assume that for some $n \ge 3$ the following recursion hypotheses are satisfied (this is the case for $n=3$): for all $3 \le k \le n$,
\begin{equation}\label{eqn:psiiform3rec}
 \phi_k = \sum_{(l_1,l_2) \in {\cal L}_k}  \frac{T^{(k)}_{(l_1,l_2)}( r_1, r_2)}{r_1 r_2} \left(  \sum_{m=-\min(l_1,l_2)}^{\min(l_1,l_2)} \alpha^{(k)}_{(l_1,l_2,m)} {Y}_{l_1}^{m} (\theta_1, \phi_1) {Y}_{l_2}^{-m}  ( \theta_2, \phi_2) \right),
\end{equation}
for some finite set ${\cal L}_k \subset \N^2$ with cardinality $N_k<\infty$, where $T^{(k)}_{(l_1,l_2)}$ is the unique solution to \eqref{eqn:TMain} in $H^1(\Omega)$ (or in $\widetilde H^1(\Omega)$ if $l_1=l_2=0$) for $f=f^{(k)}_{(l_1,l_2)} \in L^2(\Omega)$ and that for all $(l_1,l_2) \in {\cal L}_k$ there exists $\eta^k_{l_1,l_2}>0$ such that
\begin{equation}\label{eq:decayTj}
\| T^{(k)}_{(l_1,l_2)}( r_1, r_2) e^{\eta^k_{l_1,l_2}(r_1+r_2)} \|_{H^1(\Omega)}=:C^k_{l_1,l_2} < \infty.
\end{equation}
From \eqref{eqnBic}, the fact that $\phi_1=\phi_2=0$ and the recursion hypothesis~\eqref{eqn:psiiform3rec}, we obtain that for all $3 \le k \le n+1$,
\begin{align}
\mathcal{B}^{(k)} \phi_{n+1-k} =   &\sum_{\substack{ l_1 + l_2 = k-1 \\ l_1, l_2 \neq 0 } } \sum_{(l_1',l_2') \in {\cal L}_{n+1-k}}
\sum_{m=-\min(l_1,l_2)}^{\min(l_1,l_2)}      \sum_{m'=-\min(l_1',l_2')}^{\min(l_1',l_2')}   \nonumber \\
& \qquad {\cal U} \left( f_{n-k+1,l_1,l_1',l_2,l_2'}^{m,m'}  \otimes  {Y}_{l_1}^m {Y}_{l_1'}^{m'} \otimes
{Y}_{l_2}^{-m} {Y}_{l_2'}^{-m'} \right), \label{eq:Bphi}
\end{align}
where
$$
 f_{j,l_1,l_1',l_2,l_2'}^{m,m'}(r_1,r_2)  :=  G_{\rm c}(l_1,l_2, m)  r_1^{l_1}r_2^{l_2}
 \alpha^{(j)}_{(l_1',l_2',m')} T^{(j)}_{(l_1',l_2')}(r_1, r_2).
$$
In addition, we have
\begin{equation}\label{eq:prod_Ylm}
 {Y}_{l}^m {Y}_{l'}^{m'} = \sum_{l''=|l-l'|}^{l+l'} \zeta_{l,l',l''}^{m,m'} {Y}_{l''}^{m+m'} \quad \mbox{where} \quad  \zeta_{l,l',l''}=0 \mbox{ if } l+l'+l'' \notin 2\N,
\end{equation}
where the coefficients $\zeta_{l,l',l''}^{m,m'} \in \R$ can be computed explicitly using Wigner's 3-j symbols~\cite[p. 146]{brinki968}:
\[
\zeta_{l,l',l''}^{m, m'} = (-1)^{m + m'}
\sqrt{\frac{(2l+1) (2l' + 1) (2l'' +1)}{4\pi}}
\begin{pmatrix}
  l & l' & l'' \\
  0   & 0   & 0
\end{pmatrix}
\begin{pmatrix}
  l & l' & l'' \\
  m & m' & -m-m'
\end{pmatrix}.
\]
This implies that
\begin{align}
&- \sum_{k = 3}^{n+1} \mathcal{B}^{(k)} \phi_{n+1-k},
      - \sum_{k = 1}^{n+1} C_k \phi_{n+1-k}  \nonumber \\
      &\qquad \qquad =
       \sum_{(l_1,l_2) \in {\cal L}_{n+1}}  \frac{f^{(n+1)}_{(l_1,l_2)}( r_1, r_2)}{r_1 r_2} \left(  \sum_{m=-\min(l_1,l_2)}^{\min(l_1,l_2)} \alpha^{(k)}_{(l_1,l_2,m)} {Y}_{l_1}^{m} (\theta_1, \phi_1) {Y}_{l_2}^{-m}  ( \theta_2, \phi_2) \right), \label{eq:RHS_RS}
\end{align}
for some ${\cal L}_{n+1} \subset \N^2$ with finite cardinality, where the $f^{(n+1)}_{l_1,l_2}$'s are linear combinations of the functions $r_1^{l_1}r_2^{l_2}T^{(j)}_{(l_1',l_2')} \in L^2(\Omega)$, $3 \le j \le n$, $l_1',l_2' \in {\cal L}_j$, $l_1+l_2+j \le n+1$, and therefore satisfy in view of \eqref{eq:decayTj}
\begin{equation}\label{eq:decayfj}
\| f^{(n+1)}_{(l_1,l_2)}( r_1, r_2) e^{\xi^{n+1}_{l_1,l_2}(r_1+r_2)} \|_{H_1(\Omega)} < \infty
\end{equation}
for some $\xi^{n+1}_{l_1,l_2} > 0$. Therefore the problem consisting in seeking $(C_{n+1},\phi_{n+1}) \in \R \times H^2(\R^3 \times \R^3)$ satisfying
$$
(H_0-\lambda_0)\phi_{n+1}=- \sum_{k = 3}^{n+1} \mathcal{B}^{(k)} \phi_{n+1-k},
      - \sum_{k = 1}^{n+1} C_k \phi_{n+1-k}, \qquad  \biginner{\phi_0}{\phi_{n+1}} = - \frac 12 \sum_{k = 1}^{n}
      \biginner{\phi_k}{\phi_{n+1-k}}
$$
is well-posed and we deduce from Lemma~\ref{lem:f_to_T} that
$$
\phi_{n+1} := \sum_{(l_1,l_2) \in {\cal L}_{n+1}}  \frac{T^{(n+1)}_{(l_1,l_2)}( r_1, r_2)}{r_1 r_2} \left(  \sum_{m=-\min(l_1,l_2)}^{\min(l_1,l_2)} \alpha^{(k)}_{(l_1,l_2,m)} {Y}_{l_1}^{m} (\theta_1, \phi_1) {Y}_{l_2}^{-m}  ( \theta_2, \phi_2) \right),
$$
where $T^{(n+1)}_{(l_1,l_2)}$ is the unique solution to \eqref{eqn:TMain} in $H^1(\Omega)$ (or in $\widetilde H^1(\Omega)$ if $l_1=l_2=0$) for $f=f^{(n+1)}_{(l_1,l_2)}$. In addition, it follows from \eqref{eq:decayfj} that~\eqref{eq:decayTj} holds true for $k=n+1$. Therefore, the Rayleigh--Schr\"odinger triangular system \eqref{eqn:PsiNEqnAnyOrderH0}-\eqref{eqn:PsiNEqnAnyOrderPsi0} is well-posed and the $T^{(n)}_{(l_1,l_2)}$'s decay exponentially at infinity in the sense of~\eqref{eq:decayTj}. From~\eqref{eqn:psiiform3rec} we obtain that for $\alpha_n=\min_{(l_1,l_2) \in {\cal L}_n}(\eta^n_{l_1,l_2})>0$, we have
\begin{align*}
\|e^{\alpha_n (|\br_1|+|\br_2|)} \phi_n \|_{H^1(\Rsix)} &\le C_n \sum_{(l_1,l_2) \in {\cal L}_n}  \| e^{\alpha_n (r_1+r_2)} T^{(n)}_{(l_1,l_2)} \|_{H^1(\Omega)} \\
&\le C_n \sum_{(l_1,l_2) \in {\cal L}_n}  \| e^{\eta^n_{(l_1,l_2)} (r_1+r_2)} T^{(n)}_{(l_1,l_2)}\|_{H^1(\Omega)} < \infty,
\end{align*}
for some $C_n \in \R_+$, so that $\phi_n$ decays exponentially at infinity in the sense of~\eqref{eq:exp_dec_phin}. Lastly, we infer from Wigner's $(2n+1)$ rule and the fact that $\phi_1=\phi_2=0$, that $C_n=0$ for $1 \le n \le 5$. This completes the proof of both Lemma~\ref{lem:PsiNEqnAnyOrder} and Theorem~\ref{thm:phin}.

\medskip

\noindent
Let us finally explain how to construct Table~\ref{tbl:lsphericalharmonicspsii}. We have already shown that ${\cal L}_3=\{(1,1)\}$, and from \eqref{eq:Bphi}-\eqref{eq:RHS_RS} and the fact that $\phi_1=\phi_2=0$, we see that
$$
{\cal L}_{n+1} \subset \left( \bigcup_{k=3}^{n-2} {\cal M}_{k,n+1-k} \right) \bigcup {\cal M}_{n+1,0} \bigcup \left( \bigcup_{3 \le k \le n-5 \; | \; C_{n+1-k} \neq 0} {\cal L}_k \right),
$$
where for $k,n \ge 3$,
\begin{align*}
{\cal M}_{k,0}&=\left\{ (l_1,l_2) \in \N^* \times \N^\ast \; | \; l_1+l_2=k-1 \right\} = \{(1,k-2),\cdots, (k-2,1)\}, \\
{\cal M}_{k,n} &= \big\{ (l_1,l_2) \in \N \times \N  \; | \; \exists (l_1',l_2') \in {\cal M}_{k,0}, \; \exists (l_1'',l_2'') \in {\cal L}_{n} \mbox{ s.t. } \\ & \qquad\qquad
|l_j'-l_j''| \le l_j \le l_j'+l_j'', \; l_j+l_j'+l_j'' \in 2 \N, \; j=1,2 \big\}.
\end{align*}
Consequently, we have
\begin{align*}
{\cal L}_4 &= {\cal M}_{4,0}; \\
{\cal L}_5 &= {\cal M}_{5,0}; \\
{\cal L}_6 &= {\cal M}_{3,3} \cup {\cal M}_{6,0} \quad \mbox{with} \quad  {\cal M}_{3,3} =\{(0,2;0,2)\}; \\
{\cal L}_7 &= {\cal M}_{3,4} \cup {\cal M}_{4,3} \cup {\cal M}_{7,0}  \quad \mbox{with} \quad   {\cal M}_{3,4} = {\cal M}_{4,3} = \{(0,2;1,3),(1,3;0,2)\}; \\
{\cal L}_8 &= {\cal M}_{3,5} \cup {\cal M}_{4,4} \cup {\cal M}_{5,3}  \cup {\cal M}_{8,0} \quad \mbox{with} \quad {\cal M}_{3,5} ={\cal M}_{5,3}= \{(0,2;2,4),(1,3;1,3),(2,4;0,2)\}, \\
&\qquad  {\cal M}_{4,4}=  \{(0,2;0,2,4), (0,2,4;0,2), (1,3;1,3)\}  \\
{\cal L}_9 &= {\cal M}_{3,6} \cup {\cal M}_{4,5} \cup {\cal M}_{5,4}  \cup {\cal M}_{6,3}  \cup {\cal M}_{9,0} \cup {\cal L}_3 \quad \mbox{with}  \\
&\qquad  {\cal M}_{6,3} \subsetneq {\cal M}_{3,6} =\{(0,2;3,5),(1,3;2,4),(2,4;1,3),(3,5;0,2), (1,3; 1,3)\}, \\
&\qquad {\cal M}_{4,5} ={\cal M}_{5,4}=\{(0,2;1,3,5),(1,3;0,2,4),(2,4;1,3),(1,3;2,4),(0,2,4;1,3),(1,3,5;0,2) \},
\end{align*}
where we recall that $(l_1,l_1';l_2,l_2')$ (resp. $(l_1,l_1';l_2,l_2',l_2'')$, $(l_1,l_1',l_1'';l_2,l_2')$) stands for the four (resp. six) pairs $(l_1,l_2)$, $(l_1',l_2)$, $(l_1,l_2')$, etc. After eliminating redundancies, we obtain Table~\ref{tbl:lsphericalharmonicspsii}.

\subsection{Proof of Theorem \ref{thm:treophmole} \label{sec:proof_treophmole}}

As in~\cite{lrsBIBhi}, we introduce the space
  \begin{equation} \label{eqn:symmspacerp}
  {\cal V}=\set{v\in L^2(\Rsix)}{ v(\bR_1,\bR_2)=v(\bR_2,\bR_1) \;\forall \bR_1,\bR_2\,\in\Rtre},
  \end{equation}
  the functions $\pe^{(n)} \in {\cal V} \cap H^2 (\Rsix)$ normalized in $L^2 ( \Rsix)$,
  \begin{equation}\label{eqn:normltestfn}
    \pe^{(n)} := \me^{(n)} \Te \left(\phi_\epsilon^{(n)}  \right) \quad \mbox{where} \quad
    \phi_\epsilon^{(n)}:=\phi_0 + \sum_{k = 3}^n \epsilon^{k} \phi_k
    \mbox{ and } \me^{(n)} =  \left\| \Te \left(\phi_\epsilon^{(n)} \right) \right\|_{L^2 ( \Rsix)}^{-1},
  \end{equation}
as well as the Rayleigh quotient
  \begin{equation}
    \mu_\epsilon^{(n)} = \inner{\pe^{(n)}}{H_\epsilon \pe^{(n)}}
  \end{equation}
  and the approximation
  $$
  \lambda^{(n)}_\epsilon = \lambda_0 - \sum_{k=6}^n C_n \epsilon^n
  $$
of $\lambda_\epsilon$.  When $\epsilon \to 0$, we have $\Te \left( \phi_0 \right) \to 1$ and therefore $m_\epsilon^{(n)} \to 1$. We know from~\cite[Section 2.4]{lrsBIBhi} that there exists a constant $C \in \R_+$ such that for $\epsilon > 0$ small enough
  $$
  \|\psi_\epsilon - \pe^{(3)}\|_{H^2(\Rsix)} \le C \epsilon^4, \quad |\lambda_\epsilon- \mu^{(3)}_\epsilon| \le C \epsilon^8, \quad \mbox{and} \quad |\lambda_\epsilon- \lambda^{(6)}_\epsilon| \le C \epsilon^7.
  $$
  It follows from Theorem~\ref{thm:phin} that the $\phi_n$'s are in $H^2(\Rsix)$. Since $\Te$ continuous on this space, we obtain that for all $n \ge 3$, there exists $c_n \in \R$, such that for $\epsilon > 0$ small enough
  $$
  \|\psi_\epsilon - \pe^{(n)}\|_{H^2(\Rsix)} \le c_n \epsilon^4.
  $$
  We infer from~\cite[Lemma 2.2 and Appendix A]{lrsBIBhi} that there exists a constant $C \in \R_+$ such that for all $n \ge 3$ there exists $\epsilon > 0$ such that for all  $0 < \epsilon \le \epsilon_n$,
    \begin{align} \label{eq:lambda_eps}
  |\lambda_\epsilon - \mu^{(n)}_\epsilon | &\le C \|H_\epsilon \psi^{(n)}_\epsilon - \mu^{(n)}_\epsilon\psi^{(n)}_\epsilon \|_{L^2(\Rsix)}^2, \\
  \|  \psi_\epsilon - \pe^{(n)} \|_{L^2(\Rsix)}&\le C \|H_\epsilon \psi^{(n)}_\epsilon - \mu^{(n)}_\epsilon \psi^{(n)}_\epsilon \|_{L^2(\Rsix)}
  \label{eq:lambda_eps2}
  \end{align}
  (the first estimate above follows from the Kato-Temple inequality~\cite{Kato}). To proceed further, we need to evaluate the $L^2$-norm of the residual $r_\epsilon^{(n)}:=H_\epsilon \psi^{(n)}_\epsilon - \mu^{(n)}_\epsilon \psi^{(n)}_\epsilon$.
 We have
$$
      H_\epsilon \pe^{(n)}
      = \me^{(n)} H_\epsilon \Te (\phi_\epsilon^{(n)}) =  \me^{(n)} \Te \left[ (H_0 + \Ve) \phi_\epsilon^{(n)})  \right]  = \me^{(n)} \Te \left[ (H_0 + \Ve) (\phi_0 + \sum_{k=3}^n \epsilon^{k} \phi_k) \right],
   $$
  and thus,
  \begin{equation*}
    \begin{split}
      r^{(n)}_\epsilon
      &= \me^{(n)} \Te \left[
        (H_0 + \Ve) \phi_\epsilon^{(n)}
        - \mu_\epsilon^{(n)} \phi_\epsilon^{(n)}
      \right] \\
      &= \me^{(n)} \Te \left[
        (H_0 + \Ve) (\phi_0 + \sum_{k=3}^n \epsilon^{k}\phi_k)
        - (\lambda_0 - \sum_{k=3}^n C_k \epsilon^{k} ) (\phi_0 + \sum_{k=3}^n \epsilon^{k}\phi_k)
        + (\lambda_\epsilon^{(n)} - \mu_\epsilon^{(n)})\phi_\epsilon^{(n)}
      \right] \\
      &= \me^{(n)} \Te \left[
        \left(H_0 + \sum_{k=3}^n \epsilon^{k} \mathcal{B}^{(k)} \right)
            (\phi_0 + \sum_{k=3}^n \epsilon^{k}\phi_k)
            - (\lambda_0 - \sum_{k=3}^n C_k \epsilon^{k} ) (\phi_0 + \sum_{k=3}^n \epsilon^{k}\phi_k)
        \right. \\ &\quad \left. \qquad\qquad
        + (\lambda_\epsilon^{(n)} - \mu_\epsilon^{(n)})\phi_\epsilon^{(n)} +
        (\Ve - \sum_{k=3}^n \epsilon^{k} \mathcal{B}^{(k)}) \phi_\epsilon^{(n)}
      \right].
    \end{split}
  \end{equation*}
Using~\eqref{eqn:PsiNEqnAnyOrderH0}, we get
  \begin{equation}\label{eqn:thmtrophmoleproof1}
    \begin{split}
      &(H_0 + \sum_{k=3}^n \epsilon^{k} \mathcal{B}^{(k)})(\phi_0 + \sum_{k=3}^n \epsilon^{k}\phi_k)
        - (\lambda_0 - \sum_{k=3}^n C_k \epsilon^{k} ) (\phi_0 + \sum_{k=3}^n \epsilon^{k}\phi_k) \\
      & \qquad \qquad = \epsilon^{n} \sum_{k=1}^{n} \epsilon^{k}
        \left(
          \sum_{j=k}^n \mathcal{B}^{(j)} \phi_{n+k-j} + \sum_{j=k}^n C_j \phi_{n+k-j}
        \right).
    \end{split}
  \end{equation}
Since  $\mathcal{B}^{(j)}$ are degree $(j-1)$ homogeneous functions (in cartesian coordinates) and the $\phi_n$'s decay exponentially in the sense of~\eqref{eq:exp_dec_phin}, there exists $K_n \in \R_+$ and $\epsilon_n > 0$ such that for all $0 < \epsilon \le \epsilon_n$,
  \begin{equation} \label{eq:res1}
    \norm{  (H_0 + \sum_{k=3}^n \epsilon^{k} \mathcal{B}^{(k)})(\phi_0 + \sum_{k=3}^n \epsilon^{k}\phi_k)
        - (\lambda_0 - \sum_{k=3}^n C_k \epsilon^{k} ) (\phi_0 + \sum_{k=3}^n \epsilon^{k}\phi_k)
      }_{L^2(\Rsix)}
      \le K_n \epsilon^{n+1}.
  \end{equation}
  It remains to bound $\| (\Ve - \sum_{k=3}^n \epsilon^{k} \mathcal{B}^{(k)}) \psi_\epsilon^{(n)}\|_{L^2(\Rsix)}$. From \eqref{eqn:limsighr}, \eqref{eq:exp_dec_phin} and \eqref{eqn:normltestfn}, there exists $\epsilon_n > 0$, $\alpha_n > 0$ and $M_n \in \R_+$ such that for all $0 < \epsilon \le \epsilon_n$
$$
\|e^{ \alpha_n (|\br_1|+|\br_2|)} \phi_\epsilon^{(n)}\|_{H^1(\Rsix)} \le M_n.
$$
Introducing
\begin{equation} \label{eqn:omegaepsilon}
\Omega_\epsilon=\set{(\rr_1,\rr_2)\in\Rsix}{|\rr_1|+|\rr_2| < (2\epsilon)^{-1}}.
\end{equation}
and the potentials defined by
\begin{equation} \label{eqn:eyesplitdef}
\begin{split}
 v^{(1)}_\epsilon(\rr_1,\rr_2)&:= |\rr_1 -\epsilon^{-1} \be|^{-1},\quad
 v^{(2)}_\epsilon(\rr_1,\rr_2) := |\rr_2+\epsilon^{-1}\be|^{-1}, \quad
 v^{(3)}_\epsilon(\rr_1,\rr_2):= |\rr_1 -\rr_2-\epsilon^{-1} \be|^{-1},
\end{split}
\end{equation}
we have,
\begin{align*}
\| (\Ve - \sum_{k=3}^n \epsilon^{k} \mathcal{B}^{(k)}) \phi_\epsilon^{(n)} \|_{L^2(\Rsix)} & \le \| (\Ve - \sum_{k=3}^n \epsilon^{k} \mathcal{B}^{(k)}) \phi_\epsilon^{(n)} \|_{L^2(\Omega_\epsilon)}
+  \sum_{k=3}^n \epsilon^{k} \|  \mathcal{B}^{(k)}  \phi_\epsilon^{(n)}\|_{L^2(\Omega_\epsilon^c)} \\
& \qquad + \sum_{j=1}^3 \|  v^{(j)}_\epsilon \phi_\epsilon^{(n)}\|_{L^2(\Omega_\epsilon^c)}  + \epsilon \| \phi_\epsilon^{(n)}\|_{L^2(\Omega_\epsilon^c)}.
\end{align*}
We first see that
$$
 \| \phi_\epsilon^{(n)}\|_{L^2(\Omega_\epsilon^c)} \le e^{-\alpha_n(2\epsilon)^{-1}}  \| e^{\alpha_n (|\br_1|+|\br_2|)}  \phi_\epsilon^{(n)}\|_{L^2(\Omega_\epsilon^c)} \le M_n  e^{-\alpha_n(2\epsilon)^{-1}}.
$$
Next, as $\mathcal{B}^{(k)}$ is a polynomial function, there exists a constant $B_n$ such as for all $0 < \epsilon \le \epsilon_n$,
\begin{align*}
 \sum_{k=3}^n \epsilon^{k} \| \mathcal{B}^{(k)}  \phi_\epsilon^{(n)}\|_{L^2(\Omega_\epsilon^c)}
& \le \sum_{k=3}^n \epsilon^{k} \| \mathcal{B}^{(k)} e^{-\alpha_n (|\br_1|+|\br_2|)} \|_{L^\infty(\Omega_\epsilon^c)} \| e^{\alpha_n (|\br_1|+|\br_2|)} \phi_\epsilon^{(n)}\|_{L^2(\Omega_\epsilon^c)}  \\
& \le M_n  \sum_{k=3}^n \epsilon^{k} \| \mathcal{B}^{(k)} e^{-\alpha_n (|\br_1|+|\br_2|)} \|_{L^\infty(\Omega_\epsilon^c)} \le B_n \epsilon^3 e^{-\alpha_n (2\epsilon)^{-1}}.
\end{align*}
In addition, we have
\begin{align*}
 \sum_{j=1}^3 \| v^{(j)}_\epsilon \phi_\epsilon^{(n)}\|_{L^2(\Omega_\epsilon^c)} &\le   \sum_{j=1}^3 e^{-\alpha_n (2\epsilon)^{-1}} \| v^{(j)}_\epsilon e^{\alpha_n (|\br_1|+|\br_2|)} \phi_\epsilon^{(n)}\|_{L^2(\Omega_\epsilon^c)}\\
 & \le  \sum_{j=1}^3  e^{-\alpha_n (2\epsilon)^{-1}} \| v^{(j)}_\epsilon e^{\alpha_n (|\br_1|+|\br_2|)} \phi_\epsilon^{(n)}\|_{L^2(\Rsix)} \\
 & \le 8 e^{-\alpha_n (2\epsilon)^{-1}} \| e^{\alpha_n (|\br_1|+|\br_2|)} \phi_\epsilon^{(n)}\|_{H^1(\Rsix)}= 8 e^{-\alpha_n (2\epsilon)^{-1}} M_n,
\end{align*}
where we have used the Hardy inequality in dimension 3
$$
\forall \phi \in H^1(\R^3), \quad \int_{\R^3} \frac{|\phi(\br)|^2}{|\br|^2} \, d\br \le 4 \int_{\R^3} |\nabla \phi(\br)|^2 \, d\br
$$
to show that for any $\psi \in H^1(\Rsix)$,
\begin{align*}
\|v^{(j)}_\epsilon \psi\|_{L^2(\Rsix)}^2 &= \int_{\R^3} \left( \int_{\R^3} \frac{|\psi(\br_1,\br_2)|^2}{|\br_j+(-1)^j\epsilon^{-1}\be|^2} \, d\br_j \right) \, d\br_{3-j} \\
&\le  \int_{\R^3} 4 \left( \int_{\R^3} |\nabla_{\br_j} \psi(\br_1,\br_2)|^2 \, d\br_j \right) \, d\br_{3-j} \le 4 \| \nabla_{\br_j} \psi\|_{L^2(\Rsix)}^2,
\end{align*}
for $j=1,2$, and
\begin{align*}
\|v^{(3)}_\epsilon \psi\|_{L^2(\Rsix)}^2 &= \int_{\R^3} \int_{\R^3} \frac{|\psi(\br_1,\br_2)|^2}{|\br_1-\br_2-\epsilon^{-1}\be|^2} \, d\br_1\, d\br_2
= \frac 18 \int_{\R^3} \int_{\R^3} \frac{|\psi\left( \br_1'+\br_2',\br_1'-\br_2' \right)|^2}{|\br_2'-\epsilon^{-1}\be|^2} \, d\br_1' \, d\br_2' \\
&\le \frac 12 \int_{\R^3} \int_{\R^3} |(\nabla_{\br_1}-\nabla_{\br_2})\psi\left( \br_1'+\br_2',\br_1'-\br_2' \right)|^2 \, d\br_1' \, d\br_2' \\
=& 4 \|(\nabla_{\br_1}-\nabla_{\br_2})\psi\|_{L^2(\Rsix)}^2 = 8  \|\nabla\psi\|_{L^2(\Rsix)}^2.
\end{align*}
From the multipolar expansion of $V_\epsilon$, we know that there exist $c_n \in \R_+$
\begin{equation}
\label{eqn:VEpsilonExpansion}
\left| \Ve(\mathbf{r}_1, \mathbf{r}_2) - \sum_{i=3}^n \epsilon^i
       \mathcal{B}^{(i)}(\mathbf{r}_1,\mathbf{r}_2)
\right| \le c_n K^n \epsilon^{n+1}, \quad\hbox{whenever}\; |\br_1|+|\br_2| \le K\leq (2\epsilon)^{-1}.
\end{equation}
Let us now show that \eqref{eqn:VEpsilonExpansion} implies that there exists $\widetilde c_{n} \in \R_+$ such that for all $0 \le K\leq (2\epsilon)^{-1}$,
\begin{equation}\label{eqn:locestbeexp}
\sup_{|\rr_1|+|\rr_2|\leq K} \Big|V_\epsilon(\rr_1,\rr_2)
 -\sum_{i=3}^n \epsilon^i {\cal B}^{(i)}(\rr_1,\rr_2)\Big|e^{-\alpha_n(|\rr_1|+|\rr_2|)}
\leq \widetilde c_{n} \epsilon^{n+1},
\end{equation}
This is immediate from \eqref{eqn:VEpsilonExpansion} for $K\leq 1$, taking $\widetilde c_{n}=c_n$.
Now we let $K> 1$.
Then \eqref{eqn:VEpsilonExpansion} implies
$$
\sup_{(K/2)\leq(|\rr_1|+|\rr_2|)\leq K} \Big|V_\epsilon(\rr_1,\rr_2)
 -\sum_{i=3}^n \epsilon^i {\cal B}^{(i)}(\rr_1,\rr_2)\Big|e^{-\alpha_n(|\rr_1|+|\rr_2|)}
\leq c_n e^{-\alpha_n K/2} K^n \epsilon^{n+1}.
$$
Applying this repeatedly for $2^{-j}K$ replacing $K$ until $2^{-j}K<1$ yields
\eqref{eqn:locestbeexp}, with
$$
\widetilde c_{n} = c_n \sup_{t \ge 0}  t^n e^{-\alpha_n t/2}.
$$
Applying \eqref{eqn:locestbeexp} for $K=(2\epsilon)^{-1}$ yields
$$
 \| (\Ve - \sum_{k=3}^n \epsilon^{k} \mathcal{B}^{(k)}) e^{-\alpha_n(|\rr_1|+|\rr_2|)}\|_{L^\infty(\Omega_\epsilon)}
\leq \widetilde  c_{n} \epsilon^{n+1},
$$
from which we obtain
\begin{align*}
 \| (\Ve - \sum_{k=3}^n \epsilon^{k} \mathcal{B}^{(k)}) \phi_\epsilon^{(n)}\|_{L^2(\Omega_\epsilon)} &
 \le  \| (\Ve - \sum_{k=3}^n \epsilon^{k} \mathcal{B}^{(k)}) e^{-\alpha_n(|\rr_1|+|\rr_2|)}\|_{L^\infty(\Omega_\epsilon)}   \|e^{\alpha_n(|\rr_1|+|\rr_2|)}\phi_\epsilon^{(n)}\|_{L^2(\Omega_\epsilon)} \\
 & \le \widetilde c_n M_n \epsilon^{n+1}.
\end{align*}
Finally, we get
\begin{equation}\label{eq:Ve-ME}
\| (\Ve - \sum_{k=3}^n \epsilon^{k} \mathcal{B}^{(k)}) \phi_\epsilon^{(n)}\|_{L^2(\Rsix)} \le
 \widetilde c_n M_n \epsilon^{n+1} + (8+\epsilon+ B_n \epsilon^3) M_n e^{-\alpha_n (2\epsilon)^{-1}},
 \end{equation}
 Together with \eqref{eq:res1}, this proves that there exists $c''_n \in \R_+$ such that for all $0 < \epsilon \le \epsilon_n$,
 \begin{equation}\label{eq:bound_residual_L2}
 \|r_\epsilon^{(n)}\|_{L^2(\Rsix)} = \| H_\epsilon \psi^{(n)}_\epsilon - \mu^{(n)}_\epsilon \psi^{(n)}_\epsilon\|_{L^2(\Rsix)} \le c''_n  \epsilon^{n+1}.
\end{equation}
It follows from \eqref{eq:lambda_eps}-\eqref{eq:lambda_eps2} that for $n \ge 3$ fixed, there exists $C \in \R_+$ such that for all $0 < \epsilon \le \epsilon_n$,
  \begin{equation}\label{eq:Wigner2n+1}
  |\lambda_\epsilon - \mu^{(n)}_\epsilon |  \le C \epsilon^{2(n+1)} \quad \mbox{and} \quad
  \|  \psi_\epsilon - \pe^{(n)} \|_{L^2(\Rsix)}\le C \epsilon^{n+1}.
  \end{equation}
 Then,
  \begin{align*}
      \mu_\epsilon^{(n)} - \lambda_\epsilon^{(n)}
      &= \inner{\pe^{(n)}}{H_\epsilon \pe^{(n)} - \lambda_\epsilon^{(n)} \pe^{(n)}}  \\
      &= \me^{(n)}  \bigg\langle \pe^{(n)},
        \Te \Big[
          (\Ve  - \sum_{k=3}^n \epsilon^{k} \mathcal{B}^{(k)})
          \phi_\epsilon^{(n)}
  +
  \epsilon^{n} \sum_{k=1}^{n} \epsilon^{k}
    \bigg(
      \sum_{j=k}^n \mathcal{B}^{(j)} \phi_{n+k-j} + \sum_{j=k}^n C_j \phi_{n+k-j}
    \bigg)
        \Big]
\bigg\rangle
  \end{align*}
  so that there exists a constant $c_n$ such that for $0 < \epsilon \le  \epsilon_n$,
  \begin{align*}
      \left|\mu_\epsilon^{(n)} - \lambda_\epsilon^{(n)} \right|
      &\leq 2  \biggnorm{(\Ve  - \sum_{k=3}^n \epsilon^{k} \mathcal{B}^{(k)})
        \phi_\epsilon^{(n)}
        +
        \epsilon^{n} \sum_{k=1}^{n} \epsilon^{k}
          \bigg(
            \sum_{j=k}^n \mathcal{B}^{(j)} \phi_{n+k-j} + \sum_{j=k}^n C_j \phi_{n+k-j}
          \bigg)
        }_{L^2(\Rsix)} \\
      &\leq c_n \epsilon^{n+1}.
  \end{align*}
The error bounds on the eigenvalue errors in \eqref{eqn:finalestionsh} follow from \eqref{eq:Wigner2n+1} and the above inequality.

\medskip

Finally, the error $\xi_\epsilon^{(n)} = \psi_\epsilon - \psi_\epsilon^{(n)}$, as defined in~\cite{lrsBIBhi},
satisfies
\[
  H_\epsilon \xi_\epsilon^{(n)} = \lambda_\epsilon \psi_\epsilon - H_\epsilon \psi_\epsilon^{(n)} = \lambda_\epsilon - \mu_\epsilon^{(n)} - r_\epsilon^{(n)} =: \eta_\epsilon^{(n)}.
\]
From \eqref{eq:bound_residual_L2}-\eqref{eq:Wigner2n+1}, there exists a constant $c_n \in \R_+$ such that for all $0 < \epsilon \le \epsilon_n$,
$$
\| \xi_\epsilon^{(n)} \|_{L^2(\Rsix)} \le c_n \epsilon^{n+1} \quad \mbox{and} \quad
\| \eta_\epsilon^{(n)} \|_{L^2(\Rsix)} \le c_n \epsilon^{n+1}.
$$
In addition,
\begin{equation}\label{eq:last_bounds}
  - \frac 12 \Delta \xi_\epsilon^{(n)} = - W_\epsilon \xi_\epsilon^{(n)} + \eta_\epsilon^{(n)},
\end{equation}
where
$$
W_\epsilon(\br_1,\br_2):= -\frac{1}{|{\bR_1}-(2\eps)^{-1} \be|}- \frac{1}{|{\bR_2}-(2\eps)^{-1}  \be|}
-\frac{1}{|{\bR_1}+(2\eps)^{-1}  \be|}- \frac{1}{|{\bR_2}+(2\eps)^{-1}  \be|}
+\frac{1}{|{\bR_1}-{\bR_2}|}+\eps.
$$
Proceeding as in~\cite[Section 2.4]{lrsBIBhi}, we use the Hardy inequality in $\R^3$ and the Cauchy-Schwarz inequality to obtain that
\begin{align*}
\frac 12 \|\nabla \xi_\epsilon^{(n)}\|_{L^2(\R^3 \times \R^3)}^2 &= \langle  \xi_\epsilon^{(n)}, - W_\epsilon \xi_\epsilon^{(n)} + \eta_\epsilon^{(n)} \rangle \\
&\le (10 \|\nabla \xi_\epsilon^{(n)}\|_{L^2(\R^3 \times \R^3)} + \epsilon  \| \xi_\epsilon^{(n)}\|_{L^2(\R^3 \times \R^3)}+ \| \eta_\epsilon^{(n)}\|_{L^2(\R^3 \times \R^3)})  \| \xi_\epsilon^{(n)}\|_{L^2(\R^3 \times \R^3)},
\\
 \frac 12 \|\Delta  \xi_\epsilon^{(n)}\|_{L^2(\R^3 \times \R^3)} &= \| - W_\epsilon \xi_\epsilon^{(n)} + \eta_\epsilon^{(n)}\|_{L^2(\R^3 \times \R^3)} \\
&\le 10  \|\nabla \xi_\epsilon^{(n)}\|_{L^2(\R^3 \times \R^3)} +  \epsilon \|\xi_\epsilon^{(n)}\|_{L^2(\R^3 \times \R^3)} +  \|\eta_\epsilon^{(n)}\|_{L^2(\R^3 \times \R^3)}.
\end{align*}
It follows from \eqref{eq:last_bounds} that there exists a constant $c_n \in \R_+$ such that for all $0 < \epsilon \le \epsilon_n$, $\| \Delta \xi_\epsilon^{(n)} \|_{L^2(\Rsix)} \le c_n \epsilon^{n+1}$, and thus $\| \xi_\epsilon^{(n)} \|_{H^2(\Rsix)} \le c_n \epsilon^{n+1}$.

\section*{Acknowledgements}

EC would like to thank the Department of Mathematics at the
University of Chicago for generous support.
This work has received funding from the European Research Council (ERC) under the
European Union's Horizon 2020 research and innovation programme
(grant agreement No 810367).
The Labex B\'ezout is acknowledged for funding the PhD thesis of Rafa\"el Coyaud.
We are grateful to Virginie Ehrlacher for useful comments on the manuscript.

\appendix

\section{Appendix}

\subsection{Multipolar expansion of $V_\eps$}
\label{sect:RSH}

We start from the well-known multipolar expansion of $\frac{1}{|\mathbf{r}-R \mathbf{e}|}$ in terms
of Legendre polynomials
\begin{equation}
 \frac{1}{| \mathbf{r}-R \mathbf{e}|} = \frac{1}{R}
\left(\sum_{k = 0}^\infty P_k\Big(\frac{\mathbf{r}\cdot\mathbf{e}}{|\mathbf{r}|}\Big)
    \left( \frac{|\mathbf{r}|}{R} \right)^k \right), \quad \text{for} \, |\mathbf{r}| < R, \label{eq:expLeg}
\end{equation}
which is a straightforward consequence of the definition of Legendre polynomials via their generating function~\cite{wan2013generating}
\begin{equation}\label{eqn:genfunvalid}
\forall -1 \le x \le 1, \quad \big(1-2xt+t^2\big)^{-1/2}=\sum_{k = 0}^\infty P_k(x)t^k,
\end{equation}
taking
$$
-1 \le x=\frac{\mathbf{r}\cdot\mathbf{e}}{|\mathbf{r}|} \le 1,\qquad
t=\frac{|\mathbf{r}|}{R}.
$$
Since the Legendre polynomials are at most 1 in magnitude on the interval $[-1,1]$, the sum in
\eqref{eqn:genfunvalid} converges absolutely for all $|t|<1$, and
$$
\Big|\sum_{k = n}^\infty P_k(x)t^k\Big|\leq \sum_{k = n}^\infty t^k
=\frac{t^n}{1-t}\leq 2 t^n,
\quad\hbox{for all} \; |t| \le \half.
$$
Consequently,
\begin{equation}\label{eqn:isthisvalid}
\bigg| \frac{1}{| \mathbf{r}-R \mathbf{e}|} -  \frac{1}{R}
\left(\sum_{k = 0}^{n-1} P_k\Big(\frac{\mathbf{r}\cdot\mathbf{e}}{|\mathbf{r}|}\Big)
                  \left( \frac{|\mathbf{r}|}{R} \right)^k \right)\bigg|
\leq 2 \frac{|\mathbf{r}|^{n}}{R^{n+1}},
\quad\hbox{for all} \; |\mathbf{r}| \le R/2.
\end{equation}
Recalling that $P_0(x)=1$, $P_1 (x)  = x$ and
$$
V_\epsilon({\br_1},{\br_2})=- \frac{1}{|{\br_1}-\epsilon^{-1}\be|}- \frac{1}{|{\br_2}+\epsilon^{-1}\be|}+\frac{1}{|{\br_1}-{\br_2}-\epsilon^{-1}\be|}+\epsilon.
$$
with $\epsilon=R^{-1}$, we deduce from \eqref{eqn:isthisvalid} that
\begin{equation}\label{eq:boundVeps}
\left| \Ve(\mathbf{r}_1, \mathbf{r}_2) - \sum_{k=3}^n \epsilon^k
       \mathcal{B}^{(k)}(\mathbf{r}_1,\mathbf{r}_2)
\right| \le 6 K^n \epsilon^{n+1}, \quad\hbox{whenever}\; |\br_1|+|\br_2| \le K\leq (2\epsilon)^{-1},
\end{equation}
where the polynomial functions $\mathcal{B}^{(k)}$ are given by
$$
\mathcal{B}^{(k)}(\mathbf{r}_1,\mathbf{r}_2):= P_{k-1}\left(\frac{(\br_1-\br_2)\cdot\be}{|\br_1-\br_2|} \ \right) |\br_1-\br_2|^{k-1}
- P_{k-1}\left(\frac{\br_1\cdot\be}{|\br_1|} \ \right) |\br_1|^{k-1} - P_{k-1}\left(-\frac{\br_2\cdot\be}{|\br_2|} \ \right) |\br_2|^{k-1}.
$$
This proves \eqref{eqn:VEpsilonExpansion}. To derive the expression \eqref{eqnBic} for the $\mathcal{B}^{(k)}$'s, we first use the identities
\begin{align*}
&P_l(\sigma \cdot \sigma') = \left(\frac{4 \pi}{2l+1}\right) \sum_{m = -l}^l (-1)^m Y_l^m(\sigma) Y_l^m(\sigma'), \qquad
      \sqrt{\frac{4 \pi}{2l+1}} Y_l^{m}(\be) = \delta_{m,0},
\end{align*}
valid for all $l \in \N$, $- l \le m \le l$, $\sigma,\sigma' \in {\mathbb S}^2$ (recall that $\be$ is the unit vector of the $z$-axis), and get
$$
\mathcal{B}^{(k)}(\mathbf{r}_1,\mathbf{r}_2):= \sqrt{\frac{4 \pi}{2k-1}}  \left(Y_{k-1}^0\left(\frac{\br_1-\br_2}{|\br_1-\br_2|} \ \right) |\br_1-\br_2|^{k-1}
- Y_{k-1}^0\left(\frac{\br_1}{|\br_1|} \ \right) |\br_1|^{k-1} - Y_{k-1}^0\left(-\frac{\br_2}{|\br_2|} \ \right) |\br_2|^{k-1}\right).
$$
We next use the addition formula~\cite{ToughStone} stating that for $l \in \N$, $\br_1, \br_2 \in \R^3$,
\begin{align*}
 \sqrt{\frac{4\pi}{2l+1}}Y_l^0\left(\frac{\br_1-\br_2}{|\br_1-\br_2|}\right) |\mathbf{r}_1 - \mathbf{r}_2|^l
= \sum_{l_1+l_2=l} \sum_{m=-\min(l_1,l_2)}^{\min(l_1,l_2)} G_{\rm c}(l_1, l_2, m)
     r_1^{l_1} Y_{l_1}^{m}\left(\frac{\br_1}{|\br_1|}\right) r_2^{l_2} Y_{l_2}^{-m}\left(\frac{\br_2}{|\br_2|}\right),
     \end{align*}
 where
\begin{align*}
 G_{\rm c}(l_1, l_2, m) &= (-1)^{l_2} \frac{4 \pi}{((2l_1 + 1)(2l_2 + 1))^{1/2}} \begin{pmatrix} l_1 + l_2  \\ l_1 + m \end{pmatrix}^{1/2}
 \begin{pmatrix} l_1+l_2 \\ l_1 - m\end{pmatrix}^{1/2}, \\
 &= (-1)^{l_2} \frac{4 \pi (l_1 + l_2)!}{((2l_1 + 1)(2l_2 + 1)(l_1 + m)! (l_2 +m)! (l_1 - m)! (l_2 - m)!)^{1/2}}.
\end{align*}
As for $G_{\rm c}(l, 0, 0)=G_{\rm c}(0, l, 0)=\frac{4\pi}{(2l+1)^{1/2}}$ and $Y_0^0=\frac{1}{\sqrt{4\pi}}$, we finally obtain  \eqref{eqnBic}.

\section{Wigner $(2n+1)$ rule}

Using the notation in \eqref{eqn:normltestfn}, we consider the Rayleigh quotients
$$
\mu_\epsilon^{(n)} = \langle \psi_\epsilon^{(n)}, H_\epsilon \psi_\epsilon^{(n)} \rangle \quad \mbox{and} \quad
\widetilde \mu_\epsilon^{(n)} = \frac{\inner{\phi_\epsilon^{(n)}}{\left(   H_0 + \sum_{i=3}^{2n+1} \epsilon^{i} \mathcal{B}^{(i)} \right) \phi_\epsilon^{(n)}}}{\|\phi_\epsilon^{(n)}\|_{L^2(\Rsix)}^2}
$$
(recall that $\|\psi_\epsilon^{(n)}\|_{L^2(\Rsix)}=1$). Let
$$
\eta_\epsilon^{(n)}:=(H_0+V_\epsilon)\phi_\epsilon^{(n)}, \quad  \upsilon_\epsilon^{(n)}:= (V_\epsilon-\sum_{i=3}^{2n+1} \epsilon^{i} \mathcal{B}^{(i)})\phi_\epsilon^{(n)}\quad \mbox{and} \quad \xi_\epsilon^{(n)}:=(\Te^*\Te-1)\phi_\epsilon^{(n)}.
$$
We deduce from the boundedness of the $\phi_n$'s in $H^2(\Rsix)$, the Hardy inequality in $\R^3$, and the estimates \eqref{eq:exp_dec_phin} and \eqref{eqn:VEpsilonExpansion}, that there exist $C \in \R_+$, $\beta_n > 0$ and $\epsilon_n > 0$ such that for all $0 \le  \epsilon \le \epsilon_n$
$$
\|\phi_\epsilon^{(n)}\|_{L^2(\Rsix)} \le 2, \quad \|\eta_\epsilon^{(n)}\|_{L^2(\Rsix)} \le C, \quad \|\upsilon_\epsilon^{(n)}\|_{L^2(\Rsix)} \le C \epsilon^{2n+2}, \quad
\|\xi_\epsilon^{(n)}\|_{L^2(\Rsix)} \le C e^{-\beta_n\epsilon},
$$
proceeding as in the proof of \eqref{eq:Ve-ME} to establish the third inequality. It follows from \eqref{eqn:finalestionsh} and the above bounds that
\begin{align*}
  \widetilde \mu_\epsilon^{(n)}  &=\lambda_\epsilon +  \widetilde \mu_\epsilon^{(n)} - \mu_\epsilon^{(n)} + O(\epsilon^{2n+2}) \\
&=\lambda_\epsilon  + \frac{\inner{\phi_\epsilon^{(n)}}{\left(   H_0 + \sum_{i=3}^{2n+1} \epsilon^{i} \mathcal{B}^{(i)} \right) \phi_\epsilon^{(n)}}}{\|\phi_\epsilon^{(n)}\|_{L^2(\Rsix)}^2}- \frac{\inner{\Te^*\Te\phi_\epsilon^{(n)}}{(H_0+V_\epsilon)\phi_\epsilon^{(n)}}}{\inner{\Te^*\Te\phi_\epsilon^{(n)}}{\phi_\epsilon^{(n)}}} + O(\epsilon^{2n+2}) \\
&= \lambda_\epsilon -  \frac{\inner{\phi_\epsilon^{(n)}}{\upsilon_\epsilon^{(n)}}}{\inner{\phi_\epsilon^{(n)}}{\phi_\epsilon^{(n)}}} + \frac{\inner{\xi_\epsilon^{(n)}}{\eta_\epsilon^{(n)}}-\inner{\xi_\epsilon^{(n)}}{\phi_\epsilon^{(n)}}\inner{\phi_\epsilon^{(n)}}{\eta_\epsilon^{(n)}}}{\inner{\phi_\epsilon^{(n)}}{\phi_\epsilon^{(n)}}+\inner{\xi_\epsilon^{(n)}}{\phi_\epsilon^{(n)}}}+ O(\epsilon^{2n+2}) \\
&= \lambda_\epsilon + O(\epsilon^{2n+2}) = -1 - \sum_{k=6}^{2n+1}  C_k \epsilon^k + O(\epsilon^{2n+2}).
\end{align*}
Thus, the coefficients $C_k$ for $k \le 2n+1$ can be computed from the Taylor expansion of $\widetilde \mu_\epsilon^{(n)}$ up to order $(2n+1)$, which only involves the $\phi_k$'s for $k \le n$, and the $\mathcal{B}^{(k)}$'s for $k \le (2n+1)$. To obtain a computable expression of the coefficients $C_{2n}$ and $C_{2n+1}$, we first use Equation \eqref{eqn:PsiNEqnAnyOrderH0}, which can be rewritten as
\begin{equation}
  H_0 \phi_{k} + \sum_{j=3}^k \mathcal{B}^{(j)}\phi_{k-j}
  = - C_0 \phi_{k} - \sum_{j=6}^k C_j \phi_{k-j} = - \sum_{j=0}^k C_j \phi_{k-j},
\end{equation}
with $C_0 =1$ and $C_i = 0$ for $i = 1, ..., 5$, to get that for all $n \ge 1$
\begin{align}
\nu_\epsilon^{(n)}:&=\inner{\phi_\epsilon^{(n)}}{\left(   H_0 + \sum_{i=3}^{2n+1} \epsilon^{i} \mathcal{B}^{(i)} \right) \phi_\epsilon^{(n)}}  \nonumber \\
&=  - \sum_{l=0}^n \epsilon^{l} \sum_{i=0}^l \inner{\phi_i}{\sum_{j=0}^{l-i} C_j \phi_{l-i-j}}
    + \epsilon^n \sum_{l = 1}^n \epsilon^l \left( - \sum_{i=l}^n \inner{\phi_i}{\sum_{j=0}^{n+l-i} C_j \phi_{n+l-i-j}}
    + \sum_{i=0}^{l-1} \inner{\phi_i}{\sum_{j=0}^n {\cal B}^{(n + l-i - j)} \phi_{j}} \right)  \nonumber \\
    &\quad + \epsilon^{2n+1}   \sum_{i=0}^n \inner{\phi_i}{\sum_{j=0}^n {\cal B}^{(2n + 1 - i - j)}\phi_{j}} + O(\epsilon^{2n+2}). \label{eqn:tildepsien}
\end{align}
In addition, we have
$$
    \| \phi_\epsilon^{(n)} \|^2
    = \inner{\sum_{i=0}^n \epsilon^{i} \phi_{i}}{\sum_{i=0}^n \epsilon^{i} \phi_{j}}
    = 1 + \sum_{k=1}^n \epsilon^{k} \sum_{i=0}^{k} \inner{\phi_{i}}{\phi_{k-i}}
    + \epsilon^{n} \sum_{k=1}^n \epsilon^{k} \sum_{i=k}^n \inner{\phi_{i}}{\phi_{n+k-i}},
$$
and, using the relation $\sum_{i=0}^{k} \inner{\phi_{i}}{\phi_{k-i}} = 0$ derived from \eqref{eqn:PsiNEqnAnyOrderPsi0}, we get
\begin{equation}
  \| \phi_\epsilon^{(n)} \|^2
  = 1
  + \epsilon^{n} \sum_{k=1}^n \epsilon^{k} \sum_{i=k}^n \inner{\phi_{i}}{\phi_{n+k-i}}. \label{eqn:tildepsinorm}
\end{equation}
Il follows from \eqref{eqn:tildepsien}-\eqref{eqn:tildepsinorm} that
$$
\widetilde \mu_\epsilon^{(n)} = \frac{\nu_\epsilon^{(n)}}{\| \phi_\epsilon^{(n)} \|^2} =- \sum_{k=0}^{2n+1}  C_k \epsilon^k + O(\epsilon^{2n+2}),
$$
with
$$
   C_{2n} = \inner{\phi_n}{\sum_{j=0}^{n} C_j \phi_{n-j}}
  - \sum_{i=0}^{n-1} \inner{\phi_i}{\sum_{j=0}^n {\cal B}^{(2n -i - j)} \phi_{j}}
  - \sum_{k=1}^n \left(\sum_{i=k}^n \inner{\phi_{i}}{\phi_{n+k-i}}\right) \sum_{i=0}^{n-k} \inner{\phi_i}{\sum_{j=0}^{n-k-i} C_j \phi_{n-k-i-j}},
$$
and
$$
   C_{2n+1} = - \sum_{i=0}^n \inner{\phi_i}{\sum_{j=0}^n {\cal B}^{(2n + 1 - i - j)}\phi_{j}}
  - \sum_{k=1}^n \left(\sum_{i=k}^n \inner{\phi_{i}}{\phi_{n+k-i}}\right) \sum_{i=0}^{n+1-k} \inner{\phi_i}{\sum_{j=0}^{n+1-k-i} C_j \phi_{n+1-k-i-j}}.
$$

\section{Computation of the integrals $S_n$ in \eqref{eq:Sn}}
\label{sec:Sn}

Recall that
$$
S_n = \int_{0}^{+ \infty} r^3 e^{-r} \varphi_{n,1}(r) \dd r,
$$
where
$$
 \varphi_{n,1} =  \sqrt{\left(\frac{2}{n}\right)^3 \frac{(n-2)!}{2n(n+1)!}}
  \left(\frac{2r}{n}\right) L_{n-2}^{(3)}\left(\frac{2r}{n}\right) e^{-r/n},
$$
where the associated Laguerre polynomials of the second kind $L_n^{(m)}$, $n,m \in \N$, satisfy the following properties~\cite[Section 22.2]{abramowitz1970handbook}:
\begin{itemize}
\item for all $k,k',m \in \N$,
 \begin{equation} \label{eq:ortho_Lnm}
    \int_0^\infty x^m L_{k}^{(m)}(x) L_{k'}^{(m)}(x) e^{-x} \, \dd x = \frac{(k+m)!}{k!}\delta_{k,k'};
  \end{equation}
\item for all $\gamma \in \C$ such that $\Re(\gamma) > -\frac 12$, and $m \in \N$,
\begin{equation} \label{eq:dev_exp_Lnm}
    e^{-\gamma  x} = \sum_{k=0}^{+\infty} \frac{\gamma^k}{(1 + \gamma)^{k + m + 1}} L_k^{(m)} (x);
  \end{equation}
  \item for all $k,m \in \N$,
   \begin{equation} \label{eq:xLmn}
    x L_k^{(m+1)}(x) = (k + m+1)L_k^{(m)}(x) - (k+1)L_{k+1}^{(m)}(x).
  \end{equation}
\end{itemize}
By a change of variable, we obtain
$$
S_n = \frac{n^2}8  \sqrt{\frac{(n-2)!}{(n+1)!}} I_n \quad \mbox{with} \quad I_n:=\int_0^{+\infty} x^4 L_{n-2}^{(3)} e^{-\frac{n-1}2 x} e^{-x} \, dx.
$$
Applying \eqref{eq:dev_exp_Lnm} for $\gamma=\frac{n-1}2$ and $m=4$, then \eqref{eq:xLmn} for $m=3$, and finally \eqref{eq:ortho_Lnm} for $m=3$, we obtain
\begin{align*}
I_n&= \int_0^{+\infty} x^4 L_{n-2}^{(3)} \left(  \sum_{k=0}^{+\infty} \frac{2^{5}(n-1)^k}{(n+1)^{k + 5}} L_k^{(4)} (x) \right) e^{-x} \, dx \\
&=  \int_0^{+\infty} x^3 L_{n-2}^{(3)} \left(  \sum_{k=0}^{+\infty} \frac{2^{5}(n-1)^k}{(n+1)^{k + 5}}
\left( (k +4)L_k^{(3)}(x) - (k+1)L_{k+1}^{(3)}(x) \right) \right) e^{-x} \, dx \\
&= \sum_{k=0}^{+\infty}  \frac{2^{5}(n-1)^k}{(n+1)^{k + 5}} \left( (k+4) \frac{(k+3)!}{k!}\delta_{k,n-2} - (k+1) \frac{(k+4)!}{(k+1)!}\delta_{k+1,n-2} \right) \\
&= \frac{2^{5}(n-1)^{n-2}}{(n+1)^{n + 3}}   (n+2) \frac{(n+1)!}{(n-2)!} -  \frac{2^{5}(n-1)^{n-3}}{(n+1)^{n + 2}} (n-2) \frac{(n+1)!}{(n-2)!} \\
&= \frac{2^{6}n(n-1)^{n-3}}{(n+1)^{n + 3}}  \frac{(n+1)!}{(n-2)!} .
\end{align*}
Finally, we get
$$
S_n = 8 n^3 \frac{(n-1)^{n-3}}{(n+1)^{n + 3}}   \sqrt{\frac{(n+1)!}{(n-2)!}} .
$$

\bibliography{newnext}

\begin{thebibliography}{10}

\bibitem{abramowitz1970handbook}
M.~Abramowitz and I.~A. Stegun.
\newblock {\em Handbook of Mathematical Functions with Formulas, Graphs, and
  Mathematical Tables}, volume~55.
\newblock US Government Printing Office, 1970.

\bibitem{alves2010van}
N.~Alves~de Lima.
\newblock Van der {W}aals density functional from multipole dispersion
  interactions.
\newblock {\em The Journal of chemical physics}, 132(1):014110, 2010.

\bibitem{ref:AnapolitanosvanderWaals}
I.~Anapolitanos.
\newblock {\em On van der {{W}aals} forces}.
\newblock PhD thesis, University of Toronto, 2011.

\bibitem{anapolitanos2020van}
I.~Anapolitanos, M.~Badalyan, and D.~Hundertmark.
\newblock On the van der {W}aals interaction between a molecule and a
  half-infinite plate.
\newblock {\em arXiv:2004.04771}, 2020.

\bibitem{anapolitanos2019differentiability}
I.~Anapolitanos, M.~Lewin, and M.~Roth.
\newblock Differentiability of the van der {W}aals interaction between two
  atoms.
\newblock {\em arXiv:1902.06683}, 2019.

\bibitem{anapolitanos2017long}
I.~Anapolitanos and I.~M. Sigal.
\newblock Long-range behavior of the van der {W}aals force.
\newblock {\em Communications on Pure and Applied Mathematics},
  70(9):1633--1671, 2017.

\bibitem{baldwin2007energetics}
R.~L. Baldwin.
\newblock Energetics of protein folding.
\newblock {\em Journal of Molecular Biology}, 371(2):283--301, 2007.

\bibitem{barbaroux2019van}
J.-M. Barbaroux, M.~Hartig, D.~Hundertmark, and S.~Vugalter.
\newblock Van der {W}aals-{L}ondon interaction of atoms with
  pseudo-relativistic kinetic energy.
\newblock {\em arXiv:1902.09222}, 2019.

\bibitem{brinki968}
D.~Brink and G.~Satchler.
\newblock {\em Angular momentum}.
\newblock Oxford University Press, 1968.

\bibitem{lrsBIBhi}
E.~Canc{\` e}s and L.~R. Scott.
\newblock {Van der {W}aals interactions between two hydrogen atoms: The
  {S}later-{K}irkwood} method revisited.
\newblock {\em SIAM Journal on Mathematical Analysis}, 50(1):381--410, 2018.

\bibitem{cebim2009high}
M.~A. Cebim, M.~Masili, and J.~J. De~Groote.
\newblock High precision calculation of multipolar dynamic polarizabilities and
  two- and three-body dispersion coefficients of atomic hydrogen.
\newblock {\em Few-Body Systems}, 46(2):75--85, 2009.

\bibitem{Choy2000}
T.~C. Choy.
\newblock {Van der {W}aals interaction of the hydrogen molecule: An exact
  implicit energy density functional}.
\newblock {\em Phys. Rev. A}, 62(1):1--10, 2000.

\bibitem{forestell2015importance}
L.~Forestell and F.~Marsiglio.
\newblock The importance of basis states: an example using the hydrogen basis.
\newblock {\em Canadian Journal of Physics}, 93(10):1009--1014, 2015.

\bibitem{geim2013van}
A.~K. Geim and I.~V. Grigorieva.
\newblock Van der {W}aals heterostructures.
\newblock {\em Nature}, 499(7459):419--425, 2013.

\bibitem{helgaker2013molecular}
T.~Helgaker, P.~Jorgensen, and J.~Olsen.
\newblock {\em Molecular electronic-structure theory}.
\newblock John Wiley \& Sons, 2013.

\bibitem{Kato}
T.~Kato.
\newblock On the upper and lower bounds of eigenvalues.
\newblock {\em Journal of the Physical Society of Japan}, 4(4-6):334--339,
  1949.

\bibitem{ref:GeroStudentvanderWaals}
M.~J. Koppen et~al.
\newblock {\em Van der {W}aals forces in the context of non-relativistic
  quantum electrodynamics}.
\newblock PhD thesis, PhD thesis, Technische Universit{\"a}t M{\"u}nchen,
  Fukult{\"a}t f{\"u}r Mathematik, 2010.

\bibitem{lieb1986universal}
E.~H. Lieb and W.~E. Thirring.
\newblock Universal nature of van der {W}aals forces for coulomb systems.
\newblock {\em Physical Review A}, 34(1):40, 1986.

\bibitem{london1937general}
F.~London.
\newblock The general theory of molecular forces.
\newblock {\em Transactions of the Faraday Society}, 33:8b--26, 1937.

\bibitem{mitroy2005higher}
J.~Mitroy and M.~W. Bromley.
\newblock Higher-order {C}n dispersion coefficients for hydrogen.
\newblock {\em Physical Review A}, 71(3):032709, 2005.

\bibitem{ovsiannikov2005regular}
V.~D. Ovsiannikov and J.~Mitroy.
\newblock Regular approach for generating van der {W}aals {C}s coefficients to
  arbitrary orders.
\newblock {\em Journal of Physics B: Atomic, Molecular and Optical Physics},
  39(1):159, 2005.

\bibitem{pauling1935van}
L.~Pauling and J.~Y. Beach.
\newblock The van der {Waals} interaction of hydrogen atoms.
\newblock {\em Physical Review}, 47(9):686--692, May 1935.

\bibitem{reed1979methods}
M.~Reed and B.~Simon.
\newblock {\em Methods of Modern Mathematical Physics. Vol. 4. Operator
  Analysis}.
\newblock Academic Press, New York, 1979.

\bibitem{roth1996van}
C.~M. Roth, B.~L. Neal, and A.~M. Lenhoff.
\newblock Van der {W}aals interactions involving proteins.
\newblock {\em Biophysical Journal}, 70(2):977--987, 1996.

\bibitem{lrsBIBih}
L.~R. Scott.
\newblock {\em Introduction to Automated Modeling with {FEniCS}}.
\newblock Computational Modeling Initiative, 2018.

\bibitem{ref:SlaKervdWgas}
J.~C. Slater and J.~G. Kirkwood.
\newblock The van der {W}aals forces in gases.
\newblock {\em Physical Review}, 37(6):682, 1931.

\bibitem{thakkar1988higher}
A.~J. Thakkar.
\newblock Higher dispersion coefficients: Accurate values for hydrogen atoms
  and simple estimates for other systems.
\newblock {\em The Journal of Chemical Physics}, 89(4):2092--2098, 1988.

\bibitem{ToughStone}
R.~J.~A. Tough and A.~J. Stone.
\newblock Properties of the regular and irregular solid harmonics.
\newblock {\em Journal of Physics A: Mathematical and General},
  10(8):1261--1269, aug 1977.

\bibitem{van2004continuity}
J.~D. Van~der Waals.
\newblock {\em On the continuity of the gaseous and liquid states. Edited and
  with an Introduction by J. S. Rowlinson}.
\newblock Dover Phoenix Editions, 1988.

\bibitem{wan2013generating}
J.~Wan and W.~Zudilin.
\newblock Generating functions of {L}egendre polynomials: a tribute to {F}red
  {B}rafman.
\newblock {\em Journal of Approximation Theory}, 170:198--213, 2013.

\bibitem{wangunkowname}
S.~C. Wang.
\newblock The mutual influence between the two atoms of hydrogen.
\newblock {\em Physikalische Zeitschrift}, 28:663, 1927.

\bibitem{yan1999third}
Z.-C. Yan and A.~Dalgarno.
\newblock Third-order dispersion coefficients for {H}(1s)-{H}(1s) system.
\newblock {\em Molecular Physics}, 96(5):863--865, 1999.

\bibitem{zhislin1960discussion}
G.~M. Zhislin.
\newblock Discussion of the spectrum of {S}chr{\"o}dinger operators for systems
  of many particles.
\newblock {\em Trudy Moskovskogo Matematiceskogo Obscestva}, 9:81--120, 1960.

\end{thebibliography}
\bibliographystyle{abbrv}


\end{document}